\let\left\mleft
\let\right\mright
\numberwithin{figure}{section}
\theoremstyle{acmplain}
\newtheorem{theorem}{Theorem}[section]
\newtheorem{corollary}[theorem]{Corollary}
\newtheorem*{theorem*}{Theorem}
\newtheorem{lemma}[theorem]{Lemma}
\newtheorem{claim}[theorem]{Claim}
\theoremstyle{acmdefinition}
\newtheorem{definition}[theorem]{Definition}
\newcommand{\be}{\begin{equation}}
\newcommand{\ee}{\end{equation}}
\newcommand{\beq}{\begin{equation*}}
\newcommand{\eeq}{\end{equation*}}
\newcommand{\R}{\mathbb{R}}
\newcommand{\SW}{\textup{\textrm{SW}}}
\newcommand{\Rev}{\textup{\textsc{Rev}}}
\newcommand{\eps}{\varepsilon}
\newcommand{\ind}[1]{\mathbb{I}\left[\vphantom{\sum}#1\right]}
\newcommand{\onevec}[1][1]{\mathbf{#1}}
\newcommand{\AutoAdjust}[3]{\mathchoice{ \left #1 #2  \right #3}{#1 #2 #3}{#1 #2 #3}{#1 #2 #3} }
\newcommand{\Xcomment}[1]{{}}
\newcommand{\InBrackets}[1]{\AutoAdjust{[}{#1}{]}}%
\newcommand{\Ex}[2][]{\operatorname{\mathbf E}_{#1}\InBrackets{#2}}
\newcommand{\ExNobody}[1][]{\operatorname{\mathbf E}_{#1}}
\newcommand{\Exlong}[2][]{\operatornamewithlimits{\mathbf E}\limits_{#1}\InBrackets{#2}}
\newcommand{\ExlongNobody}[1][]{\operatornamewithlimits{\mathbf E}\limits_{#1}}
\newcommand{\Prx}[2][]{\operatorname{\mathbf{Pr}}_{#1}\InBrackets{#2}}
\newcommand{\Prlong}[2][]{\operatornamewithlimits{\mathbf{Pr}}\limits_{#1}\InBrackets{#2}}
\newcommand{\eqdef}{\overset{\mathrm{def}}{=\mathrel{\mkern-3mu}=}}
\newcommand{\vect}[1]{\ensuremath{\mathbf{#1}}}
\newcommand\restr[2]{{%
  \left.\kern-\nulldelimiterspace %
  #1 %
  \vphantom{\big|} %
  \right|_{#2} %
  }}
\newcommand{\weighti}[1][i]{w_{{#1}}}
\newcommand{\weights}{\vect{w}}
\newcommand{\weightdiff}{(\weighti[\l] - \weighti[\l + 1])}
\newcommand{\auction}{\mathcal{A}}
\newcommand{\opt}{\textsf{OPT}}
\newcommand{\N}{{\mathbb N}}
\newcommand{\dist}{\mathbf{D}}
\newcommand{\dists}{\vect{\dist}}
\newcommand{\disti}[1][i]{{D_{#1}}}
\newcommand{\Dists}{\mathbf{D}}
\newcommand{\val}{v}
\newcommand{\vals}{\vect{\val}}
\newcommand{\vali}[1][i]{{\val_{#1}}}
\newcommand{\valith}[1][i]{{\val_{(#1)}}}
\newcommand{\util}{u}
\newcommand{\utili}[1][i]{\util_{#1}}
\newcommand{\pay}{p}
\newcommand{\pays}{\vect{\pay}}
\newcommand{\payi}[1][i]{{\pay_{#1}}}
\newcommand{\alloc}{a}
\newcommand{\allocs}{\vect{\alloc}}
\newcommand{\alloci}[1][i]{{\alloc_{#1}}}
\newcommand{\bid}{b}
\newcommand{\bids}{\vect{\bid}}
\newcommand{\bidsmi}[1][i]{\bids_{\text{-}#1}}
\newcommand{\bidi}[1][i]{{\bid_{#1}}}
\newcommand{\virt}{\varphi}
\newcommand{\virtir}{\overline{\varphi}}
\newcommand{\virti}[1][i]{\varphi_{#1}}
\newcommand{\virtiri}[1][i]{\overline{\varphi}_{#1}}
\newcommand{\bk}[1]{\left( #1 \right)}
\newcommand{\midbk}[1]{( #1 )}
\newcommand{\bigbk}[1]{\bigl( #1 \bigr)}
\newcommand{\Bk}[1]{\left[ #1 \right]}
\newcommand{\BK}[1]{\left\{ #1 \right\}}
\newcommand{\abs}[1]{\left\lvert #1 \right\rvert}
\let\absFix\abs
\renewcommand{\Pr}{\mathop{\mathbf{Pr}}}
\renewcommand{\d}{\mathrm{d}}
\newcommand{\poly}{\mathrm{poly}}
\renewcommand{\l}{\ell}
\renewcommand{\tilde}{\widetilde}
\renewcommand{\vec}{\vect}
\newcommand\numberthis{\addtocounter{equation}{1}\tag{\theequation}}
\newenvironment{proofof}[1]{\begin{proof}[Proof of #1]}{\end{proof}}
\newcommand{\defeq}{\eqdef}
\newcommand{\bprob}{q}
\newcommand{\bprobs}{\vect{\bprob}}
\newcommand{\bprobi}[1][i]{\bprob_{#1}}
\newcommand{\tmprv}{z}
\newcommand{\tmprvsum}{Z}
\newcommand{\tmprvs}{\vect{\tmprv}}
\newcommand{\tmprvi}[1][i]{{\tmprv_{#1}}}
\newcommand{\ber}{\textup{Ber}}
\newcommand{\Ber}{\ber}
\newcommand{\pois}{\textup{Pois}}
\newcommand{\Pois}{\pois}
\newcommand{\objb}{\textup{H}_\textup{ber}}
\newcommand{\objc}{\textup{H}_\textup{cher}}
\newcommand{\objp}{\textup{H}_\textup{pois}}
\newcommand{\objpm}{\tilde{\textup{H}}_\textup{pois}}
\newcommand{\objba}{\textup{G}_\textup{ber}}
\newcommand{\objpa}{\textup{G}_\textup{pois}}
\newcommand{\var}{x}
\newcommand{\vars}{\vect{\var}}
\newcommand{\vari}[1][i]{{\var_{#1}}}
\newcommand{\SWmodi}{\widetilde{\SW}}
\newcommand{\SWm}{\SWmodi}
\newcommand{\pbase}{r_{\tau}}
\newcommand{\optx}{\vars^{*}}
\newcommand{\optpx}{\varsMStar}
\newcommand{\optplusfix}{\vars^{*}_{+}}
\newcommand{\conx}{\tilde{\vars}^{*}}
\newcommand{\conpx}{\varsMWStar}
\newcommand{\lcore}{\l^{*}}
\newcommand{\Sfix}{S_{\textup{fix}}}
\newcommand{\Zfix}{Z_{\textup{fix}}(\tau)}
\newcommand{\varsM}{\vars_{\smallsub M}^{}}
\newcommand{\varsMStar}{\vars_{\smallsub M}^{*}}
\newcommand{\varsMWStar}{\tilde{\vars}_{\smallsub M}^{*}}
\newcommand{\varsWStar}{\tilde{\vars}_{}^{*}}
\newcommand{\bprobsM}{\bprobs_{\smallsub M}^{}}
\newcommand{\enupex}{\hspace{2em}\normalfont}
\newcommand{\smallsub}{\scriptscriptstyle}
\newcommand{\tmpcolor}[1]{#1}
\newcommand{\LeadingFactorLambda}{\tmpcolor{21}}
\newcommand{\LeadingFactorDelta}{\tmpcolor{17.5}}
\newcommand{\TwiceLeadingFactorDelta}{\tmpcolor{35}}
\newcommand{\MaxFactor}{\tmpcolor{21}}
\newcommand{\TwiceMaxFactorPlusOne}{\tmpcolor{43}}
\newcommand{\CherDCoeff}{\tmpcolor{7}}
\newcommand{\varsf}{\vars}
\newcommand{\varss}{\vect{y}}
\newcommand{\varsi}{\vect{z}}
\newcommand{\vber}{z}
\newcommand{\sumber}{Z}
\newcommand{\vbers}{\vec{\vber}}
\newcommand{\vberi}[1][i]{\vber_{#1}}
\newcommand{\vpois}{y}
\newcommand{\sumpois}{Y}
\newcommand{\vpoiss}{\vec{\vpois}}
\newcommand{\vpoisi}[1][i]{\vpois_{#1}}
\newcommand{\earthmover}{d_{\textup{G}}}
\title[Bidder Selection Problem in Position Auctions]{Bidder Selection Problem in Position Auctions: A Fast and Simple Algorithm via Poisson Approximation}
\author{Nikolai Gravin}
\email{nikolai@mail.shufe.edu.cn}
\affiliation{
  \institution{Shanghai University of Finance and Economics}
  \city{Shanghai}
  \country{China}
}
\author{Yixuan Even Xu}
\email{xuyx20@mails.tsinghua.edu.cn}
\affiliation{
  \institution{Tsinghua University}
  \city{Beijing}
  \country{China}
}
\author{Renfei Zhou}
\email{zhourf20@mails.tsinghua.edu.cn}
\affiliation{
  \institution{Tsinghua University}
  \city{Beijing}
  \country{China}
}
\keywords{Bidder Selection, Submodular Maximization, Position Auctions}
\begin{document}

\begin{abstract}
  In the Bidder Selection Problem (BSP) there is a large pool of $n$ potential advertisers competing for ad slots on the user's web page. Due to strict computational restrictions, the advertising platform can run a 
proper auction only for a fraction $k<n$ of advertisers.
We consider the basic optimization problem underlying BSP: given $n$ independent prior distributions, how to efficiently find a subset of $k$ with the objective of either maximizing expected social welfare or revenue of the platform. 
We study BSP in the classic multi-winner model of position auctions for welfare and revenue objectives using the optimal (respectively, VCG mechanism, or Myerson's auction) format for the selected set of bidders. This is a natural generalization of the fundamental problem of selecting $k$ out of $n$ random variables in a way that the expected highest value is maximized.
Previous PTAS results ([Chen, Hu, Li, Li, Liu, Lu, NIPS 2016], [Mehta, Nadav, Psomas, Rubinstein, NIPS 2020], [Segev and Singla, EC 2021]) for BSP optimization were only known for single-item auctions and in case of [Segev and Singla 2021] for $l$-unit auctions. More importantly, all of these PTASes were computational complexity results with impractically large running times, which defeats the purpose of using these algorithms under severe computational constraints. 

We propose a novel Poisson relaxation of BSP for position auctions that immediately implies that 1) BSP is polynomial-time solvable up to a vanishingly small error as the problem size $k$ grows; 2) there is a PTAS for position auctions after combining our relaxation with the trivial brute force algorithm.
Unlike all previous PTASes, we implemented our algorithm and did extensive numerical experiments on practically relevant input sizes. First, our experiments corroborate the previous experimental findings of Mehta et al. that a few simple heuristics used in practice (e.g., Greedy for general submodular maximization) perform surprisingly well in terms of approximation factor. Furthermore, our algorithm outperforms Greedy both in running time and approximation on medium and large-sized instances, i.e., its running time scales better with the instance size.

\end{abstract}

\maketitle

\section{Introduction}
\label{sec:intro}

Online advertising is a big part of the modern e-commerce industry and a key to the monetization of many online businesses. The majority of ad slots on a user web page are sold in \emph{real time} via an automated auction to a group of candidate advertisers. 
The whole process from the time when an auction is initiated based on the impression about advertising opportunity up to the time when ads are displayed on the user's page usually has to be completed in a few milliseconds. This makes it imperative for the platform (Ad exchange ADX, or Demand side DSP) to keep the auction processing and communication time under a strict limit. Meanwhile, a platform usually runs a complex ML model on each advertiser to get an accurate estimate of their auction score\footnote{An auction score is comprised of estimates for click-through-rate (CTR), ad relevance, and value-per-click bid.}. As some platforms already have or anticipate to have in the near future an excessive number of prospective advertisers, the comprehensive ML model can only be run on a fraction $k$ of $n$ advertisers due to strict time\footnote{In some case, the platform may also estimate the bids to avoid communication lag.} limit. In practice, the platform handles this by a two-stage selection process: it filters out all but $k$ advertisers by running a much faster and less accurate ML model, and then it runs a proper auction
for the remaining $k$ advertisers using the comprehensive and slow ML model.
E.g., for the ADX platforms $n$ may be in the range $20-50$ and $k$ depends on the specific company, e.g., $k=10$ or $k=20$; in the case of DSPs, $n$ may vary a lot and can reach thousands, while $k$ cannot be too large, e.g., $k=100$ or $k=200$.

This raises multiple practical challenges for the ADX and/or DSP platforms. A platform first needs to get $n$ rough score estimates, which can be viewed as $n$ distributions of the bidders' accurate auction scores. One problem is how to obtain these estimates online in a constantly changing environment. 
Another learning problem, recently considered by Goel et al.~\cite{GoelLSTZ23}
is how to retrieve this information from strategic agents, who might affect the selection stage by adjusting their bids. 
Third, there is an underlying optimization question: for $n$ prospective bidders with known independent prior distributions $(\disti)_{i\in[n]}$ how to select $k<n$ of them with the objective of either maximizing expected score (social welfare) or platform's revenue. 
We focus on this basic optimization problem termed the Bidder Selection Problem (BSP).

The welfare maximization BSP for the VCG single-item (or more generally $\ell$-unit) auction is equivalent to the following fundamental algorithmic question: 
select $k$ out of $n$ independent random variables, with the objective of maximizing the expected maximum (expected sum of top $\ell$ values). 
This question (for single-item) has received significant attention under different names: model-driven optimization~\cite{GoelGM10}, $k\textup{-MAX}$~\cite{chen2016combinatorial},
team selection with test scores~\cite{KM18}, subset selection for expected maximum~\cite{mehta2020hitting}, non-adaptive ProbeMax~\cite{SegevS21}, which extends to the top-$\ell$-out-of-$k$ problem.

It is also natural to consider the selection problem with a more general set of objectives given by linear combinations of the top-$\ell$ objectives.
The latter problem generalizes $\ell$-unit auctions and corresponds to the widely used position auction~\cite{VarianPosition,EdelmanPosition07} environment as described, e.g., in~\cite{JasonMD}. 
In position auction, there are $m$ sorted positions that appear alongside the search results and $n$ advertisers competing for these $m$ slots. Each slot $j\in[m]$ has a different click-through rate $\weighti[j]$ (additional multiplicative factor for the click probability on $j$-th position), which translates into the value $\vali\cdot\weighti[j]$ for advertiser $i$ if $i$'s ad is displayed at $j$-th position.

\paragraph{Prior Results for the BSP}
The basic BSP of welfare maximization (second-price auction) was shown to be NP-hard by Goel et al.~\cite{GoelGM10} and later by Mehta et al.~\cite{mehta2020hitting}. On the positive side, there are a few known Polynomial Time Approximation Schemes (PTAS). First, Chen et al.~\cite{chen2016combinatorial} gave a dynamic programming-based polynomial time approximation scheme (PTAS). 
Later, Mehta et al.~\cite{mehta2020hitting} and Segev and Singla~\cite{SegevS21} respectively proposed an Efficient PTAS (EPTAS) for the BSP. Both of their approaches are based on discretizing and enumerating all possible distributions for the maximum of a few random variables: \cite{mehta2020hitting} characterize all $n$ distributions into one of $C(\eps)$ bins (where $C(\eps)=O(1/\eps)^{O(1/\eps)}$ depends on the approximation guarantee $1-\eps$) and then run a brute force search with the complexity of $O\left(n\cdot \log(k)^{O(C(\eps))}\right)$; \cite{SegevS21} use a complex reduction to a multi-dimensional extension of \emph{Santa Claus} problem of~\cite{BansalS06} with at least $O(1/\eps)^{O(1/\eps)^{O(1/\eps^2)}}$ running time\footnote{We give a lower bound on the running time, as it is hard to say what is the exact time complexity of their approach, as they did not specify it explicitly and their result proceeds by a sequence of reductions with non-trivial running time dependencies.}. We would like to emphasize that the findings of Mehta et al. and Segev and Singla could only be considered as purely computational complexity results rather than real algorithms to be used in practice or even in testing/numerical experiments. Indeed, their approaches are rather involved and not very easy to implement, e.g., Segev and Singla only state an existential result without explicitly describing their algorithm.
Furthermore, the running times of these EPTASes even for the basic $k$-MAX problem and small values of $n$ and $k$ are enormously large\footnote{Note that $k$-MAX is a special case of general submodular optimization which already admits efficient $1-1/e$ approximation. Hence, $\eps<1/e$ and the time complexity of $2^{\Theta(1/\eps)^{\Theta(1/\eps)}}>2^{9000}$ with conservative estimates of $\Theta(1/\eps)=2/\eps$.}. I.e., using any of the existing PTASes would completely defeat the purpose of a two-stage selection process. 

For the revenue objective, Mehta et al.~\cite{mehta2020hitting} also considered BSP for the second-price auction, which is equivalent to maximizing the expectation of the second largest value among selected $k$ random variables. They showed strong impossibility results: it is impossible to get any constant factor approximation in polynomial time under either of the exponential time, or the planted clique hypothesises.

\paragraph{Auction Formats}
One of the most commonly used formats in advertising industry is the Generalized Second Price (GSP) auction. It was shown in~\cite{VarianPosition,EdelmanPosition07} that GSP of any 
position auction with any set of bidders has a Nash equilibrium equivalent to the welfare-maximizing outcome of the VCG. In some cases (e.g., Google's auction for selling contextual ads~\cite{VarianHarris14}) the platform's auction format is based directly on the VCG. Thus, in the BSP context with the welfare-maximization objective, it is most natural to analyze the VCG mechanism. 
For the revenue objective, it is natural to study the optimal Myerson's auction, as it gives an upper bound on the revenue of any other auction format. Then one can reduce revenue maximization to welfare maximization of the VCG format. I.e., it is w.l.o.g. to only study welfare-maximization BSP for the VCG format, which we do in the rest of our paper.
It is also natural to study revenue maximization BSP for the GSP format (or by the revenue equivalence of the VCG format). Unfortunately, it does not admit polynomial time $O(1)$-approximation even for the 
basic single-item auction due to strong impossibility results of~\cite{mehta2020hitting}.

\subsection{Our Results}
\label{sec:results}
We propose a novel relaxation of the BSP for a more general position auction environment, which we call Poisson (or Poisson-Chernoff) relaxation.
It has the following theoretical guarantees.
\begin{enumerate}[\hspace{1em}(1)]
	\item The relaxation is a continuous maximization problem with a concave objective that can be solved in time polynomial in $n$ and $k$. In fact, the objective of this relaxation is a nicely structured algebraic function that lends itself to efficient convex minimization solvers.
	\item With small adjustments (summarized in \cref{alg:main}), the relaxed objective \emph{converges} at the rate $1-O(k^{-1/4})$ to the actual social welfare of fractional BSP as the problem size $k$ grows (\cref{thm:position}). The standard rounding of this fractional solution suffers only a small loss of $O(k^{-1/2})$, yielding $\bigbk{1-O(k^{-1/4})}$-approximation (\cref{thm:final}) for the integral BSP.
	\item For the special case of the single-item auction, our algorithm achieves $1-O\bigbk{\hspace{-0.1em}\sqrt{\ln k/k}}$ approximation (see \cref{subsec:single_item}).
\end{enumerate}
These results have immediate theoretical implications and can be implemented in practice unlike all previous PTAS algorithms, as our approach has far superior running time to the point where it outperforms some of the existing heuristics used in practice, such as the Greedy algorithm for general submodular maximization. Furthermore, our results bring new theoretical insight for BSP: the BSP converges to a polynomial-time solvable optimization problem as the problem size $k$ grows.

\paragraph{Theoretical Implications.}
Our \cref{alg:main} gives a $\bk{1-\eps}$ approximation to the BSP for any position auction with $\eps=\Omega(k^{-1/4})$ and works in polynomial time (independent of $\eps$). On the other hand, for small values of $\eps$ ($\eps=O(k^{-1/4})$), a straightforward exhaustive search algorithm gives a perfect solution in $O(n^k)=n^{\poly(1/\eps)}$ time. I.e., the combination of our relaxation with the brute force algorithm yields a PTAS:
\begin{corollary}
\label{cor:PTAS}
BSP for any position auction admits a $(1-\eps)$ PTAS that runs in $n^{\poly(1/\eps)}$ time.
\end{corollary}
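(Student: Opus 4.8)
The plan is a threshold argument: run \cref{alg:main} when $\eps$ is not too small, and fall back to brute force when it is, with the cutoff placed exactly where the guarantee of \cref{thm:final} degrades below $1-\eps$.

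First I would fix the absolute constant $c$ hidden in \cref{thm:final}, so that \cref{alg:main} provably returns a set whose VCG welfare is at least $\bigbk{1 - c\,k^{-1/4}}$ times the BSP optimum, in time $\poly(n,k)$ that does not depend on $\eps$. Given a target accuracy $\eps > 0$, I split into two regimes. If $k \ge (c/\eps)^4$, then $c\,k^{-1/4} \le \eps$, so \cref{alg:main} already outputs a $(1-\eps)$-approximation; its running time is $\poly(n,k) \le n^{O(1)} \le n^{\poly(1/\eps)}$, absorbing the constant exponent into $\poly(1/\eps)$. If instead $k < (c/\eps)^4 = \poly(1/\eps)$, I would enumerate all $\binom{n}{k} \le n^{k}$ size-$k$ subsets $S \subseteq [n]$, evaluate the expected VCG welfare (equivalently, for revenue, the Myerson revenue) of the position auction restricted to $S$ for each, and return the best such $S$. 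This is exact, hence in particular a $(1-\eps)$-approximation, and the total running time is $n^{k}\cdot\poly(n) = n^{\poly(1/\eps)}$. Combining the two regimes yields a single algorithm achieving a $(1-\eps)$-approximation in $n^{\poly(1/\eps)}$ time for every $\eps$, which is the claim.

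The only nontrivial point, and the one I expect to need the most care, is the per-subset objective evaluation in the brute-force regime: I need the expected VCG welfare of a fixed $k$-subset to be computable in polynomial time. For position auctions this amounts to computing the distribution (or just the relevant moments) of the top-$\ell$ order statistics of $k$ independent random variables, which is a standard dynamic program for discrete input distributions of polynomially bounded support; for continuous distributions one first discretizes finely enough that the incurred error is $o(\eps)$, using the same style of estimate as in the proof of \cref{thm:position}. I do not expect this to be the bottleneck — it is essentially bookkeeping — and the genuine content of the corollary is carried entirely by \cref{thm:final}; the brute-force half only has to cover the finitely many ``small $k$'' instances that the convergence guarantee cannot reach.
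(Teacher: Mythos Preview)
Your proposal is correct and follows essentially the same two-regime argument the paper gives inline just before the corollary: use the main algorithm when $\eps \ge c\,k^{-1/4}$ and brute force when $k < (c/\eps)^4 = \poly(1/\eps)$. One small slip: you say ``run \cref{alg:main}'' but cite \cref{thm:final}; the integral guarantee actually comes from \cref{alg:int} (which wraps \cref{alg:main} with rounding), so that is what you should invoke in the large-$k$ regime. Your added remark about polynomial-time evaluation of the welfare of a fixed subset is more careful than the paper, which simply assumes finite-support input distributions and leaves this implicit.
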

Moreover, the algorithm as in \cref{cor:PTAS} is an EPTAS with a much better dependency on $\eps$ than previous PTASes under a mild assumption that $k\ge\log n$:
\begin{corollary}
\label{cor:EPTAS}
BSP for position auctions admits a $(1-\eps)$ EPTAS for any $k\ge\log n$ that runs in $O\midbk{\poly(n,k)} + 2^{O(\eps^{-8})}$ time.
\end{corollary}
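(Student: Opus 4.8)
The plan is a straightforward dichotomy on the size of $k$ relative to the target accuracy $\eps$, exploiting the fact that the hypothesis $k\ge\log n$ forces $n$ itself to be small whenever $k$ is small. Recall from \cref{thm:final} (summarized as point~(2) in \cref{sec:results}) that \cref{alg:main} outputs a solution of value at least a $\bigbk{1-c\,k^{-1/4}}$ fraction of the optimum for some absolute constant $c$, and runs in $\poly(n,k)$ time with no dependence on $\eps$. Fix $\eps>0$ and set the threshold $k_0\defeq\ceil{(c/\eps)^4}=\Theta(\eps^{-4})$.

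If $k\ge k_0$, then $c\,k^{-1/4}\le\eps$, so running \cref{alg:main} already yields a $(1-\eps)$-approximation in $\poly(n,k)=\poly(n)$ time (using $k<n$), and we are done. Otherwise $k<k_0$, and we fall back to the brute-force algorithm of \cref{cor:PTAS} that enumerates all $\binom{n}{k}$ $k$-subsets, evaluates the position-auction (VCG) objective exactly on each, and returns the best; this is an exact --- hence $(1-\eps)$ --- solution, with running time $O(n^k)\cdot\poly(n,k)$. The point is that $k\ge\log n$ now gives $\log n\le k<k_0=\Theta(\eps^{-4})$, so $n<2^{\Theta(\eps^{-4})}$ and therefore
\[
O(n^k)\ \le\ \bigbk{2^{\Theta(\eps^{-4})}}^{\,\Theta(\eps^{-4})}\ =\ 2^{\Theta(\eps^{-8})}.
\]
Combining the two cases, the overall running time is $\poly(n,k)+2^{O(\eps^{-8})}$, which is of the form $f(\eps)\cdot\poly(n)$ demanded of an EPTAS.

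There is no genuinely hard step here; what remains is only bookkeeping. First, one should check that the constant $c$ from \cref{thm:final} is truly absolute, so that $k_0$ depends on $\eps$ alone and not on $n$ or $k$. Second, one should confirm that a single evaluation of the VCG objective of a position auction on a fixed $k$-subset of the distributions takes $\poly(n,k)$ time, so that the brute-force cost is indeed $n^k$ up to a polynomial factor; this is already implicit in \cref{cor:PTAS}. Both are immediate from the earlier development, and the whole argument is essentially the observation that ``small $k$'' together with ``$k\ge\log n$'' implies ``small $n$''.
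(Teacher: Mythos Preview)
Your proof is correct and follows essentially the same argument as the paper: dichotomize on whether $k\ge\Theta(\eps^{-4})$, use \cref{alg:main}/\cref{alg:int} in the large-$k$ regime, and in the small-$k$ regime bound the brute-force cost by $n^k\le 2^{k^2}=2^{O(\eps^{-8})}$ via $k\ge\log n$. The paper's version is just the terser one-line form of the same computation.
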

\begin{proof}
\cref{alg:main} runs in $O\midbk{\poly(n,k)}$. We run brute force only when $\eps=O(k^{-1/4})$, i.e., when $k^2=O(\eps^{-8})$. Then its running time is not more than
$n^k\le 2^{k^2} = 2^{O(\eps^{-8})}$, since $k\ge\log n$.
\end{proof}

\paragraph{Comparison with Previous Theoretical Results.} Our novel Poisson approximation approach yields more general theoretical results than previous work. E.g., as Segev and Singla~\cite{SegevS21} briefly mention how their scheme can be extended from $k$-MAX (i.e., single-item auction) to $\ell$-unit auctions, one may wonder if a similar approach also extends to the more general position auction environment. To the best of our knowledge, it does not. Indeed, their main idea is to consider two regimes for $\ell$: small (a constant) $\ell<1/\eps^3$, and large $\ell>1/\eps^3$. They claim that the former case can be handled with a similar approach (maybe with a significantly worse dependency of the running time on $\eps$ than the case $\ell=1$), and in the later case, one can use concentration bounds similar to the (3) case described in Section~\ref{sec:overview}. This argument cannot be used for the objective that is a linear combination of welfare in $1$-unit and $k/2$-unit auctions.

\paragraph{Relevance in Practice.}
Our relaxation is a white-box approach that can be easily adapted to different scenarios. E.g., if some bidders have to be included in the final solution (which is often the case in industry because of contract obligations), then our approach gives the same approximation with these additional constraints\footnote{In fact, the approximation guarantee holds not only for the global objective, but also for the surplus objective, i.e., the additional gain to welfare from the non-fixed bidders.}.
Also, depending on the type of problem instances,  a few steps of our algorithm can be removed or simplified to fit the specific domain, which results in simpler and more efficient solutions. 

Furthermore, unlike the case with all previous PTASes, we actually implemented our algorithm and did numerical experiments 
on several generated data sets of practically relevant sizes. 
Note that a company cares much more about the implementability and running time of their algorithms rather than its theoretical approximation guarantees\footnote{In the context of BSP any reasonable heuristic is preferable to a complex and slow algorithm with good approximation efficiency guarantees.}, which has been the main focus of previous PTAS results. In contrast, our algorithm is not very complex and is based on standard continuous convex maximization methods, which means that it is much easier to understand and adopt by a platform's product team. 
The experiments are summarized in Section~\ref{sec:experiments}. We slightly simplified our theoretical \cref{alg:main} to avoid the hard-coded efficiency loss of $O(k^{-1/4})$ in the approximation.

We observed that among all tested algorithms, our algorithm produced solutions that were always within 0.1\% of the best-performing algorithm in terms of objective value. Furthermore, our algorithm's running time scales significantly slower than any other tested algorithm. For large instances where $n=1000,k=200$, our algorithm took less than 1 minute to complete, while even Greedy (which is a subroutine in all previous PTASes) took more than 1 day and Local Search could not finish within 1 week. These findings demonstrate the practical relevance of our approach.

\subsection{Other Related Work}
\label{sec:related}

Mehta et al.~\cite{mehta2020hitting} mention a few other scenarios besides bidder selection with similar mathematical formulations.
The applications range from a two-tier solution for scoring documents in a search result~\cite{broder2003efficient}, to filtering initial proposals in procurement auctions~\cite{rong2007two, tan1992entry}, to voting theory~\cite{Procaccia2012AML}.
Bei at al.~\cite{BeiGLT22} studied BSP with the revenue objective under multiple auction formats including Myerson's auction and gave constant factor approximation for the second-price auction with anonymous reserve. They also introduced another optimization framework for the BSP under costs, which is more challenging than the BSP under capacity constraint.

Poisson approximation is a well-developed technique from probability theory and statistics. A survey~\cite{Novak19} mentions at least twenty different results on the basic question of approximating the sum of independent Bernoulli random variables by the Poisson distribution. In statistics, Poisson approximation is commonly used in Extreme Value Theory (EVT) with applications to structural and geological engineering, traffic prediction, and finance (see, e.g., a book~\cite{novak2011extreme}).
It has also been used in theoretical computer science, e.g., 
\cite{LY13} used Le Cam's Poisson approximation theorem for stochastic bin packing and knapsack problems and also for 
EUM introduced in~\cite{li2011maximizing}.

In fact, the expected utility maximization (EUM) is closely related to our objective. EUM is formulated as choosing a feasible subset $S$ out of $n$ random variables $X_1, \ldots, X_n$ to maximize $\Ex{u(\sum_{i\in S} X_i)}$, where $u$ is a given utility function. The problem has been 
 studied under capacity \cite{bhalgat2014utility} or other combinatorial constraints \cite{li2011maximizing, LY13, yu2016maximizing} with a non-linear (typically concave) utility 
function. The BSP for single-item auction, i.e., the $k$-MAX problem, has a similar objective $\Ex{\max_{i\in S}\{X_i\}}$ to EUM but with $\max$ operator instead of the sum. When the distributions of $X_i$ are unknown, EUM becomes an online learning problem.
Chen et al.~\cite{chen2016combinatorial} gave the first PTAS for $k$-MAX, but their main focus is on Combinatorial Multi-Armed Bandits.

\section{Preliminaries}
\label{sec:prelim}
A set of $n$ bidders wish to receive some service and each bidder $i\in[n]$ has a private non-negative value $\vali\in\R_{\ge 0}$ indicating how much they are willing to pay for it. We denote the vector of bidder values as 
$\vals=(\vali)_{i\in[n]}$. By the revelation principle, we can restrict our attention to incentive compatible and individually rational single-round auctions $\auction$, where each bidder $i$ submits a sealed bid $\bidi$ to the auctioneer. The auctioneer then decides on a feasible allocation vector $\allocs(\bids)=(\alloci(\bids))_{i\in[n]}$ and payments $\pays(\bids)=(\payi(\bids))_{i\in[n]}$. The incentive compatibility and individual rationality mean that by bidding truthfully $\bidi=\vali$, each bidder $i\in[n]$ (a) maximizes her utility $\utili(\bidi,\bidsmi)\eqdef\vali\cdot\alloci(\bidi,\bidsmi)-\payi(\bidi,\bidsmi)$ and (b) receives non-negative utility $\utili(\vali,\bidsmi)\ge 0$.
The seminal VCG mechanism (a second-price auction in the case of single-item auction) is an example of incentive compatible mechanism that also maximizes \emph{social welfare}  $\SW(\allocs,\vals)=\sum_{i=1}^{n}\alloci\cdot\vali$.

We study auctions in the Bayesian setting, where we assume that bidder values are drawn independently from known prior distributions $\vals\sim\dist=\prod_{i\in[n]}\disti$. We also use $\disti(\tau)=\Prx[\vali\sim\disti]{\vali\le\tau}$ to denote the cumulative distribution function. 
The auction designer is usually concerned about two objectives: the expected \emph{social welfare} $\SW=\Ex[\vals\sim\dists]{\SW(\allocs(\vals),\vals)}$, and 
\emph{revenue} $\Rev = \ExNobody[\vals\sim\dists]$ $[\sum_{i\in[n]}\payi(\vals)]$. The VCG mechanism maximizes the welfare on every valuation profile $\vals$, and thus maximizes $\SW$ in expectation for any prior $\dists$. The well-known Myerson's auction maximizes $\Rev$. This auction reduces the problem of revenue maximization to \emph{virtual welfare} maximization by transforming values $(\vali)_{i\in[n]}$ to virtual values $\virti(\vali)$ for regular distribution $\disti$ and by doing ironing $\virtiri(\vali)$ for irregular distribution $\disti$. I.e., the expected revenue of Myerson's auction can be written as $\Rev=\Ex[\vals]{\SW(\allocs(\vals),\virt(\vals))}$ for regular distributions and $\Rev=\Ex[\vals]{\SW(\allocs(\vals),\virtir(\vals))}$ for general distributions.

\paragraph{Auction Environments.} The single-item auction is an environment with the feasible allocations given by $\{\allocs: \sum_{i\in[n]}\alloci\le 1\}$. A more general $\l$-unit auction environment for $\l\in\N$ is given by the feasibility constraints: $\sum_{i\in[n]}\alloci\le \l$ and $\alloci\in[0,1]$ for all $i\in[n]$. A position auction environment further generalizes $\l$-unit auctions. It is specified by a sorted weight vector $\weights=(1\ge\weighti[1]\ge\weighti[2]\ge\ldots\ge\weighti[n]\ge 0)$, which represents the click-through rate probabilities for $n$ advert positions\footnote{The number of available positions $m$ is usually smaller than the number of bidders $n$, in which case we simply let $w_{m+1}=\cdots=w_n=0$.}. Every bidder may get at most one position and each advert position can be assigned to at most one advertiser. Formally, the feasibility allocations can be specified with an assignment function $\pi:[n]\to[n]$ of $n$ advertisers to $n$ sorted slots as follows: $\{\allocs: \exists \pi, ~~ \alloci\in[0,\weighti[\pi(i)]] \text{ for all }i\in[n]\}$.

\paragraph{Bidder Selection.} In the Bidder Selection Problem (BSP)
the seller first decides $\vari\in\{0,1\}$ which bidders $i\in[n]$ to invite to the auction. The selected set of bidders $S$ may not 
exceed a certain capacity $k\ge |S|$, i.e., $\sum_{i\in[n]}\vari\le k$. Then the auctioneer runs an optimal auction $\auction$ for the set $S$ of invited bidders: VCG mechanism 
for the welfare objective, and Myerson for the revenue objective. Since revenue of the Myerson's auction can be rewritten as the 
expected virtual welfare with independent (ironed) virtual values $(\virti(\vali):\vali\sim\disti)_{i\in S}$, the BSP for the revenue maximization is equivalent to the BSP for the welfare maximization. Thus it suffices to only consider the welfare maximization problem. Specifically, we denote by $\vals\sim\vars\cdot\dists$ the independent draws of  $\vali\sim\vari\cdot\disti$ (i.e., $\vali\sim\disti$ when $\vari=1$, and $\vali=0$ when $\vari=0$) for all $i\in[n]$, then the BSP of the VCG mechanism for any $\ell$-unit/position auction with weights $\weights$ can be written as follows:
\begin{align*}
	\opt(\ell)&\eqdef\max_{\substack{\vars\in\{0,1\}^n\\|\vars|_1\le k}}~\Exlong[\vals\sim\vars\cdot\Dists]{\sum_{i=1}^{\l}\valith},
\end{align*}
\begin{align*}
	\opt(\weights)&\eqdef\max_{\substack{\vars\in\{0,1\}^n\\|\vars|_1\le k}}~\Exlong[\vals\sim\vars\cdot\Dists]{\sum_{i=1}^{n}\weighti\cdot\valith},
\end{align*}
where $\valith$ is the $i$-th largest value among 
$\{\vali\}_{i\in[n]}$. 
We want to obtain good approximation algorithms for these BSPs. I.e., we would 
like to find in polynomial time $\vars\in\{0,1\}^n$ with $|\vars|_1\le k$ such that 
$\Ex[\vals\sim\vars\cdot\Dists]{\sum_{i=1}^{\l}\valith}\ge(1-\eps) \, \opt(\ell)$ and
$\Ex[\vals\sim\vars\cdot\Dists]{\sum_{i=1}^{n}\weighti\cdot\valith}\ge(1-\eps) \, \opt(\weights)$ for a small $\eps>0$. To simplify the presentation, we assume that all distributions $(\disti)_{i\in[n]}$ have finite supports and are given explicitly as the algorithm's input.

It has been observed before that the BSP's objective is a monotone submodular function of the set $S=\{i:\vari=1\}$ of invited bidders, i.e., $\SW(S)+\SW(T)\ge\SW(S\cap T)+\SW(S\cup T)$ for any $S,T\subseteq[n]$. The same property holds for the $\l$-unit and position auctions with an almost identical proof (see \cite{chen2016combinatorial}). 

We also consider the standard (in submodular optimization literature) multi-linear extension of the BSP objective. I.e., for a fractional $\vars\in[0,1]^n$, we invite each bidder $i$ to the auction independently with probability $\vari$. We employ the same notation $\vals\sim\vars\cdot\dists$ as in the integral problem, where $\vali\sim \vari\cdot\disti$ means that we first decide whether to invite $\xi_i\in\{0,1\}$ bidder $i$ according to a Bernoulli distribution $\xi_i\sim\Ber(\vari)$, then draw their value $\vali\sim\xi_i\cdot\disti$.
The capacity constraint transforms into the bound on the expected number of invited bidders $\sum_{i=1}^{n}\vari\leq k$. We describe this new mathematical formulation in \cref{sec:math_formulation}. In \cref{sec:position_auction}, we first consider this fractional relaxation of BSP and we discuss in \cref{sec:rounding} how to obtain a good solution to the integral BSP from the fractional problem.

\section{New Mathematical Formulation}
\label{sec:math_formulation}

In this section, we give a fractional relaxation of BSP. As the welfare of any position auction can be written as a linear combination of $\l$-unit auctions, we begin with a fractional relaxation of BSP for the $\l$-unit auction.
Specifically, the expected social welfare with a fractional set of bidders $\vars$ is
$\SW(\vars, \ell) = \Exlong[\vals \sim \vars \cdot \dists]{\sum_{i=1}^{\ell}\valith},$
where $\valith$ denotes the $i$-th largest value among $\{\vali\}_{i\in\{1,2,\dots,n\}}$. Hence, the following mathematical program represents BSP for $\ell$-unit auction:
\[
  \begin{array}{ccl}
	\text{Maximize} & \SW(\vars,\ell) &\\
	\text{Subject To} & \sum_{i=1}^{n}\vari\leq k,&
                             \vari \in [0,1], \forall i \in \{1,2,\dots,n\}.
  \end{array}
  \numberthis \label{eq:lunit_program}
\]
As $\ell$ is fixed, whenever it is clear from the context, we will simply write $\SW(\vars)$.

\paragraph{Bernoulli Representation.} 
We now derive an explicit formula for the social welfare $\SW(\vars)$.
Fixing a threshold $\tau$, the expected number of bidders with values exceeding $\tau$
among the highest $\ell$ bidders is $\Ex[\vals\sim \vars\cdot \dists]{\min(\sum_{i=1}^{n} \ind{\vali\geq\tau},\ell)}$. Therefore, by integrating\footnote{The function inside the integral  is piece-wise constant, i.e., it is constant between consecutive values of the threshold $\tau$ in the supports of $\{\disti\}_{i\in[n]}$.} over $\tau\in[0,+\infty)$, we get
\begin{equation*}
  \SW(\vars) = \int_{0}^{+\infty}\Exlong[\vals\sim \vars\cdot \dists]{\min\left( \sum_{i=1}^{n}\ind{\vali\geq\tau}, \, \ell \right)} \d\tau.
\end{equation*}
Note that $\ind{\vali\geq\tau}$ for $\vali\sim\vari\cdot\disti$ is a Bernoulli random variable, and $\Ex[\vals\sim \vars\cdot \dists]{\min(\sum_{i=1}^{n}\ind{\vali\geq\tau}, \ell)}$ is the minimum of a sum of independent Bernoulli random variables and $\ell$.
To simplify notations, we explicitly define the probabilities of Bernoulli random variables $\ind{\vali\ge\tau}$: $\quad\bprobi(\vari,\tau) \eqdef \Prlong[\vali\sim\vari\cdot \disti]{\vali\ge \tau}=\vari\cdot\left(1-\disti(\tau)\right)$
\begin{equation}
    \label{eq:def_bprobs}
    \textup{and let}\quad
   \bprobs(\vars,\tau)\eqdef(\bprobi)_{i\in [n]}.    
\end{equation}

\begin{definition}[Bernoulli Objective]
  \label{def:objb}
  For a vector $\bprobs\in[0,1]^{n}$ and $\ell$, the \textbf{Bernoulli objective term} of $\bprobs$ and $\ell$ is a function $\objb(\bprobs,\ell)$:
  \begin{align}
    \objb(\bprobs,\ell) &\eqdef \Exlong[\vbers\sim\ber(\bprobs)]{\min\left(\sum_{i=1}^{n}\tmprvi, \, \ell\right)}, \notag\\
    \textup{then}\quad \SW(\vars) &= \int_{0}^{+\infty} \objb(\bprobs(\vars,\tau),\ell) \, \d\tau. 
    \label{eq:sw_to_objb}
  \end{align}
\end{definition}

\paragraph*{Position Auctions.}
A position auction is given by a vector of non-negative weights\footnote{Usually weights are only for the first $k$ slots, as we only select a set of $k$ bidders. In this case, we simply assume that $w_{k+1} = \ldots = w_n = 0$.} $\weights: (\weighti[1] \geq \weighti[2] \geq \cdots \geq \weighti[n] \geq 0)$. The highest social welfare we get from the set $S$ of invited bidders is $\sum_{i=1}^{|S|}\valith \cdot \weighti$, where $\valith[1] \ge \cdots \ge \valith[|S|]$ are ordered values of  bidders in $S$. Thus
the expected social welfare for a fractional set $\vars$ is
\begin{equation*}
  \SW(\vars,\weights) = \Exlong[\vals \sim \vars \cdot \dists]{\sum_{i=1}^{n}\valith \cdot \weighti} =
  \sum_{\ell=1}^{n}\weightdiff \cdot \SW(\vars, \ell),
\end{equation*}
where $w_{n+1} \eqdef 0$. 
Then the respective fractional BSP program for position auctions is as follows.
\[
  \begin{array}{ccl}
    \text{Maximize} & \SW(\vars,\weights) &\\
    \text{Subject To} & \sum_{i=1}^{n}\vari\leq k,&
                             \vari \in [0,1], \forall i \in \{1,2,\dots,n\}.
  \end{array}
  \numberthis \label{eq:pos_original_program}
\]
We will often omit dependency on $\weights$ in $\SW$ whenever it is clear from the context. We further consider the \emph{Bernoulli representation} for position auctions:
\begin{align}
  \SW(\vars, \weights) &= \int_{0}^{+\infty} \objb(\bprobs(\vars, \tau), \weights) \,\d \tau, \notag
  \\
  \text{where}\quad
  \objb(\bprobs, \weights) &\eqdef \sum_{\l = 1}^{n} \weightdiff \,\objb(\bprobs, \l).
  \numberthis \label{eq:sw_to_objb_pa}
\end{align}
As in \eqref{eq:def_bprobs}, $\bprobs(\vars, \tau)$ represents the probabilities of each bidder's value exceeding $\tau$. $\objb(\bprobs, \weights)$ is called the \textbf{Bernoulli objective term} for position auctions.

\subsection{Overview of Our Approach}
\label{sec:overview}

The fractional relaxation~\eqref{eq:pos_original_program} is still neither convex nor concave and thus is too unwieldy. The central idea of our paper is to use instead \emph{Poisson approximation} to the Bernoulli objective 
terms $\objb(\bprobs(\vars, \tau), \weights)$. 
Specifically, we 
substitute each Bernoulli random variable $\vber \sim \Ber(p)$ with $p=q_i(x_i,\tau)$ by the Poisson random variable $\vpois \sim \Pois(p)$ with the same expectation as $z$. We use the following Poisson objective term 
$\objp(\bprobs, \l)$ to approximate $\objb(\bprobs, \l)$.
\begin{align*}
  \objp(\bprobs, \l) &\eqdef
    \ExlongNobody[\vpoiss \sim \pois(\bprobs)] \Bk{\min\bk{\sum_{i=1}^n \vpoisi, \, \l}}
    \\
    & =
    \Exlong[\sumpois \sim \pois(\sum_{i=1}^n \bprobi)]{\min\bk{\sumpois, \, \l}}.
\end{align*}
The advantage of the Poisson approximation $\objp(\bprobs, \l)$ is that the resulting functions $\objp(\bprobs,\l)$ and $\objp(\bprobs, \weights)$ are \emph{concave} in $\bprobs$. This in turn allows us to efficiently solve the optimization problem for the Poisson approximation
analogous to \eqref{eq:pos_original_program}.
This Poisson approximation works well\footnote{The sum of Poisson random variables approaches the sum of Bernoulli random variables in TV and other statistical distances.} in the following situations.
\begin{enumerate}[\hspace{1em}(1)]
\item When the probabilities 
$p_i=q_i(x_i,\tau)$ of $\vberi \sim \Ber(p_i)$ are small (i.e., $p_i \leq \delta,\forall i\in [n]$). This 
allows us to handle the crucial contribution to the welfare comprised of small probability tail events for large thresholds $\tau$. 
\item When $\Ex{\sum_{i=1}^{n}\tmprvi}$ is large (when 
thresholds $\tau$ are small). Indeed, by Chernoff bounds the sums of independent random variables (both Bernoulli and Poisson) $\sum_{i=1}^{n}\tmprvi$ and $\sum_{i=1}^{n}\vpoisi$ are close to their expectations. In fact, we simply use Chernoff objective term
 $\objc(\bprobs,\ell)\eqdef
\min\left(\sum_{i=1}^{n}\bprobi, \, \ell\right)
$ instead of Poisson approximation in this case.
\item When $\l$ is large. In this case, either the concentration inequality gives a good approximation when 
$\Ex{\sum_{i=1}^{n}\tmprvi}=\Ex{\sum_{i=1}^{n}\vpoisi}$ is large, or when this expectation is much smaller than $\l$ then the probability that either of 
$\sum_{i=1}^{n}\tmprvi$ or $\sum_{i=1}^{n}\vpoisi$ exceeds the threshold $\l$ is small.
\end{enumerate}
Then the algorithmic framework for the BSP is rather straightforward. We need to solve the following \emph{concave} program of $\vars$:
\[
  \begin{array}{cl}
	\text{Maximize} &   
  \sum_{\l = 1}^{n} \weightdiff 
  \int_{0}^{+\infty} \objp(\bprobs(\vars, \tau), \l) \, \d \tau\\
	\text{Subject To} & \sum_{i=1}^{n}\vari\leq k, \quad \vari \in [0,1], \forall i \in \{1,2,\dots,n\}.
  \end{array}
  \numberthis \label{eq:high_level_Poisson}
\]
We define the approximate social welfare $\SWm(\vars, \weights)$ to be the objective of \eqref{eq:high_level_Poisson}. Here, $\objp(\bprobs(\vars, \tau), \l)=\Ex[\sumpois \sim \pois(\sum \bprobi)]{\min\bk{\sumpois, \, \l}}$ and 
all functions under the integral are piece-wise constant with the number of pieces bounded by the size of the union of the supports of $\disti$. The optimal fractional solution $\vars$ can be computed rather efficiently using continuous convex optimization. Finally, we use a rounding procedure (similar to the multi-linear extension of submodular function) to the fractional solution of \eqref{eq:high_level_Poisson} to obtain an integral solution with a small loss to the approximation guarantee.

\section{Our Algorithm}
\label{sec:position_auction}
\emph{Poisson approximation} is the central idea of this paper. We use the Poisson Objective to approximate Bernoulli objective terms.

\begin{definition}[Poisson Objective]
  \label{def:objp}
  For $\bprobs \in [0, 1]^n$ and $\l\in\mathbb{N}$, the \textbf{Poisson objective term} is a function $\objp(\bprobs, \l)$ given by
  \begin{align*}
    \objp(\bprobs, \l) &\eqdef
    \ExlongNobody[\vpoiss \sim \pois(\bprobs)] \Bk{\min\bk{\sum_{i=1}^n \vpoisi, \, \l}}
    \\
    &=
    \Exlong[\sumpois \sim \pois(\sum_{i=1}^n \bprobi)]{\min\bk{\sumpois, \, \l}}.
  \end{align*}
\end{definition}

Note that the latter equality is an important property of Poisson distribution: the sum of independent Poisson random variables $\vpoisi \sim \pois(\bprobi)$ follows the Poisson distribution $\pois(\sum_{i=1}^n \bprob_i)$. Another crucial property is the concavity of Poisson approximation.

\begin{claim}[Concavity of Poisson]
  \label{cl:objp_concave}
  $\objp(\bprobs, \l)$ is a concave function in $\bprobs \in [0,1]^n$ for $\forall\l\in\N$.
\end{claim}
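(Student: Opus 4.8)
The plan is to reduce the multivariate concavity statement to a one-variable fact about the Poisson distribution. Observe first that, by the second displayed equality in \cref{def:objp}, the quantity $\objp(\bprobs, \l)$ depends on $\bprobs$ only through the scalar $\lambda \defeq \sum_{i=1}^{n}\bprobi$, which is an \emph{affine} (indeed linear) function of $\bprobs$. Since the composition of a concave univariate function with an affine map is concave, it suffices to show that
\[
  g(\lambda) \defeq \Exlong[\sumpois\sim\pois(\lambda)]{\min\bk{\sumpois, \, \l}}
\]
is concave in $\lambda \in [0,+\infty)$ (only the range $[0,n]$ is actually needed, since $\sum_i \bprobi \le n$).

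To analyze $g$, I would use the standard ``forward-difference'' identity for the Poisson family: for any bounded $h\colon \N_0 \to \R$ and $Y \sim \pois(\lambda)$,
\[
  \frac{\d}{\d\lambda}\,\Ex[Y\sim\pois(\lambda)]{h(Y)} \;=\; \Ex[Y\sim\pois(\lambda)]{h(Y+1) - h(Y)},
\]
which follows by differentiating the absolutely convergent series $\sum_{j\ge 0} h(j)\,e^{-\lambda}\lambda^{j}/j!$ term by term and using the relation $\tfrac{j}{\lambda} e^{-\lambda}\lambda^{j}/j! = e^{-\lambda}\lambda^{j-1}/(j-1)!$. Applying this with $h(y)=\min(y,\l)$ and using $\min(y+1,\l)-\min(y,\l)=\ind{y\le\l-1}$, we obtain $g'(\lambda) = \Prx[Y\sim\pois(\lambda)]{Y \le \l-1}$. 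Applying the identity a second time, now with $h(y)=\ind{y\le\l-1}$, yields
\[
  g''(\lambda) \;=\; \Prx[Y\sim\pois(\lambda)]{Y \le \l-2} - \Prx[Y\sim\pois(\lambda)]{Y \le \l-1} \;=\; -\,\Prx[Y\sim\pois(\lambda)]{Y = \l-1} \;=\; -\,\frac{e^{-\lambda}\lambda^{\l-1}}{(\l-1)!} \;\le\; 0,
\]
so $g$ is concave, and the claim follows. (Alternatively, one can stop after the first differentiation: since the Poisson CDF $\lambda \mapsto \Prx[Y\sim\pois(\lambda)]{Y \le m}$ is nonincreasing in $\lambda$ by stochastic monotonicity, $g' = \Prx[]{Y\le\l-1}$ is nonincreasing, hence $g$ is concave. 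The case $\l=0$ is trivial since then $g \equiv 0$.)

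I do not expect a genuine obstacle. The only point needing a little care is the justification of term-by-term differentiation of the Poisson series, but this is routine: $\min(Y,\l)$ and all the indicators involved are bounded (by $\l$ and by $1$ respectively), so the series and its formal derivatives converge locally uniformly in $\lambda$. The real content is the structural observation that $\objp(\bprobs,\l)$ factors through the single linear statistic $\sum_i \bprobi$, which collapses the problem to a one-dimensional calculus exercise.
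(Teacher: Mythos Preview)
Your proof is correct and follows essentially the same approach as the paper: both reduce to the one-variable function $g(\lambda)$ via the linear dependence on $\sum_i \bprobi$, and both compute the same first derivative $g'(\lambda)=\sum_{j=0}^{\l-1}e^{-\lambda}\lambda^j/j!$ and second derivative $g''(\lambda)=-e^{-\lambda}\lambda^{\l-1}/(\l-1)!$. The only cosmetic difference is that the paper obtains these by direct differentiation of the explicit finite-sum formula $g(\lambda)=\l-\sum_{j=0}^{\l-1}(\l-j)e^{-\lambda}\lambda^j/j!$, whereas you invoke the Poisson forward-difference identity; the content is identical.
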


\begin{proof}  
  Let $\lambda = \sum_{i = 1}^n \bprobi$. Notice that $\objp(\bprobs, \l)$ only depends on $\lambda$, which is linear in $\bprobs$. We use $\objp(\lambda)$ to represent this function. Then, we only need to prove that $\objp(\lambda)$ is concave in $\lambda$.
  Rewrite
  \begin{equation*}
    \objp(\lambda) = \l - \sum_{j = 0}^{\l - 1} \Prlong[\sumpois \sim \Pois(\lambda)]{\sumpois = j} \cdot (\l - j) 
    = \l - \sum_{j = 0}^{\l - 1} \frac{\lambda^{j}}{j!} e^{-\lambda} \cdot (\l - j).
  \end{equation*}
  By a straightforward differentiation of the partial series we get
  \begin{equation*}
    \frac{\d}{\d \lambda} \objp = \sum_{j = 0}^{\l - 1} \frac{\lambda^j}{j!} e^{-\lambda}, \quad\quad
    \frac{\d^2}{\d \lambda^2} \objp = -e^{-\lambda} \frac{\lambda^{\l - 1}}{(\l - 1)!} < 0.
  \end{equation*}
  Therefore, $\objp(\bprobs, \l)$ is concave in $\lambda$, and thus concave in $\bprobs$.
\end{proof}

\paragraph{Chernoff Approximation.}
As we mentioned earlier Poisson approximation is useful for small probabilities $\bprobi \le \delta$. 
For another extreme case where $\lambda = \sum_{i=1}^n \bprobi$ is large, due to concentration bounds, Poisson approximation also works well with $O(\lambda^{-1/2})$ relative error, because both $\sum_{i=1}^n \Ber(\bprobi)$ and $\Pois(\lambda)$ concentrate around $\lambda$. To optimize the presentation for an easier understanding, we will use \textbf{Chernoff objective term} $\objc(\bprobs, \l) \defeq \min\bk{\sum_{i=1}^n \bprobi, \l}$ as an alternative of Poisson in this case. It is also concave in $\bprobs$.

In a number of cases $\objp(\bprobs, \l)$ and $\objc(\bprobs, \l)$ are good approximations to
$\objb(\bprobs, \l)$. In \cref{subsec:proof_PoissonApproximationRatio}, we present these approximation guarantees along with a simple algorithm that illustrates our approach in an important special case.

\subsection{Algorithm for Position Auctions}

Below, we consider Bidder Section Problem for position auctions. We first analyze the fractional BSP, and then explain in \cref{sec:rounding} how to do integral rounding of the fractional solution with only a small loss to the approximation guarantee. For the fractional BSP, we give an efficient polynomial time $(1-O(k^{-4}))$-approximation algorithm (see \cref{thm:position} for the exact statement).

Recall that the Bernoulli objective term for position auctions $\objb(\bprobs, \weights)$ is defined in \cref{sec:math_formulation}. The expected welfare $\SW(\vars) = \SW(\vars, \weights)$ is written as an integral of $\objb$. Below, we will also need the Chernoff objective term
\[
  \objc(\bprobs, \weights) \defeq \sum_{\l = 1}^{n} \weightdiff \,\objc(\bprobs, \ell).
\]
We sometimes slightly abuse notations and write $\objb(\vars, \tau)$ and $\objc(\vars, \tau)$ instead of $\objb(\bprobs, \weights)$ and $\objc(\bprobs, \weights)$.

In order to effectively use the Poisson approximation we would like to have the small probability assumption $\bprobi\le\delta$, which is achieved by fixing a small bidder set $\Sfix$ properly (see \cref{subsec:proof_PoissonApproximationRatio} for formal approximation guarantees).

\paragraph{Fixing Small Bidder Set.} We will fix a small set of bidders $\Sfix$ (set $\vari=1$ for $i\in\Sfix$) with $|\Sfix|=\eps\cdot k$ and make sure that all other bidders $i\notin\Sfix$ have only a small probability $\Prx{\vali\ge\tau}\le\delta$ to exceed any of the thresholds $\tau>\eta$ for certain $\eta>0$. This allows us to use Poisson approximation for the high range thresholds $\tau>\eta$ and bidders $i\in M\eqdef[n]\setminus\Sfix$.
On the other hand, for the low range thresholds $\tau\le\eta$, we would like to see a certain number $\lcore$ of bidders $i\in\Sfix$ to exceed the threshold $\vali\ge\tau$. To this end, we choose $\Sfix$ so that the expected number of bidders $i\in\Sfix$ with $\vali\ge\tau$ is at least $\lcore$. We can achieve the following guarantees for $\eps,\delta,$ and $\lcore$.
\begin{claim}[Small Bidder Set]
  \label{cl:core_tail_position}
  Let $\eps \in[\frac{\lcore}{\delta\cdot k},1)$ be a multiple of $1/k$ for $\lcore\in\R_{\ge 0}:\lcore < k$ and $\delta\in(0,1)$. 
  We can find in polynomial time a threshold $\eta \ge 0$ and a set $\Sfix \subseteq [n]$ of size $|\Sfix|=\eps\cdot k$:
\begin{enumerate}
\item[\normalfont(a)] $\displaystyle \forall \, 0 \le \tau \le \eta, \quad \sum_{i \in \Sfix} \Prlong[\vali \sim \disti]{\vali \ge \tau} \ge \lcore$;
\item[\normalfont(b)] $\displaystyle \forall \, i \notin \Sfix, \qquad\; \Prlong[\vali \sim \disti]{\vali > \eta} < \delta.$
\end{enumerate}
\end{claim}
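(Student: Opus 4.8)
The plan is to reduce the choice of $\eta$ to a finite set of candidates and to build $\Sfix$ around the ``heavy'' bidders. Since every $\disti$ has finite support, let $t_0 = 0 < t_1 < \cdots < t_m$ enumerate $\bigcup_{i \in [n]} \supp(\disti)$. When $\eta$ ranges strictly between two consecutive atoms $t_{j-1}, t_j$, neither $\Prx[\vali\sim\disti]{\vali \ge \eta}$ nor $\Prx[\vali\sim\disti]{\vali > \eta}$ changes, and the family of constraints $\{0 \le \tau \le \eta\}$ in (a) changes only when $\eta$ crosses an atom; so it suffices to take $\eta \in \{t_0, \dots, t_m\}$. For such a candidate write $\Sheavy(\eta) \eqdef \{i \in [n] : \Prx[\vali\sim\disti]{\vali > \eta} \ge \delta\}$; note that as long as $\Sheavy(\eta) \subseteq \Sfix$, condition (b) is automatic, so the idea is to always include the heavy bidders and pad up to size $\eps k$.

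Concretely, I would set $\eta \eqdef t_{j^*}$, where $j^*$ is the \emph{smallest} index with $|\Sheavy(t_{j^*})| \le \eps k$ --- this exists because $\Prx[\vali\sim\disti]{\vali > t_m} = 0$. Let $G \eqdef \{i \in [n] : \Prx[\vali\sim\disti]{\vali \ge \eta} \ge \delta\}$; since $\Prx[\vali\sim\disti]{\vali > \eta} \le \Prx[\vali\sim\disti]{\vali \ge \eta}$, we have $\Sheavy(\eta) \subseteq G$. The key claim is $|G| \ge \eps k$: if $j^* \ge 1$, then (because no atom lies strictly between $t_{j^*-1}$ and $t_{j^*}$) $\Prx[\vali\sim\disti]{\vali \ge t_{j^*}} = \Prx[\vali\sim\disti]{\vali > t_{j^*-1}}$, so $G = \Sheavy(t_{j^*-1})$, which has size $> \eps k$ by minimality of $j^*$; if $j^* = 0$, then $G = [n]$ since $\Prx[\vali\sim\disti]{\vali \ge 0} = 1 \ge \delta$, and $|G| = n \ge \eps k$. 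Hence I can choose $\Sfix \eqdef \Sheavy(\eta) \cup \{\text{any } \eps k - |\Sheavy(\eta)| \text{ bidders of } G \setminus \Sheavy(\eta)\}$; this is well-defined because $\eps k - |\Sheavy(\eta)|$ is a non-negative integer ($\eps k \in \Z$ as $\eps$ is a multiple of $1/k$, and $0 \le |\Sheavy(\eta)| \le \eps k$) and $|G \setminus \Sheavy(\eta)| = |G| - |\Sheavy(\eta)| \ge \eps k - |\Sheavy(\eta)|$, and by construction $|\Sfix| = \eps k$ with $\Sfix \subseteq G$.

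It then remains to check the two conditions. Condition (b) holds since $i \notin \Sfix \Rightarrow i \notin \Sheavy(\eta) \Rightarrow \Prx[\vali\sim\disti]{\vali > \eta} < \delta$. For (a), every $i \in \Sfix \subseteq G$ satisfies $\Prx[\vali\sim\disti]{\vali \ge \eta} \ge \delta$, so by monotonicity $\Prx[\vali\sim\disti]{\vali \ge \tau} \ge \Prx[\vali\sim\disti]{\vali \ge \eta} \ge \delta$ for every $\tau \le \eta$, and therefore $\sum_{i \in \Sfix} \Prx[\vali\sim\disti]{\vali \ge \tau} \ge \eps k \cdot \delta \ge \lcore$, the last step being exactly the hypothesis $\eps \ge \lcore/(\delta k)$. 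Computing the supports, the $m+1 = \poly$ counts $|\Sheavy(t_j)|$, the index $j^*$, and the padding of $\Sfix$ are all polynomial-time, which gives the efficiency statement. The one delicate point --- and the step I expect to need the most care when writing up --- is the asymmetry between the strict inequality $\vali > \eta$ governing (b) and the non-strict $\vali \ge \tau$ governing (a): one must place $\eta$ just on the small side of the critical atom, so that $\Sheavy(\eta)$ is already small while $\Prx[\vali\sim\disti]{\vali \ge \eta}$ still meets the threshold $\delta$ for at least $\eps k$ bidders; the integrality of $\eps k$ is exactly what lets the padding close the resulting off-by-one gap.
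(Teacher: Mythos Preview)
Your proposal is correct and follows essentially the same approach as the paper: locate the threshold $\eta$ at the boundary where the count of ``heavy'' bidders (those with $\Prx{\vali>\eta}\ge\delta$) first drops to at most $\eps k$, include all heavy bidders in $\Sfix$, and pad to size $\eps k$ using bidders from $G=\{i:\Prx{\vali\ge\eta}\ge\delta\}$, then verify (a) via $\sum_{i\in\Sfix}\Prx{\vali\ge\tau}\ge|\Sfix|\cdot\delta=\eps k\cdot\delta\ge\lcore$. Your write-up is in fact slightly more careful than the paper's in that you explicitly handle the $j^*=0$ edge case and spell out why $|G|\ge\eps k$ via $G=\Sheavy(t_{j^*-1})$.
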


\begin{proof}
We search through all thresholds $\tau$ in the supports of $\{\disti\}_{i\ge 1}$ and find two consecutive threshold values $\eta$ and $\eta_{+} \! > \eta$ such that $|\{i: \Prx{\vali\ge\eta}\ge\delta\}|\ge \eps\cdot k$, but a similar number of bidders $|\{i: \Prx{\vali>\eta}=\Prx{\vali\ge\eta_{+}}\ge\delta\}|< \eps\cdot k$
for the next value $\eta_{+}$. We place each bidder $i$ with $\Prx{\vali>\eta}\ge\delta$ into $\Sfix$ and fill the remaining positions in $\Sfix$ up to $\eps\cdot k$ with bidders from $\{i: \Prx{\vali\ge\eta}\ge\delta > \Prx{\vali\ge\eta_{+}}\}$.

Thus, every bidder $i\notin\Sfix$ has $\Prx{\vali>\eta}=\Prx{\vali\ge\eta_{+}}<\delta$ as required by condition (b). On the other hand, $|\Sfix|=\eps\cdot k$ and $\Prx{\vali\ge\eta} \ge \delta$ for every $i\in\Sfix$, which implies (a), since $\forall \tau \le \eta$,
\[\sum_{i \in \Sfix} \Prlong[\vali \sim \disti]{\vali \ge \tau}\ge
\sum_{i \in \Sfix} \Prlong[\vali \sim \disti]{\vali \ge \eta}\ge \delta\cdot|\Sfix|=\delta\cdot\eps\cdot k\ge\lcore.
\]
Thus we constructed in polynomial time the desired threshold $\eta$ and set $\Sfix$.
\end{proof}
We need to balance three parameters $\lcore,$ $\delta$, and $\eps$, which must satisfy the conditions of \cref{cl:core_tail_position}. 
Specifically, we choose $\lcore=k^{1/2}$, $\eps=k^{-1/4}$ rounded up to a multiple of $1/k$, and $\delta=\eps \ge k^{-1/4}$. \cref{cl:core_tail_position} leads to \cref{alg:main} (which we will present shortly).

For thresholds $\tau > \eta$, any bidder outside $\Sfix$ has only $\le \delta$ probability to exceed the threshold, which is ideal for applying the Poisson approximation. Therefore, to achieve the approximation guarantees when the probability of exceeding the threshold is low, we recalculate the \textbf{adjusted Poisson objective term} by applying Poisson approximation only on these bidders.
Specifically, we let $M \defeq [n] \setminus \Sfix$ and define
$\Zfix \defeq \sum_{i \in \Sfix} \ind{\vali \ge \tau}$ (as $\vars_{\Sfix}=\onevec_{\Sfix}$, random variable $\Zfix$ has Poisson binomial distribution).
Indeed, we may calculate all probabilities $\Prx{\Zfix=j}$ for each $j\in[0,\eps\cdot k]$ in polynomial time, and define the adjusted Poisson objective term as a conditional expectation depending on $\Zfix$:
\begin{align*}
\objpa(\varsM,\tau) &\eqdef \sum_{j=0}^{|\Sfix|}\Prx{\Zfix=j}\cdot \biggl(
\sum_{\l=1}^{j}\weighti[\l]+{} \\
& \quad\; \sum_{\l=j+1}^{n}\weightdiff\cdot\objp(\varsM, \, \l\!-\!j, \, \tau)\biggr), \numberthis \label{eq:adjusted_poisson} 
\end{align*}
where $\objp(\varsM,\l-j,\tau) =
\Exlong[\sumpois \sim \pois(\lambda_M)]{\min\BK{\sumpois, \, \l-j}}$, and $\lambda_M=\sum_{i\in M}\bprobi(\vari,\tau)$.
For the low-range thresholds $\tau \le \eta$, we use the Chernoff objective $\objc(\vars,\tau)=\objc(\bprobs(\vars, \tau), \weights)$ to approximate $\objb(\vars,\tau)=\objb(\bprobs(\vars, \tau), \weights)$ for $\vars=(\varsM, \onevec_{\Sfix})$. Importantly, unlike the high-range thresholds, we do not recalculate $\objc$ as a function of $\varsM$, but use Chernoff approximation for the entire $n$-dimensional vector $\vars=(\varsM, \vars_{\Sfix})$ with $\vars_{\Sfix}=\onevec_{\Sfix}$. Thus, as $\lcore = k^{1/2}$ grows with $k$, $\objc(\vars,\tau)\to\objb(\vars,\tau)$.

\paragraph{Algorithm.}
\label{subsec:position_algo}

Our main algorithm is summarized as Algorithm~\ref{alg:main}.
\begin{figure}[ht]
  \centering
  \begin{tcolorbox}[left=2mm, right=2mm, top=-2mm, bottom=2mm]
    \captionof{algocf}{Fractional BSP for Position Auctions}{\label{alg:main}}
    Let $\lcore = k^{1/2}$; $\eps$ be $k^{-1/4}$ rounded up to a multiple of $1/k$ $\big(\eps=\frac{\lceil k\cdot k^{-1/4} \rceil}{k}\big)$; $\delta=\eps$.
    \begin{enumerate}[(1)]
    \item Find $\eta$ and $\Sfix$ according to \cref{cl:core_tail_position}. Set $\vari = 1$ for $\forall i \in \Sfix$. Let $M\eqdef [n]\setminus\Sfix$.
    \item Define the approximate welfare using adjusted Poisson objective \eqref{eq:adjusted_poisson}:
      \begin{align}
        \SWmodi(\varsM) & \defeq \int_{0}^{\eta} \objc((\varsM, \onevec_{\Sfix}),\tau) \,\d\tau \notag \\
        &\quad\  + \int_{\eta}^{+\infty} \objpa\bk{\varsM, \tau} \,\d\tau,
        \numberthis \label{eq:def_swm}
      \end{align}
    \item Return $\tilde{\vars}^* = (\varsMWStar, \onevec_{\Sfix})$, where $\varsMWStar$ is the solution to the concave program in $\varsM$:
      \[
        \hspace{-0.5em} \begin{array}{lll}
          \text{Maximize} & \SWmodi(\varsM) \\
          \text{Subject To} & \sum_{i \in M}\vari \le k -\eps\cdot k, \;\; \vari \in [0,1] \; \forall i \in M.
        \end{array}
        \hspace{-2em} \numberthis \label{eq:program_swm}
      \]
    \end{enumerate}
  \end{tcolorbox}
\end{figure}
\begin{restatable}{theorem}{MainAlg}
  \label{thm:position}
  \cref{alg:main} works in polynomial time and is a $(1-O(\eps))$-approximation, i.e.,
  $\midbk{1 - \TwiceMaxFactorPlusOne \, k^{-1/4}}$-approximation to the fractional BSP for any position auction.
\end{restatable}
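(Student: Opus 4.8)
The plan is to prove the two assertions of the theorem — polynomial running time and the $\bigl(1-\TwiceMaxFactorPlusOne\,k^{-1/4}\bigr)$-approximation — separately; the first is routine, the second reduces, after standard submodular-optimization manipulations, to a single two-sided comparison between the adjusted objective $\SWmodi$ and the true expected welfare $\SW$, which is where essentially all the work lies. Throughout, let $\opt_f$ denote the optimum of the fractional program~\eqref{eq:pos_original_program}, attained at some $\varsOpt$ (a maximizer exists because $\SW(\cdot,\weights)$ is a polynomial and the feasible region is compact), and recall that for the choices $\lcore=k^{1/2}$, $\delta=\eps=\lceil k^{3/4}\rceil/k$ of \cref{alg:main} the hypotheses of \cref{cl:core_tail_position} hold (in particular $\eps\ge k^{-1/4}=\lcore/(\delta k)$).

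\textbf{Polynomial time.} Step~(1) is polynomial by \cref{cl:core_tail_position}. Next, $\SWmodi(\varsM)$ is concave in $\varsM$: the integrand $\objc((\varsM,\onevec_{\Sfix}),\tau)$ is a minimum of affine functions of $\varsM$, hence concave; in \eqref{eq:adjusted_poisson} each $\objp(\varsM,\l-j,\tau)$ is concave in the affine quantity $\lambda_M=\sum_{i\in M}\bprobi(\vari,\tau)$ by \cref{cl:objp_concave}, and these are combined with the nonnegative coefficients $\Prx{\Zfix=j}$ and $\weightdiff$; concavity is preserved under the $\tau$-integral. Both integrands are piecewise constant in $\tau$, with breakpoints only in $\bigcup_i\supp(\disti)$ and vanishing for $\tau$ above $\max\bigcup_i\supp(\disti)$, so each integral is a sum of polynomially many pieces; on each piece the Poisson binomial weights $\Prx{\Zfix=j}$ ($0\le j\le\eps k$) are obtained by the standard $O(|\Sfix|^2)$ dynamic program and every $\objp(\varsM,m,\tau)=\Ex[\sumpois\sim\pois(\lambda_M)]{\min(\sumpois,m)}$ is a truncated Poisson series. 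Hence $\SWmodi$ and a supergradient are polynomial-time computable, and the concave program~\eqref{eq:program_swm} over a polytope is solved in polynomial time by a standard convex solver (ellipsoid / interior point), up to an additive error we take small enough to be absorbed into the final $O(k^{-1/4})$ slack.

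\textbf{Approximation: reduction to the main lemma.} The crux is the two-sided \emph{main lemma}: for every $\varsM$ feasible for \eqref{eq:program_swm}, writing $\vars=(\varsM,\onevec_{\Sfix})$,
\[
  (1-O(\eps))\,\SW(\vars,\weights)\;\le\;\SWmodi(\varsM)\;\le\;(1+O(\eps))\,\SW(\vars,\weights),
\]
the $O(\cdot)$ constant being (roughly) $\MaxFactor$ per side. Granting it, let $\bar{\vars}\in[0,1]^M$ be the restriction of $\varsOpt$ to $M$ and put $\hat{\vars}:=(1-\eps)\bar{\vars}$; then $\hat{\vars}$ is feasible for \eqref{eq:program_swm} since $\sum_{i\in M}(1-\eps)(\varsOpt)_i\le(1-\eps)k=k-\eps k$ and all its coordinates lie in $[0,1-\eps]$. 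Because $\SW(\cdot,\weights)$ is the multilinear extension of a monotone submodular set function, it is coordinatewise nondecreasing and concave along every nonnegative direction; evaluating the concave map $t\mapsto\SW\bigl((\vect{0}_M,\onevec_{\Sfix})+t(\bar{\vars},\vect{0}_{\Sfix}),\weights\bigr)$ at $t=1-\eps$ against $t\in\{0,1\}$ (using $\SW((\vect{0}_M,\onevec_{\Sfix}),\weights)\ge 0$) and then applying monotonicity (as $(\bar{\vars},\onevec_{\Sfix})\ge\varsOpt$ coordinatewise) gives
\[
  \SW((\hat{\vars},\onevec_{\Sfix}),\weights)\;\ge\;(1-\eps)\,\SW((\bar{\vars},\onevec_{\Sfix}),\weights)\;\ge\;(1-\eps)\,\opt_f .
\]
Applying the main lemma at $\varsMWStar$ (for the first inequality below) and at $\hat{\vars}$ (for the third), together with the optimality of $\varsMWStar$ for \eqref{eq:program_swm},
\begin{align*}
  \SW(\tilde{\vars}^*,\weights)
  &\;\ge\;\frac{\SWmodi(\varsMWStar)}{1+O(\eps)}
  \;\ge\;\frac{\SWmodi(\hat{\vars})}{1+O(\eps)}\\
  &\;\ge\;\frac{1-O(\eps)}{1+O(\eps)}\,\SW((\hat{\vars},\onevec_{\Sfix}),\weights)
  \;\ge\;\frac{(1-O(\eps))(1-\eps)}{1+O(\eps)}\,\opt_f .
\end{align*}
As $\tilde{\vars}^*$ is feasible for \eqref{eq:pos_original_program} (its coordinates sum to $\le\eps k+(k-\eps k)=k$), the right-hand side equals $(1-O(\eps))\,\opt_f$; carefully accounting for the constants (the two sides of the main lemma contribute $\TwiceMaxFactor\,\eps$, the scaling a further $\eps$, and $\eps=(1+o(1))k^{-1/4}$) yields the claimed $\bigl(1-\TwiceMaxFactorPlusOne\,k^{-1/4}\bigr)$-approximation.

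\textbf{Proof of the main lemma — the hard part.} Split the integral defining $\SW(\vars,\weights)$ at $\tau=\eta$ and compare each range with the matching part of $\SWmodi(\varsM)$ in \eqref{eq:def_swm}. For $\tau\le\eta$, \cref{cl:core_tail_position}(a) forces $\lambda:=\sum_{i\in[n]}\bprobi(\vari,\tau)\ge\sum_{i\in\Sfix}\bprobi(1,\tau)\ge\lcore=k^{1/2}$ (regardless of $\varsM$), so the Chernoff/concentration estimates of \cref{subsec:proof_PoissonApproximationRatio} give $\objc(\vars,\tau)=\bigl(1\pm O(\lambda^{-1/2})\bigr)\objb(\vars,\tau)=\bigl(1\pm O(k^{-1/4})\bigr)\objb(\vars,\tau)$ uniformly in $\tau$, which integrates to the bound on $[0,\eta]$. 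For $\tau>\eta$, \cref{cl:core_tail_position}(b) gives $\bprobi(\vari,\tau)\le 1-\disti(\tau)<\delta$ for every $i\in M$; conditioning on $\Zfix=j$ — which $\objpa(\varsM,\tau)$ treats exactly, filling the top $j$ slots with weights $\weighti[1],\dots,\weighti[j]$ — the residual contribution of the $M$-bidders equals $\objb$ of a Poisson binomial all of whose parameters are $<\delta$ and whose mean is $\lambda_M$, which the small-probability (Le Cam) and large-$\l$ estimates of \cref{subsec:proof_PoissonApproximationRatio} place within a $(1\pm O(\delta))=(1\pm O(k^{-1/4}))$ factor of its Poisson surrogate $\objp(\varsM,\l-j,\tau)$; summing against $\weightdiff$, averaging over $j$, and integrating gives the bound on $(\eta,+\infty)$. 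I expect this lemma to be the genuine obstacle: the delicate point is making the per-threshold Poisson and Chernoff errors \emph{multiplicative} rather than additive — in particular, when the number of active slots $\l$ is large while the relevant mean lies far below $\l$, controlling that contribution without the crude total-variation bound (which scales with $\l$) — and stitching together the three regimes described in \cref{sec:overview}; the remaining ingredients (concavity, feasibility, monotonicity, scaling, constant bookkeeping) are routine.
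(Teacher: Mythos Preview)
Your proposal is correct and matches the paper's proof closely: the polynomial-time argument is the same, your ``main lemma'' is exactly the paper's \cref{lem:sw_position} (with constant $\MaxFactor$), and your sketch of its proof --- \cref{lem:chernoff_objective_ratio}(b) on $[0,\eta]$ via $\lambda\ge\lcore$, and the Le~Cam--Poisson bound \cref{lem:objp_ratio}(a) on $(\eta,\infty)$ after conditioning on $\Zfix$ --- is precisely what the paper does. The only cosmetic difference is in the scaling step: the paper passes from $\optpx$ to $(1-\eps)\optpx$ using concavity of $\SWm$, whereas you scale first using concavity of the multilinear extension $\SW$ along the nonnegative direction $(\bar{\vars},\vect{0}_{\Sfix})$ and then apply the main lemma; both orders are valid and yield the identical final constant $\TwiceMaxFactorPlusOne$.
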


The proof of this theorem is deferred to \cref{app:proof_of_main_alg}.

\subsection{Rounding}
\label{sec:rounding}
We conclude Section~\ref{sec:position_auction} by presenting the rounding algorithm, which takes our solution $\tilde{\vars}^*$ to the fractional BSP produced by \cref{alg:main} and returns a solution to the integral BSP. 

Our fractional relaxation works as the standard multi-linear extension of submodular functions, which corresponds to sampling a random set of bidders $S\sim\prod_{i=1}^{n}\Ber(\vari)$ in the integral BSP. To align the notations for fractional and integral BSPs, we shall use vectors $\varss,\varsi\in\{0,1\}^n$ for the respective sets of selected bidders.
Specifically, we use $\varss\sim\Ber(\vars)$ to represent the random set $S$ in the multi-linear extension. Our rounding procedure is in \cref{alg:int}.

\begin{figure}[ht]
  \centering
  \begin{tcolorbox}[left=2mm, right=2mm, top=-2mm, bottom=2mm]
    \captionof{algocf}{Rounding: algorithm for Integral BSP}{\label{alg:int}}
    \begin{enumerate}[(1)]
    \item Run \cref{alg:main} to obtain a fractional solution $\varsf$.
    \item Sample an integral solution $\varss\sim\Ber(\varsf)$, with $\varss\in\{0,1\}^{n}$.
    \item 
        \begin{itemize}
            \item If $|\varss|_1\le k$, return $\varsi=\varss$, 
            \item Else ($|\varss|_1>k$), return $k$ bidders $\varsi\sim\binom{\varss}{k}$ chosen uniformly at random from $\varss$.
        \end{itemize}
    \end{enumerate}
  \end{tcolorbox}
\end{figure}

\begin{restatable}{theorem}{final}
  \label{thm:final}
  \cref{alg:int} works in polynomial time and in expectation is a 
  $\bigbk{1 - \TwiceMaxFactorPlusOne \, k^{-1/4} - O(k^{-1/2})}$-approximation to the integral BSP for any position auction.
\end{restatable}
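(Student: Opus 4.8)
The plan is to bound the loss incurred by the rounding procedure (Algorithm~\ref{alg:int}) relative to the fractional solution $\varsf = \tilde{\vars}^*$ produced by Algorithm~\ref{alg:main}, and then combine this with the $(1-O(k^{-1/4}))$ guarantee of Theorem~\ref{thm:position}. The key observation is that the BSP objective $\SW(\cdot,\weights)$ is a monotone submodular function of the selected set, and its multi-linear extension evaluated at $\varsf$ is exactly $\Exlong[\varss\sim\Ber(\varsf)]{\SW(\varss,\weights)}$. So the only two sources of loss are: (i) the gap between the \emph{approximate} welfare $\SWm(\varsf)$ optimized by Algorithm~\ref{alg:main} and the true multi-linear value $\SW(\varsf,\weights)$, which is already absorbed into Theorem~\ref{thm:position}; and (ii) the loss from the ``trimming'' step, where if the sampled set $\varss$ has more than $k$ elements we drop down to a uniformly random $k$-subset.

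First I would set up the chain of inequalities. Let $\varsf$ be the fractional solution; by Theorem~\ref{thm:position}, $\SW(\varsf,\weights) \ge \bigbk{1 - \TwiceMaxFactorPlusOne\,k^{-1/4}}\cdot\opt(\weights)$, since the fractional optimum is at least the integral optimum. Now $\SW(\varsf,\weights) = \Exlong[\varss\sim\Ber(\varsf)]{\SW(\varss,\weights)}$ by definition of the multi-linear extension. It remains to argue that $\Exlong{\SW(\varsi,\weights)} \ge (1 - O(k^{-1/2}))\cdot\Exlong{\SW(\varss,\weights)}$, where $\varsi$ is the output after trimming. Conditioned on $|\varss|_1 \le k$ we have $\varsi = \varss$ and nothing is lost. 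Conditioned on $|\varss|_1 = k' > k$, we pass to a uniformly random $k$-subset; here I would invoke a standard fact about monotone submodular functions — a uniformly random $k$-subset of a $k'$-set retains in expectation at least a $k/k'$ fraction of the set's value (this follows from the concavity-along-the-line / correlation-gap style argument, or directly since the expected marginal of each retained element dominates). Since $\varsf$ has $|\varsf|_1 \le k$, the random variable $|\varss|_1$ is a sum of independent Bernoullis with mean $\le k$, so by a Chernoff bound $\Prx{|\varss|_1 > k(1+t)}$ is exponentially small in $k t^2$, and more importantly $\Exlong{\max(|\varss|_1 - k, 0)} = O(\sqrt{k})$ by a standard mean-deviation estimate for sums of independent Bernoullis. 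Combining, the expected multiplicative loss from trimming is $O(\Exlong{\max(|\varss|_1-k,0)}/k) = O(\sqrt{k}/k) = O(k^{-1/2})$.

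The main technical point — and the step I expect to require the most care — is making the submodularity argument for the trimming loss precise and uniform over the conditioning on $|\varss|_1 = k'$. One clean way is: rather than analyzing trimming as a separate post-hoc step, observe that the whole procedure (sample $\Ber(\varsf)$, then trim to $\le k$) is equivalent in distribution to a mildly different randomized rounding whose marginals are $x_i' \le x_i$ with $\sum x_i' $ only slightly below $\sum x_i$; then monotonicity of $\SW$ and the correlation gap ($\Exlong[\varss\sim\Ber(\vars')]{\SW(\varss)} \ge (1-1/e)\,F^{+}(\vars')$ is \emph{not} what we want — we want the much tighter statement that $\SW$ of the rounded set is close to $\SW(\varsf)$, which holds because we are comparing to the multi-linear value itself, not to the concave closure). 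I would instead argue directly: write $\Exlong{\SW(\varsi)} = \Exlong{\SW(\varss)} - \Exlong{\ind{|\varss|_1 > k}\cdot(\SW(\varss) - \SW(\varsi))}$, bound $\SW(\varss) - \SW(\varsi) \le \SW(\varss)\cdot(1 - k/|\varss|_1) \le \SW(\varss)\cdot(|\varss|_1 - k)/k$ using monotone submodularity conditioned on $\varss$, and then bound the resulting expectation using the Cauchy–Schwarz inequality together with $\Exlong{(|\varss|_1-k)^2\mid|\varss|_1>k} = O(k)$ and a crude uniform bound $\SW(\varss) \le \opt(\weights)\cdot\poly$ — or, cleaner, using that $\SW(\varss)$ for $|\varss|_1 > k$ is still $O(1)\cdot\opt(\weights)$ because adding bidders beyond the best $k$ only increases welfare by submodular marginals that telescope. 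Once this $O(k^{-1/2})$ trimming loss is established, the theorem follows by multiplying the two factors $\bigbk{1 - \TwiceMaxFactorPlusOne\,k^{-1/4}}$ and $\bigbk{1 - O(k^{-1/2})}$ and noting that polynomial running time is immediate since sampling and trimming are trivial operations on top of the polynomial-time Algorithm~\ref{alg:main}.
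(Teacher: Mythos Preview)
Your approach is essentially identical to the paper's: invoke Theorem~\ref{thm:position} for the fractional guarantee, use that the multi-linear extension equals $\Ex[\varss\sim\Ber(\varsf)]{\SW(\varss,\weights)}$, apply the submodular fact $\Ex[\varsi\sim\binom{\varss}{k}]{\SW(\varsi)}\ge(k/|\varss|_1)\,\SW(\varss)$ for the trimming step, and bound $\Ex{(|\varss|_1-k)^+}=O(\sqrt{k})$ via Chernoff.

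There is, however, one soft spot in your execution. After writing
\[
\Ex{\SW(\varss)-\SW(\varsi)}\le \Ex{\ind{|\varss|_1>k}\cdot \SW(\varss)\cdot\tfrac{|\varss|_1-k}{|\varss|_1}},
\]
you propose to control $\SW(\varss)$ either by Cauchy--Schwarz with a crude $\SW(\varss)\le\opt\cdot\poly$ bound, or by the claim ``$\SW(\varss)=O(1)\cdot\opt$''. Neither works as stated: the first is too lossy (it loses a polynomial factor), and the second is false for general monotone submodular functions---subadditivity only gives $\SW(\varss)\le\lceil|\varss|_1/k\rceil\cdot\opt$, which is not $O(1)$ on the tail. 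The clean fix, which is exactly what the paper does and for which you already have all the ingredients, is to reuse the \emph{same} submodularity inequality in the other direction: since $\opt\ge\Ex[\varsi\sim\binom{\varss}{k}]{\SW(\varsi)}\ge(k/|\varss|_1)\,\SW(\varss)$, you get $\SW(\varss)/|\varss|_1\le\opt/k$, and hence
\[
\SW(\varss)\cdot\tfrac{|\varss|_1-k}{|\varss|_1}\le \opt\cdot\tfrac{|\varss|_1-k}{k}.
\]
Now the expectation factors through $\opt$, leaving only $\tfrac{1}{k}\Ex{(|\varss|_1-k)^+}=\tfrac{1}{k}\sum_{i\ge 1}\Pr[|\varss|_1\ge k+i]=O(k^{-1/2})$, precisely the Chernoff estimate you already identified.
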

The proof of \cref{thm:final} is deferred to \cref{app:position}.
Since $\SW(\varsi)\leq\opt$, by running the rounding algorithm a few times and taking the best produced solution, we get a slightly worse approximation guarantee of
$(1 - O(k^{-1/4})) \opt$
with high probability.

\section{Numerical Experiments}
\label{sec:experiments}

We focus on testing welfare maximization Bidder Selection Problem for position auctions. In our experiments, we used synthetically generated prior distributions, as (1) the BSP is a pure optimization problem, which ignores the issue of data retrieval (2) due to companies'  
strict nondisclosure rules, it is infeasible to experiment on real historical data. We generally followed the AuctionGym~\cite{jeunen2023off} setup, a popular online simulation environment for advertising auctions run by Amazon, in the design of our synthetic data.

\paragraph{Implementation.} 
We implemented \cref{alg:int,alg:main} as well as its modified version for BSP. In this simpler modification, we used a slightly different objective $\SWmodi(\vars) \defeq \int_{0}^{+\infty}  \objp\midbk{\vars, \tau} \, \d\tau$ than \eqref{eq:def_swm}: we did not fix any small bidder set, only applied Poisson approximation, and used the same rounding step. The original \cref{alg:main} was designed with the worst-case theoretical guarantees in mind, while the modified one is more practically oriented and retains only the most important Poisson approximation. The modification did not affect the run-time much, but allowed us to avoid hard-coded approximation loss of $O(k^{-1/4})$ due to the potentially suboptimal decision of fixing a small bidder set $\Sfix$. We implemented the practical variant in Python with the help of Gurobi \cite{gurobi}, a well-known commercial convex optimization solver, and present its comparison\footnote{We also compared the performances of the modified version and the theoretical version of our algorithm in \cref{subsec:theoretical_vs_modified}. The approximation efficiency of the theoretical version was worse than the modified version as expected.} with benchmark algorithms in \cref{table:main_experiments}.

\begin{table*}[htbp]
	\caption{Experimental results of Local Search, Greedy, and our algorithm. The ``solution'' column for each algorithm denotes the average relative quality of the produced solution to that of the best-performing algorithm that terminated in $1$ week. Error bars denote the standard deviation. The ``time'' column denotes the average running time of each algorithm. 
 }
	\label{table:main_experiments}
	\centering
	\begin{tabular}{|c|c||c|c||c|c||c|c|}
		\hline
		\multicolumn{2}{|c||}{Setting} & \multicolumn{2}{c||}{Local Search} & \multicolumn{2}{c||}{Greedy} & \multicolumn{2}{c|}{Our Algorithm}  \\
		\hline
		$n$ & $k$ & Solution & Time & Solution & Time & Solution & Time \\
		\hhline{|========|}
		\multirow{3}{*}{50} & 5 & 100.00\% ± 0.00\% & 1.01 seconds & 98.93\% ± 0.50\% & \textbf{0.21 seconds} & 99.99\% ± 0.03\% & 1.29 seconds \\
		\cline{2-8}
		& 10 & 100.00\% ± 0.00\% & 6.05 seconds & 98.71\% ± 0.38\% & \textbf{1.11 seconds} & 99.99\% ± 0.04\% & 1.59 seconds \\
		\cline{2-8}
		& 20 & 100.00\% ± 0.00\% & 65.23 seconds & 99.17\% ± 0.28\% & 9.02 seconds & 99.99\% ± 0.01\% & \textbf{2.58 seconds} \\
		\hhline{|========|}
		\multirow{3}{*}{200} & 10 & 100.00\% ± 0.00\% & 1.02 minutes & 98.06\% ± 0.35\% & 8.10 seconds & 99.99\% ± 0.01\% & \textbf{2.83 seconds} \\
		\cline{2-8}
		& 20 & 100.00\% ± 0.00\% & 10.83 minutes & 97.92\% ± 0.24\% & 44.72 seconds & 100.00\% ± 0.00\% & \textbf{3.88 seconds}  \\
		\cline{2-8}
		& 40 & 100.00\% ± 0.00\% & 60.66 minutes & 97.97\% ± 0.26\% & 170.29 seconds & 99.99\% ± 0.00\% & \textbf{3.97 seconds} \\
		\hhline{|========|}
		\multirow{3}{*}{1000} & 50 & 100.00\% ± 0.00\% & 33.71 hours & 97.18\% ± 0.14\% & 0.90 hours & 99.99\% ± 0.00\% & \textbf{21.54 seconds} \\
		\cline{2-8}
		& 100 & N/A & $>1$ week & 97.15\% ± 0.15\% & 4.26 hours & 100.00\% ± 0.00\% & \textbf{41.21 seconds} \\
		\cline{2-8}
		& 200 & N/A & $>1$ week & 97.38\% ± 0.11\% & 27.65 hours & 100.00\% ± 0.00\% & \textbf{45.16 seconds} \\
		\hline
	\end{tabular}
\end{table*}

\paragraph{Benchmarks.} Ideally, we would like to compare our solutions to the optimum, which is usually not possible, as BSP is an NP-hard problem even for the case of single-item auction~\cite{GoelGM10}. It is also infeasible to use any of the existing PTAS algorithms, as only~\cite{chen2016combinatorial} implemented their PTAS but could only run experiments on tiny input sizes of $(n=9,k=3)$, while EPTASes of~\cite{mehta2020hitting, SegevS21} are pure computational complexity results with unrealistically large estimates on running times for inputs of any size. Instead, we used two well-known heuristic algorithms as our benchmarks: \textbf{Greedy} for submodular maximization following numerical experiments in~\cite{chen2016combinatorial,mehta2020hitting}, and \textbf{Local Search} mentioned in~\cite{BeiGLT22}. They are easy to implement and run in feasible times on most of our datasets. Note that if Local Search starts with the solution produced by Greedy, it can only improve upon it, i.e., it seems reasonable to use Local Search as a main reference point for approximation efficiency guarantees. 
We also tested Local Search against the exact optimum computed by Brute Force on small instances $(n=50,k=5)$ and found that they always produced the same results.
We implemented Greedy, Local Search, and Brute Force in Python to ensure a fair comparison with our algorithm. We also limited the number of threads used by our algorithm to $1$, as the benchmark algorithms are not parallelizable.

\paragraph{Datasets.} We generated each of the $n$ prior distributions as a log-normal distribution $\mathrm{Lognormal}(\mu,\sigma^2)$ as in AuctionGym~\cite{jeunen2023off}. 
We selected parameters $\mu$ and $\sigma$ of each individual distribution by drawing them independently from continuous uniform distributions $\mathcal U[0,0.2]$ and $\mathcal U[0,0.5]$, respectively.\footnote{
AuctionGym uses comparable $\mu = 0.1$ and $\sigma = 0.2$.} 
We then discretized each distribution to a common, finite support $\{0\}\cup\{1 + \frac{i}{50}\mid i=0,1,\dots,49\}$ by moving probability mass on each discretized interval inside $[0, 2]$ to its left point and by redistributing the mass on $(2, +\infty)$ to the discrete points, proportional to their respective probabilities.
The weights $\weights$ of the position auction on each instance were set as: $\weighti=1$ for $i\in [1,0.2k]$, $\weighti=0.2$ for $i\in (0.2k,0.6k]$, and $\weighti=0$ for $i\in (0.6k,k]$.
We constructed datasets with $3$ different $n \in \{50, 200, 1000\}$. For each $n$, we used $3$ different values of $k$: for $n=50$, we set $k\in \{5,10,20\}$; for $n=200$, we set $k\in \{10,20,40\}$; and for
$n=1000$, we set $k\in \{50,100,200\}$. The general idea was to capture practically relevant scenarios of different scales, and also have our benchmarks solve them in a reasonable time. Moreover, we picked $k$ so that it is always significantly smaller than $n$.
Note that this puts our algorithm at a disadvantage, as our Poisson relaxation gets more accurate as $k$ grows.

\paragraph{Results.}
The numerical experiments are given in \cref{table:main_experiments}. We ran Local Search, Greedy, and our algorithm on all $9$ combinations of $n$ and $k$. We recorded the approximate efficiency (``solution'' column) and the running time of each algorithm
(if an algorithm did not terminate in $1$ week, we would stop it and write ``N/A'' for the respective dataset). 
We measure efficiency as the relative quality of the produced solution with respect to the solution of the best-performing algorithm that terminated in $1$ week on that test case. 
\begin{enumerate}[\hspace{1em}(1)]
    \item As shown in~\cref{table:main_experiments}, Greedy performs surprisingly well: on each generated dataset, it produced solutions within 5\% of the best-performing algorithm. It is much better than the $(1-1/e)$-approximation guarantee for general submodular maximization. Similar observations have been made in~\cite{mehta2020hitting}.
    \item Our modified algorithm produced solutions that were always within 0.1\% of the solution of the best-performing algorithm and also had very small variance. I.e., our algorithm is effective and consistently produces good results. Moreover, its effectiveness improves as the problem size $k$ grows, which concurs with our theoretical analysis.
    \item The running time is the most crucial parameter in the context of BSP, as the optimization algorithm must stop within strict time limits and approximate efficiency is only a secondary objective. According to~\cref{table:main_experiments},
    the running time of our algorithm scales much slower than that of Greedy and Local Search with $n$ and $k$.  
    The time complexities of Greedy and one iteration of Local Search are both $O(n k^3\cdot |\mathrm{Support}|)$. When $k$ is constant ($k=5$), this time complexity is linear in $n$, but even then our algorithm has comparable running time and when $k = \Theta(n)$, it is much faster than the benchmarks.
\end{enumerate}

\section{Concluding Remarks}
\label{sec:conclusions}
In this paper, we studied a more general setting of position auction than all previous work on the Bidder Selection Problem. We proposed a new relaxation that can be solved in time polynomial in $n$ and $k$, and the polynomial is rather small. The proposed Poisson approximation approach is much simpler than previous PTAS \emph{complexity results} for $k$-MAX or non-adaptive probing, and it can be implemented in practice. We also did extensive numerical experiments on inputs with practically relevant sizes and observed that our algorithm outperforms some commonly used heuristics, such as Greedy for general submodular maximization.
Furthermore, we showed that the Poisson approximation approach also yields good theoretical guarantees. Namely, that BSP becomes solvable in polynomial time for any fixed $\eps$, when the problem size $k$ grows. Our algorithm is the first one with a nearly perfect efficiency guarantee of PTAS, that is relevant in the application domain of the BSP and can be used by a company. Indeed, all previous PTASes had enormous running times and high implementation complexities that made them completely irrelevant to the problem, which was motivated by getting a speedup of $n/k$ magnitude.

A natural next step would be to consider BSP of various auction environments under richer sets of feasibility constraints such as matroid, matching, and intersection of matroids. Another interesting direction is to identify conditions where it is possible to efficiently optimize the revenue of BSP for VCG/GSP auction formats.

\bibliographystyle{ACM-Reference-Format}
\balance
\bibliography{ref.bib}

\newpage
\appendix

\onecolumn

\section{Poisson and Chernoff Approximation of Bernoulli Objective}
\label{subsec:proof_PoissonApproximationRatio}
\label{sec:appendix_poisson}
 In this section, we present approximation guarantees for Bernoulli objective term by the Poisson and Chernoff objectives.
Poisson approximation is the central idea of this paper. It is defined in \cref{sec:position_auction} as:
\[
    \objp(\bprobs, \l) \eqdef
    \Exlong[\sumpois \sim \pois(\sum_{i=1}^n \bprobi)]{\min\bk{\sumpois, \, \l}}.
\]
 We also 
 will solve a special case of fractional Bidder Selection Problem for $\l$-unit auction to demonstrate how Poisson approximation can be useful. In particular, we will assume that each bidder's value has only $\le \delta$ probability to be nonzero. Our main goal here will be to illustrate our approach and analysis ideas rather than to derive independent results for this special case.

\paragraph{Approximation Guarantees.}
There are many known Poisson approximation results (see, e.g., a survey~\cite{Novak19}) for the sum of independent Bernoulli random variables, e.g., in total variation, earth mover's, uniform (a.k.a.~Kolmogorov) distances. These are typically absolute approximation guarantees, while we need relative approximations similar to Chernoff approximations from \cref{lem:chernoff_objective_ratio}. As our goal is to handle small probability tail events, we assume that each bidder's value $\vali$ has only a small probability $\delta$ to be greater than zero, i.e., $\forall i \in [n], \; \Prx{\vali \! > \! 0} \le \delta$. The following Poisson absolute approximation result will be useful to us:

\begin{restatable}[\textup{\cite[Lemma 11.3.V, p$.$\,162]{daley2008pointprocesses}}]{lemma}{CiteCombined}
  \label{lem:cite_combined}
  Let $(z_i\sim\ber(q_i))_{i=1}^n$ be $n$ independent Bernoulli random variables with $q_i\le\delta$ for $\forall i \in [n]$. Let $Z \eqdef \sum_{i=1}^n z_i$ and $\sumpois \sim \Pois(\lambda)$, where $\lambda \defeq \sum_{i=1}^n \bprob_i$.
  Then 
  \begin{align*}
    \textup{(total variation distance)} \quad
    \sum_{j = 0}^{\infty} \absFix{ \Pr[Z = j] - \Pr[\sumpois = j]} \le \delta.
    \numberthis \label{eq:cite_2}
  \end{align*}
\end{restatable}

With \cref{lem:cite_combined} we can derive relative approximations:

\begin{restatable}{lemma}{PoissonApproximationRatio}
  \label{lem:objp_ratio}
  Suppose $\bprobs \in [0, \delta]^n$ and $\ell\in\N$. Then
  \begin{align*}
    \textup{(a)} &&& \absFix{\objb(\bprobs, \l) - \objp(\bprobs, \l)} \le \LeadingFactorDelta \cdot \delta \cdot \objb(\bprobs, \l) && (\forall \delta, \, \ell), \\
    \textup{(b)} &&& 0 \le \objb(\bprobs, 1) - \objp(\bprobs, 1) \le  \delta \cdot \objb(\bprobs, 1) && (\ell=1).
  \end{align*}
\end{restatable}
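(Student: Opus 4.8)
The plan is to derive both bounds from the total-variation estimate in \cref{lem:cite_combined} by relating $\objb(\bprobs,\l)$ and $\objp(\bprobs,\l)$ to the distributions of $Z$ and $\sumpois$ and then bounding the difference by a quantity that is itself controlled by $\objb(\bprobs,\l)$. First I would use the standard "tail" rewriting (already used in the proof of \cref{cl:objp_concave}): for any nonnegative integer-valued random variable $W$, $\Ex{\min(W,\l)} = \sum_{t=1}^{\l}\Pr[W\ge t]$, so that
\[
  \objb(\bprobs,\l)-\objp(\bprobs,\l)
  = \sum_{t=1}^{\l}\bk{\Pr[Z\ge t]-\Pr[\sumpois\ge t]}.
\]
Each summand is bounded in absolute value by the total variation distance $d_{\mathrm{TV}}(Z,\sumpois)\le\delta$ from \cref{lem:cite_combined}, which gives the crude bound $\absFix{\objb(\bprobs,\l)-\objp(\bprobs,\l)}\le\l\delta$. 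This is \emph{not} yet a relative bound, so the real work is to show $\l\delta = O(\delta)\cdot\objb(\bprobs,\l)$, i.e.\ that $\objb(\bprobs,\l)\ge\l/\LeadingFactorDelta$ — or, more precisely, to get a relative bound even when $\objb(\bprobs,\l)$ is small.

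The key observation for the relative bound is that when $\lambda=\sum_i\bprobi$ is small compared to $\l$, both $\Pr[Z\ge t]$ and $\Pr[\sumpois\ge t]$ decay rapidly in $t$, so only the first few terms $t=O(1)$ of the sum matter, and for those terms the difference $\absFix{\Pr[Z\ge t]-\Pr[\sumpois\ge t]}$ can be bounded by something proportional to $\Pr[Z\ge 1]\le\objb(\bprobs,\l)$ times $\delta$. Concretely, I would split into two regimes. In the regime $\lambda\ge c$ for a suitable constant $c$ (say $c=1$), one has $\objb(\bprobs,\l)=\Ex{\min(Z,\l)}\ge$ a constant fraction of $\min(\lambda,\l)$ by a reverse-concentration/anti-concentration argument, and since $\min(\lambda,\l)\ge\min(1,\l)=1$ while the total difference is at most $\l\delta$... hmm, this still has an $\l$; so instead, in this regime I would bound the difference more carefully using that $d_{\mathrm{TV}}$ already captures the \emph{full} discrepancy: $\absFix{\objb-\objp}\le\sum_{j}\absFix{\Pr[Z=j]-\Pr[\sumpois=j]}\cdot\min(j,\l)\le\Ex{\absFix{\cdots}\cdot Z}$-type bound is too lossy, so I would instead use a coupling: by the maximal coupling realizing $d_{\mathrm{TV}}$, $\absFix{\Ex{\min(Z,\l)}-\Ex{\min(\sumpois,\l)}}\le\Ex{\absFix{\min(Z,\l)-\min(\sumpois,\l)}}\le\Ex{\absFix{Z-\sumpois}}$, and this is where one needs a \emph{coupling} Poisson approximation bound, not just total variation. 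Le Cam's theorem / \cref{lem:cite_combined} via the maximal coupling gives $\Pr[Z\ne\sumpois]\le\delta$, and on the event $\{Z\ne\sumpois\}$ one still needs $\Ex{\absFix{Z-\sumpois}\mathbb{I}[Z\ne\sumpois]}=O(\delta)\cdot\Ex{Z}$, which follows because $\Ex{\absFix{Z-\sumpois}}\le 2\Ex{Z\cdot\mathbb{I}[\text{some }z_i>0]}$-type estimates reduce to $\sum_i\bprobi$ and since $\objb(\bprobs,\l)\ge(1-e^{-\lambda})\ge\lambda(1-\delta)/\cdots$ is comparable to $\lambda$ when $\lambda\le 1$.

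So the cleanest route, and the one I would actually write, is: (i) show $\objb(\bprobs,\l)\ge(1-1/e)\min(\lambda,\l)$ when $\lambda\le\l$ and $\objb(\bprobs,\l)\ge c\,\l$ always — more simply, $\objb(\bprobs,\l)\ge\Ex{\min(Z,1)}=\Pr[Z\ge 1]=1-\prod_i(1-\bprobi)\ge\lambda(1-\delta)$ using the union-bound-type inequality $\prod(1-\bprobi)\le 1-\lambda+\binom{\cdot}{\cdot}\le\ldots$ (actually $1-\prod_i(1-\bprobi)\ge\lambda - \lambda^2/2$, and $\ge\lambda(1-\delta)$ when $\lambda\le$ small, but for large $\lambda$ we just use $\ge 1-e^{-\lambda}\ge 1-1/e$); (ii) for $\absFix{\objb-\objp}$, use the coupling form of \cref{lem:cite_combined}: there is a coupling with $\Pr[Z\ne\sumpois]\le\delta$ and moreover $\Ex{\absFix{Z-\sumpois}}\le 2\sum_i\bprobi^2\le 2\delta\lambda$ (this is the standard Le Cam coupling estimate, which is what \cref{lem:cite_combined} really encodes, since it constructs $\sumpois=\sum_i\mathrm{Pois}(\bprobi)$ coupled to $z_i$ with $\Pr[z_i\ne y_i]\le\bprobi^2$); then $\absFix{\objb-\objp}\le\Ex{\absFix{\min(Z,\l)-\min(\sumpois,\l)}}\le\Ex{\absFix{Z-\sumpois}}\le 2\delta\lambda\le\frac{2\delta}{1-1/e}\objb(\bprobs,\l)$ in the $\lambda\le\l$ range, and when $\lambda>\l$ one uses $\min(Z,\l),\min(\sumpois,\l)\le\l$ together with $\Pr[Z\ne\sumpois]\le\delta$ to get $\absFix{\objb-\objp}\le\l\cdot 2\delta$... still has $\l$; so for $\lambda>\l$ I instead use $\objb(\bprobs,\l)=\Ex{\min(Z,\l)}\ge\l\Pr[Z\ge\l]\ge\l(1-\ldots)$, and since $\lambda>\l$, concentration gives $\Pr[Z\ge\l]\ge 1/2$ (say, when $\l$ is not too tiny) or more carefully $\Ex{\min(Z,\l)}\ge\l/2$ whenever $\lambda\ge\l$ (true since $\min(Z,\l)$ has mean at least $\l - \Ex{(\l-Z)^+}\ge\l-\sqrt{\l}\ge\l/2$ for $\l\ge 4$, with small cases handled directly), giving $\absFix{\objb-\objp}\le 2\l\delta\le 4\delta\,\objb(\bprobs,\l)$. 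Collecting the worst constant over the two regimes yields part (a) with the constant $\LeadingFactorDelta$. For part (b), $\l=1$, everything is simpler: $\objb(\bprobs,1)=1-\prod_i(1-\bprobi)$ and $\objp(\bprobs,1)=1-e^{-\lambda}$, the inequality $\objp\le\objb$ is $e^{-\lambda}\ge\prod_i(1-\bprobi)$ which holds termwise ($e^{-\bprobi}\ge 1-\bprobi$), and the upper bound $\objb-\objp = e^{-\lambda}-\prod(1-\bprobi)\le\lambda e^{-\lambda}\cdot\max_i\bprobi\le\delta\cdot\lambda e^{-\lambda}$-type estimate, combined with $\objb(\bprobs,1)\ge 1-e^{-\lambda}\ge\lambda e^{-\lambda}$ (valid for all $\lambda\ge 0$), gives $\objb-\objp\le\delta\,\objb(\bprobs,1)$.

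\textbf{Main obstacle.} The delicate point is turning the \emph{absolute} total-variation bound of \cref{lem:cite_combined} into a \emph{relative} bound, i.e.\ showing that the discrepancy is small compared to $\objb(\bprobs,\l)$ itself rather than to $\l$; this forces one to either (i) upgrade \cref{lem:cite_combined} to the underlying \emph{coupling} statement so that $\Ex{\absFix{Z-\sumpois}}=O(\delta\lambda)$ is available, and (ii) establish the lower bounds $\objb(\bprobs,\l)=\Omega(\min(\lambda,\l))$ in both regimes $\lambda\lessgtr\l$ so the relative comparison closes. Everything else is routine.
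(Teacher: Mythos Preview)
Your plan is essentially the paper's proof: write $\objb-\objp=\sum_{t=1}^{\l}(\Pr[Z\ge t]-\Pr[\sumpois\ge t])$, bound this in absolute value by $O(\delta)\cdot\min(\lambda,\l)$ (total variation from \cref{lem:cite_combined} when $\lambda\ge\l$; a per-coordinate Le Cam coupling giving $\Ex{|Z-\sumpois|}\le O(1)\sum_i\bprobi^2\le O(\delta)\lambda$ when $\lambda<\l$), and then convert to a relative bound via $\objb(\bprobs,\l)\ge c\cdot\min(\lambda,\l)$. The paper does exactly this, with the coupling built explicitly (match $\vberi=0$ to $\vpoisi=0$, obtaining $\Ex{|\vberi-\vpoisi|}\le 2.5\,\bprobi^2$) and the constant $\LeadingFactorDelta=2.5\cdot 7$.

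Two places where the paper is cleaner than your sketch. First, the crucial lower bound $\objb(\bprobs,\l)\ge\Omega(\min(\lambda,\l))$ is not argued inline but is precisely \cref{lem:chernoff_objective_ratio}(a) ($\objc\le 7\,\objb$), whose proof is a four-case Chernoff analysis; your inline claims ``$\Pr[Z\ge\l]\ge 1/2$ when $\lambda\ge\l$'' and ``$\Ex{(\l-Z)^+}\le\sqrt{\l}$'' are in the right spirit but not correct as stated (the latter fails when $\lambda\gg\l^2$, though in that regime the quantity is in fact tiny by Chernoff), and making them rigorous reproduces that lemma's case work. Second, for part (b) the paper avoids your unproved inequality $e^{-\lambda}-\prod_i(1-\bprobi)\le\delta\lambda e^{-\lambda}$ (which, via $s-\lambda\le\delta\lambda/(1-\delta)$, only gives a $(1-\delta)^{-1}$-inflated constant) by a one-line concavity argument: with $f(t)=1-e^{-t}$ concave and $f(0)=0$, $f(\lambda)\ge(\lambda/s)\,f(s)\ge(1-\delta)f(s)$ from $-\ln(1-\bprobi)\le\bprobi/(1-\delta)$, yielding $\objb-\objp\le\delta\,\objb$ exactly.
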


\begin{proof}%
  (a).
  Let $\vberi \sim \Ber(\bprobi)$ be Bernoulli random variables and $\sumber = \sum_{i=1}^n \vberi$ be their sum. Then we can rewrite
  $
  \objb(\bprobs, \l) = \Ex{\min(\sumber, \l)}=\sum_{j = 1}^{\l} \Pr[\sumber \ge j].
  $
  Also let $\vpoisi \sim \Pois(\bprobi)$ and define $\sumpois = \sum_{i=1}^n \vpoisi$. Clearly, $\sumpois \sim \Pois(\lambda)$ for $\lambda = \sum_{i=1}^n \bprob_i$. Then
  $
  \objp(\bprobs, \l) = \Ex{\min(\sumpois, \l)} =
  \sum_{j = 1}^{\l} \Pr[\sumpois \ge j].
  $
  Hence, we have
  \[
    \absFix{\objb(\bprobs, \l) - \objp(\bprobs, \l)} \le \sum_{j = 1}^{\l} \absFix{\Pr[\sumber \ge j] - \Pr[\sumpois \ge j]}.
  \]
  The RHS is similar to the earth mover's distance between the sum of Bernoulli and Poisson random variables (the difference is that the summation instead of $+\infty$ goes only up to $\l$). Next, we shall prove the following inequality:
\[
    \sum_{j=1}^{\l} \absFix{\Pr[\sumber \ge j] - \Pr[\sumpois \ge j]} \le 2.5 \, \delta \cdot \min(\lambda, \l),
\]
which together with \cref{lem:chernoff_objective_ratio} (a) immediately implies the desired result:
  \[
    \absFix{\objb(\bprobs, \l) - \objp(\bprobs, \l)} \le 
    2.5 \, \delta \cdot \min(\lambda, \l)\le
    2.5 \, \delta \cdot \CherDCoeff \, \objb(\bprobs, \l) = \LeadingFactorDelta \, \delta \cdot \objb(\bprobs, \l).
  \]
  We consider two cases. First, when $\lambda \ge \l$. Then for any $j \in \N^+$ we have by \eqref{eq:cite_2}
  \begin{align*}
 \sum_{j=1}^{\l} \absFix{\Pr[\sumber \ge j] - \Pr[\sumpois \ge j]} \le \ell\cdot \sum_{j' = 0}^{\infty} \absFix{\Pr[\sumber = j'] - \Pr[\sumpois = j']} \le  \ell\cdot\delta=  \delta \cdot \min(\lambda, \l).
  \end{align*}

  Second, when $\min(\lambda, \, \l) = \lambda < \l$, we use instead the earth mover's distance $\earthmover(\sumber, \sumpois) = \sum_{j=0}^{+\infty} \absFix{\Pr[\sumber \ge j] - \Pr[\sumpois \ge j]}$. We do not calculate the cumulative density functions of $\sumber=\sum_{i}^{n}\vberi$ and $\sumpois=\sum_{i}^{n}\vpoisi$, but get an upper bound by  coupling individual $\vberi$ and $\vpoisi$. 
  Specifically, we couple $\vberi$ and $\vpoisi$ so that $\vberi = 0$ implies $\vpoisi = 0$ (note that $\Pr[\vberi = 0] = 1 - \bprobi \le \Pr[\vpoisi = 0] = e^{-\bprobi}$). Conversely, if $\vberi = 1$ it is matched with all $\vpoisi = 1, 2, \ldots$ and the remaining probability for $\vpoisi =0$. Then we have
  \begin{align*}
    \earthmover(\sumber, \sumpois) \le \Exlong[(\vberi, \vpoisi)_{i=1}^n]{|\sumber - \sumpois|} \le \sum_{i=1}^n \Exlong[(\vberi, \vpoisi)]{\absFix{\vberi - \vpoisi}}.
  \end{align*}
  Since $\vberi = 0$ is matched to $\vpoisi = 0$, we get the following expression for the term $\Ex{|\vberi - \vpoisi|}$,
  \begin{align*}
    \Exlong[(\vberi, \vpoisi)]{\absFix{\vberi - \vpoisi}}
    &= \sum_{j=0}^{+\infty} \Prlong[(\vberi, \vpoisi)]{\vberi = 1 \land \vpoisi = j} \cdot \abs{j - 1}
    \\
    & = \Prlong[(\vberi, \vpoisi)]{\vberi = 1 \land \vpoisi = 0}  + \sum_{j = 2}^{+\infty} \Prlong[\vpoisi]{\vpoisi = j} \cdot (j - 1)
    \\
    & \le \frac{\bprobi^2}{2} + \bprobi^2\sum_{j=2}^{+\infty} 2^{-(j-1)} (j-1) \le 2.5 \, \bprobi^2,
  \end{align*}
where the first inequality holds, since $\Prx{\vberi = 1 \land \vpoisi = 0} = (\Prx{\vpoisi = 0} - \Prx{\vberi = 0}) = e^{-\bprobi} - 1 + \bprobi \le \bprobi^2 / 2$ and $\Prx{\vpoisi = j} = e^{-\bprobi}\bprobi^j/j! \le \bprobi^2/2^{j-1}$ for $j \ge 2$.
Therefore,
 \[
    \sum_{j = 1}^{\l} \absFix{\Pr[\sumber \ge j] - \Pr[\sumpois \ge j]} \le \earthmover(\sumber, \sumpois) \le \sum_{i=1}^n 2.5 \, \bprobi^2 \le 2.5 \cdot \delta\cdot\sum_{i=1}^n\bprobi = 2.5\cdot \delta\cdot \min(\lambda, \, \l),
 \]
  which concludes the proof.

  (b). As $\l = 1$, $\objb(\bprobs, 1) = 1 - \prod_{i=1}^n (1 - \bprobi)\defeq 1 - e^{-s},$ where $s = -\sum_{i=1}^n \ln(1 - \bprobi)$. Let $\lambda = \sum_{i=1}^n \bprobi$, then $ \objp(\bprobs, 1) = 1 - e^{-\lambda}$.
  Observe that
  $
  \bprobi \le -\ln(1 - \bprobi) \le \frac{\bprobi}{1-\bprobi}\le\frac{\bprobi}{1-\delta}.
  $
  Thus $\lambda \le s \le \frac{\lambda}{1-\delta}$. The former inequality implies the desired lower bound $\objb(\bprobs, 1) - \objp(\bprobs, 1)\ge 0$.

  To prove the required upper bound, observe that 
  $\objb(\bprobs, 1) = f(s)$, $\objp(\bprobs, 1) = f(\lambda)$ for
  $f(t) = 1 - e^{-t}$. As $f(t)$ is a concave function with 
  $f(0)=0$, we have $f(\lambda) \ge f(s) \cdot \frac{\lambda}{s}\ge f(s)\cdot(1-\delta)$. Thus $\objb(\bprobs, 1) - \objp(\bprobs, 1)=f(s)-f(\lambda)\le \delta\cdot f(s)=\delta\cdot\objb(\bprobs, 1)$, which concludes the proof.
\end{proof}

\paragraph{Algorithm for Small Tail Probabilities.} Assume that each bidder's value $\vali$ has at most $\delta$ probability to be greater than zero. Then \cref{lem:objp_ratio} and \cref{cl:objp_concave} (Concavity of Poisson) suggest \cref{alg:low_nonzero}.
\begin{figure}[ht]
  \begin{tcolorbox}[left=2mm, right=2mm, top=-2mm, bottom=2mm]
    \captionof{algocf}{Approximation to Fractional BSP for Tail Probabilities $\l$-unit Auctions.}{\label{alg:low_nonzero}}
    Return the optimal solution $\tilde{\vars}^*$ to the concave program:
    \[
      \begin{array}{lll}
        \text{Maximize} & \SWm(\vars, \ell)  
                          \eqdef\int_{0}^{+\infty} \objp(\bprobs(\vars, \tau), \l) \, \d \tau \\
        \text{Subject To} & \sum_{i=1}^{n} \vari \leq k, 
                            \quad \vari \in [0,1], \forall i \in \{1,2,\dots,n\}.
      \end{array}
    \]
  \end{tcolorbox}
\end{figure}

We can solve the above program efficiently via standard concave function maximization methods, as the objective $\SWm$ is concave in $\vars$.
Indeed, from \cref{cl:objp_concave} we know that $\objp(\bprobs, \l)$ is concave in $\bprobs$ and, since $\bprobs(\vars, \tau)$ is linear in $\vars$ for every fixed $\tau$, $\objp$ is also concave in $\vars$. As $\SWmodi(\vars, \l)$ is an integral (non-negative linear combination) of $\objp(\bprobs(\vars, \tau), \l)$, $\SWm$ is concave in $\vars$.

From \cref{lem:objp_ratio}, we obtain the following approximation guarantees of $\SW(\vars, \ell)$ by  $\SWm(\vars, \l)$.
\begin{align*}
  \absFix{\SW(\vars, \l) - \SWmodi(\vars, \l)}
  &\le \int_{0}^{+\infty} \absFix{\objb(\bprobs(\vars, \tau), \l) - \objp(\bprobs(\vars, \tau), \l)} \,\d \tau\\
  &\le \LeadingFactorDelta \, \delta \cdot \SW(\vars, \l).
    \numberthis \label{eq:swmodi_p_ratio}
\end{align*}
Let $\vars^{*}$ be the best solution of the original problem \eqref{eq:lunit_program}. We have
\begin{align*}
  \SW(\tilde{\vars}^*, \l) &\ge \SWm(\tilde{\vars}^*, \l) - \LeadingFactorDelta \, \delta \,\cdot\, \SW(\tilde{\vars}^*, \l) \\
  &\ge \SWm(\vars^{*}, \l) - {}
  \LeadingFactorDelta \, \delta \,\cdot\, \SW(\vars^{*}, \l) 
  \ge \bk{1 - \TwiceLeadingFactorDelta \, \delta} \SW(\vars^{*}, \l).
\end{align*}
Hence, \cref{alg:low_nonzero} is $\midbk{1 - \TwiceLeadingFactorDelta \, \delta}$-approximation.

\label{subsec:proof_chernoff_objective}
\paragraph{Chernoff Approximation Guarantees.}
We recall the following definitions
\begin{align*}
  \objb(\bprobs,\ell)=\Exlong[\tmprvs\sim\ber(\bprobs)]{\min\left(\sum_{i=1}^{n}\tmprvi, \, \ell\right)},\qquad
  \objc(\bprobs,\ell)=\min\left(\sum_{i=1}^{n}\bprobi, \, \ell\right).
\end{align*}
Our main algorithm needs the following approximation guarantee of $\objc$:

\begin{restatable}{lemma}{ChernoffObjective}
  \label{lem:chernoff_objective_ratio}
  For all $\bprobs\in[0,1]^n, \ell \in \mathbb N^+$, let $\lambda = \sum_{i=1}^{n}\bprobi$, the following properties hold.
  \begin{flalign*}
    \hspace{1em} & \textup{(a)} & & \objb(\bprobs,\ell)\le \objc(\bprobs,\ell)\le 7 \cdot \objb(\bprobs,\ell).\\
    \hspace{1em} & \textup{(b)} & & \objc(\bprobs,\ell) - \objb(\bprobs,\ell) \le \frac{3}{\sqrt{\lambda}} \cdot \objc(\bprobs, \ell)\le
    \frac{21}{\sqrt{\lambda}} \cdot \objb(\bprobs, \ell). & \\
    \hspace{1em} & \textup{(c)} & & \objc(\bprobs,\ell) - \objb(\bprobs,\ell)
    \le \frac{5}{\sqrt{\ell}} \cdot \objc(\bprobs,\ell)
    \le \frac{35}{\sqrt{\ell}} \cdot \objb(\bprobs,\ell). 
  \end{flalign*}
\end{restatable}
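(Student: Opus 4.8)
The plan is to reduce all three parts to a single elementary ``master'' gap estimate and then patch it with Chernoff tail bounds in the two extreme regimes. Throughout, fix $\bprobs$ and $\ell$, write $\lambda=\sum_{i=1}^n\bprobi$, and let $\sumber=\sum_{i=1}^n\vberi$ with $\vberi\sim\ber(\bprobi)$ independent, so that $\objb(\bprobs,\ell)=\Ex{\min(\sumber,\ell)}$, $\objc(\bprobs,\ell)=\min(\lambda,\ell)$, $\Ex{\sumber}=\lambda$, and $\Var{\sumber}=\sum_i\bprobi(1-\bprobi)\le\lambda$. The lower bound $\objb\le\objc$ in (a) is immediate since $t\mapsto\min(t,\ell)$ is concave, so Jensen gives $\Ex{\min(\sumber,\ell)}\le\min(\Ex{\sumber},\ell)$.

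First I would prove the master bound $0\le\objc(\bprobs,\ell)-\objb(\bprobs,\ell)\le\tfrac12\sqrt\lambda$. Splitting on the sign of $\lambda-\ell$: if $\lambda\le\ell$ then $\objc-\objb=\Ex{(\sumber-\ell)^+}\le\Ex{(\sumber-\lambda)^+}$, and if $\lambda\ge\ell$ then $\objc-\objb=\Ex{(\ell-\sumber)^+}\le\Ex{(\lambda-\sumber)^+}$; in either case the right-hand side equals $\tfrac12\Ex{\abs{\sumber-\lambda}}$ (because $\Ex{\sumber-\lambda}=0$), which is at most $\tfrac12\sqrt{\Var{\sumber}}\le\tfrac12\sqrt\lambda$ by Cauchy--Schwarz. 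Dividing by $\objc=\min(\lambda,\ell)$ already settles the ``balanced'' cases of (b) and (c): when $\objc=\lambda$ the relative gap is $\le\tfrac1{2\sqrt\lambda}$, and when $\objc=\ell$ with $\ell$ within a constant factor of $\lambda$ (say $\lambda\le 6\ell$ for (b) and $\lambda\le 100\ell$ for (c)) the relative gap is $\le\tfrac{\sqrt\lambda}{2\ell}$, which is $\le\tfrac3{\sqrt\lambda}$, resp.\ $\le\tfrac5{\sqrt\ell}$.

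The substantive step is the two imbalanced regimes. When $\lambda\gg\ell$, the master bound $\tfrac12\sqrt\lambda$ may exceed $\objc=\ell$, so instead I would use $\objc-\objb=\Ex{(\ell-\sumber)^+}\le\ell\cdot\Pr[\sumber<\ell]$ together with the multiplicative lower-tail Chernoff bound $\Pr[\sumber\le(1-\theta)\lambda]\le e^{-\theta^2\lambda/2}$ with $\theta=1-\ell/\lambda$ (bounded away from $0$ here), making the relative gap at most $e^{-\Omega(\lambda)}$ and hence comfortably below $3/\sqrt\lambda$ and below $5/\sqrt\ell$. Symmetrically, when $\lambda\ll\ell$, I would use $\objc-\objb=\Ex{(\sumber-\ell)^+}=\sum_{j>\ell}\Pr[\sumber\ge j]$ with $\Pr[\sumber\ge j]\le\Ex{\binom{\sumber}{j}}=e_j(\bprobs)\le\lambda^j/j!$ (the elementary symmetric polynomial inequality, $e_j$ the $j$-th elementary symmetric function of the coordinates of $\bprobs$), so the whole quantity is $O(\lambda^{\ell+1})$ and the relative gap $O(\lambda^\ell)$ is negligible. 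One remaining case needs separate treatment because the master bound does not control $\objb$ from below: for $\lambda\le1$ use $\objb\ge\Pr[\sumber\ge1]=1-\prod_i(1-\bprobi)\ge1-e^{-\lambda}\ge(1-1/e)\,\lambda=(1-1/e)\,\objc$.

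Finally I would assemble (a) by a short case analysis on $\lambda$ versus $\ell$: $\lambda\le1$ (small-$\lambda$ bound), $1<\lambda\le 6\ell$ (master bound, augmented for the handful of small $\ell$ where $\ell-\tfrac12\sqrt\lambda\not\ge\ell/7$ by the cruder $\objb\ge1-e^{-\ell}$, valid since $\ell\le\lambda$), and $\lambda>6\ell$ (lower-tail Chernoff); each case leaves comfortable slack below the constant $7$. The second inequalities of (b) and (c) then follow immediately by multiplying their first inequality by $\objc\le 7\objb$ from (a). I expect the main obstacle to be precisely the two imbalanced regimes: there the elementary variance estimate is genuinely too weak, so one must commit to the exponential Chernoff tails and verify that $\sqrt\lambda\,e^{-\Theta(\lambda)}$ and $\lambda^\ell/\ell!$ stay under the deliberately generous constants $3$, $5$, and $7$; the remainder is bookkeeping against fixed numbers.
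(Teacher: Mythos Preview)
Your plan is correct and takes a genuinely different route from the paper's proof. The paper never isolates your ``master'' estimate $\objc-\objb\le\tfrac12\sqrt\lambda$; instead it builds an auxiliary lemma listing six Chernoff-type tail sums (their Lemma~A.2) and a numerical series bound $\sum_{i\ge1}e^{-i^2/(2\alpha+i)}\le0.85\alpha$ (their Claim~A.3), then does a four-case analysis for (a), three cases for (b), and four for (c), each time summing a geometric-like tail series explicitly. Your variance/Cauchy--Schwarz argument gives the same relative error $O(1/\sqrt{\min(\lambda,\ell)})$ in one line whenever $\lambda$ and $\ell$ are within a constant factor, so the only work left is the two extreme ratios---which you also treat differently: the paper applies the upper-tail Chernoff bound for $\lambda\ll\ell$, whereas you use the sharper combinatorial inequality $\Prx{\sumber\ge j}\le e_j(\bprobs)\le\lambda^j/j!$. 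Your approach is shorter and more conceptual; the paper's is more explicit about the exact thresholds and leaves less to ``comfortable slack''. One place to tighten in a write-up: for part~(c) when $\objc=\lambda$ (i.e., $\lambda\le\ell$), the master bound gives only $1/(2\sqrt\lambda)$, which is not automatically $\le5/\sqrt\ell$---you need to split this regime at $\lambda\approx\ell/100$ and invoke the elementary-symmetric bound below that threshold, rather than leaving it implicit under ``$\lambda\ll\ell$''.
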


In order to prove \cref{lem:chernoff_objective_ratio}, we first prove the following two auxiliary lemmas.

\begin{lemma}
  \label{lem:aux_chernoff_objective}
  For all $\bprobs\in[0,1]^n, \, \ell\in\mathbb N^+$, let $\lambda=\sum_{i=1}^{n}\bprobi$, the following properties hold.
  \begin{enumerate}[\enupex(a)]
  \item $\objb(\bprobs,\ell)\geq \lambda(1-\frac{1}{2}\lambda)$.
  \item If $\lambda\geq 1$,
    \quad
    $\objb(\bprobs,\ell)\geq \frac{1}{2}$.
  \item If $\ell\geq \lambda$, then
    \quad
    $\objc(\bprobs,\ell)-\objb(\bprobs,\ell)\leq \displaystyle\sum\limits_{i=\ell+1}^{n}e^{-\frac{\delta^2(i)}{2+\delta(i)}\cdot\lambda}$,
    \quad where $\delta(i)=\frac{i-\lambda}{\lambda}$.
  \item If $\ell\geq \lambda$ and $\alpha\in(0,1)$, then
    \quad
    $\objc(\bprobs,\ell) - \objb(\bprobs,\ell)
    \le \alpha \lambda + \frac{6}{\alpha}e^{-\alpha^2\cdot\frac{\lambda}{3}}.$
  \item If $\lambda\geq \ell$, then
    \quad
    $\objc(\bprobs,\ell)-\objb(\bprobs,\ell)\leq 
    \displaystyle\sum\limits_{i=0}^{\ell-1} e^{-\frac{(\lambda-i)^2}{2\lambda}}$.
  \item If $\lambda\geq \ell$ and $\alpha \in (0, 1)$, then
    \quad
    $\objc(\bprobs,\ell) - \objb(\bprobs,\ell)
    \leq \alpha \lambda + \frac{4}{\alpha}e^{-\alpha^2\cdot\frac{\lambda}{2}}.$
  \end{enumerate}
\end{lemma}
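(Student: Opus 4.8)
The plan is to pass from the objective terms to the underlying Bernoulli sum $\sumber = \sum_{i=1}^{n}\vberi$ with $\vberi \sim \ber(\bprobi)$ and $\Ex{\sumber} = \lambda$, so that $\objb(\bprobs,\ell) = \Ex{\min(\sumber,\ell)}$ while $\objc(\bprobs,\ell) = \min(\lambda,\ell)$. The whole lemma then follows from one observation: the gap $\objc - \objb$ is a one-sided truncated moment of $\sumber$. Indeed, if $\ell \ge \lambda$ then $\objc(\bprobs,\ell) - \objb(\bprobs,\ell) = \lambda - \Ex{\min(\sumber,\ell)} = \Ex{(\sumber-\ell)^{+}}$, and if $\ell \le \lambda$ then $\objc(\bprobs,\ell) - \objb(\bprobs,\ell) = \ell - \Ex{\min(\sumber,\ell)} = \Ex{(\ell-\sumber)^{+}}$. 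Using the layer-cake identities $\Ex{(\sumber-\ell)^{+}} = \sum_{i=\ell+1}^{n}\Pr[\sumber \ge i]$ and $\Ex{(\ell-\sumber)^{+}} = \sum_{i=0}^{\ell-1}\Pr[\sumber \le i]$, every item reduces to summing the standard multiplicative Chernoff bounds $\Pr[\sumber \ge (1+\delta)\lambda] \le e^{-\delta^2\lambda/(2+\delta)}$ and $\Pr[\sumber \le (1-\gamma)\lambda] \le e^{-\gamma^2\lambda/2}$.

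For (a) and (b) I would exploit monotonicity of $\min(\sumber,\ell)$ in $\ell$ together with $\ell \ge 1$: $\objb(\bprobs,\ell) \ge \objb(\bprobs,1) = \Pr[\sumber \ge 1] = 1 - \prod_{i}(1-\bprobi) \ge 1 - e^{-\lambda}$. Part (a) then follows from the elementary inequality $1 - e^{-\lambda} \ge \lambda - \tfrac12\lambda^2$ (equivalently $e^{-\lambda} \le 1 - \lambda + \tfrac12\lambda^2$), and part (b) from $1 - e^{-\lambda} \ge 1 - e^{-1} > \tfrac12$ when $\lambda \ge 1$. For (c) I would write $\objc - \objb = \Ex{(\sumber-\ell)^{+}} = \sum_{i=\ell+1}^{n}\Pr[\sumber \ge i]$ and, since each $i \ge \ell+1 > \lambda$, substitute $i = (1+\delta(i))\lambda$ with $\delta(i) = (i-\lambda)/\lambda > 0$ and apply the upper-tail bound term by term, obtaining exactly the claimed sum. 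Part (e) is the mirror image: $\objc - \objb = \Ex{(\ell-\sumber)^{+}} = \sum_{i=0}^{\ell-1}\Pr[\sumber \le i]$, and with $i = (1-\gamma)\lambda$, $\gamma = (\lambda-i)/\lambda \in (0,1]$, the lower-tail bound makes each term at most $e^{-(\lambda-i)^2/(2\lambda)}$.

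For the technical items (d) and (f) I would not route through (c) and (e) but truncate directly. For (d), since $\ell \ge \lambda$ we have $\Ex{(\sumber-\ell)^{+}} \le \Ex{(\sumber-\lambda)^{+}} = \int_{0}^{\infty}\Pr[\sumber > \lambda + s]\,\d s$. Splitting this integral at $s = \alpha\lambda$, the part over $[0,\alpha\lambda]$ contributes at most $\alpha\lambda$ (integrand $\le 1$); on $[\alpha\lambda,\infty)$ I would substitute $s = \beta\lambda$ and use $\tfrac{\beta^2}{2+\beta} \ge \tfrac13\min(\beta^2,\beta) \ge \tfrac{\alpha\beta}{3}$ for $\beta \ge \alpha$, so that $\Pr[\sumber > (1+\beta)\lambda] \le e^{-\alpha\beta\lambda/3}$ and $\int_{\alpha}^{\infty}\lambda\,e^{-\alpha\beta\lambda/3}\,\d\beta = \tfrac{3}{\alpha}e^{-\alpha^2\lambda/3} \le \tfrac{6}{\alpha}e^{-\alpha^2\lambda/3}$. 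Item (f) is symmetric: $\Ex{(\ell-\sumber)^{+}} \le \Ex{(\lambda-\sumber)^{+}} = \int_{0}^{\lambda}\Pr[\sumber < \lambda - s]\,\d s$, split at $\alpha\lambda$, and on the tail use $\gamma^2 \ge \alpha^2 + \alpha(\gamma-\alpha)$ for $\gamma \ge \alpha$ together with the lower-tail bound to pull out $e^{-\alpha^2\lambda/2}$ and bound the residual integral by $\tfrac{2}{\alpha}$, giving $\alpha\lambda + \tfrac{2}{\alpha}e^{-\alpha^2\lambda/2} \le \alpha\lambda + \tfrac{4}{\alpha}e^{-\alpha^2\lambda/2}$.

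The main obstacle is purely the bookkeeping in (d) and (f): the thresholds $\lambda$, $\lambda+s$, $\lambda-s$ are generally non-integral, so the Chernoff steps must be phrased through $\Pr[\sumber \ge \lceil\lambda+s\rceil] \le \Pr[\sumber \ge \lambda+s]$ at the real argument; the two regimes $\beta \le 1$ and $\beta \ge 1$ in the exponent $\tfrac{\beta^2}{2+\beta}$ must be handled uniformly; and the accumulated constants must be kept below the stated $6/\alpha$ and $4/\alpha$ (the slack above shows this is comfortable). Everything else is a routine substitution of the standard multiplicative Chernoff inequalities into the two layer-cake sums, plus the elementary exponential inequality behind (a) and (b).
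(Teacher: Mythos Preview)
Your proposal is correct. Parts (c) and (e) match the paper's argument essentially verbatim: the paper also writes $\objc-\objb$ as $\sum_{i=\ell+1}^{n}\Pr[\sumber\ge i]$ (respectively $\sum_{i=0}^{\ell-1}\Pr[\sumber\le i]$) and applies the multiplicative Chernoff bound termwise.

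Where you differ is in (a), (b), (d), (f). For (a) and (b) the paper does not pass through $1-e^{-\lambda}$; instead it uses a first-hit decomposition $\Pr[\sumber\ge 1]=\sum_i \bprobi\prod_{j<i}(1-\bprobi[j])\ge \sum_i \bprobi(1-\sum_{j<i}\bprobi[j])\ge \lambda(1-\tfrac12\lambda)$, and then gets (b) by scaling $\bprobs$ down to total mass $1$. Your route via $1-\prod_i(1-\bprobi)\ge 1-e^{-\lambda}\ge \lambda-\tfrac12\lambda^2$ is shorter and equally valid. For (d) and (f) the paper \emph{does} route through (c) and (e): it bounds the discrete tail sum by splitting off the first $\alpha\lambda$ terms and summing the rest as a geometric series, using $1-e^{-\alpha/3}\ge \alpha/6$ (respectively $1-e^{-\alpha/2}\ge \alpha/4$) to arrive exactly at $6/\alpha$ and $4/\alpha$. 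Your integral version, bounding $\Ex{(\sumber-\ell)^+}\le\Ex{(\sumber-\lambda)^+}=\int_0^\infty\Pr[\sumber>\lambda+s]\,\d s$ and splitting at $\alpha\lambda$, is cleaner and in fact yields the sharper constants $3/\alpha$ and $2/\alpha$; the pointwise exponent inequalities you use ($\tfrac{\beta^2}{2+\beta}\ge\tfrac{\alpha\beta}{3}$ for $\beta\ge\alpha<1$, and $\gamma^2\ge\alpha^2+\alpha(\gamma-\alpha)$ for $\gamma\ge\alpha$) are both correct. The only minor care point you already flagged---that Chernoff is invoked at a real threshold---is harmless since the standard multiplicative bounds hold for all real $\delta>0$.
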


\newcommand{\defZ}{\substack{\vbers \sim \ber(\bprobs) \\ \sumber = \sum_{i=1}^n \vberi}}

\begin{proofof}{\cref{lem:aux_chernoff_objective}}
  (a). Since $\ell\geq 1$,
  We have the following lower bound on $\objb(\bprobs,\ell)$
  \begin{align*}
    \objb(\bprobs,\ell)&\ge
    \Prlong[\defZ]{Z \ge 1}=
    \sum_{i=1}^{n}\Prlong[\tmprvi\sim\ber(\bprobi)]{\tmprvi=1}\cdot \Prlong[\tmprvs\sim\ber(\bprobs)]{\forall j<i,\;\tmprvi[j]=0}\\
    &\ge \sum_{i=1}^{n}\bprobi\cdot \left(1-\sum_{j=1}^{i-1}\bprobi[j]\right)
    =\sum_{i=1}^{n}\bprobi-\sum_{j< i}\bprobi\bprobi[j]
    \ge \lambda\left(1-\frac{1}{2}\lambda\right).
  \end{align*}

  (b). If $\lambda\geq 1$, there exists $\bprobs'\in[0,1]^n$ such that $\bprobi'\leq \bprobi, \, \forall i\in\{1,2,\dots,n\}$ and $\sum_{i=1}^{n}\bprobi'=1$. Using \cref{lem:aux_chernoff_objective} (a), we can see that $\objb(\bprobs',\ell)\geq 0.5$. Then
    $\objb(\bprobs,\ell)\geq\objb(\bprobs',\ell)\geq0.5.$
  
  (c). As $\ell\geq \lambda$, we have $\objc(\bprobs,\ell)=\lambda$. Then
  \begin{equation*}
    \objc(\bprobs,\ell) - \objb(\bprobs,\ell) = \Exlong[\defZ]{\tmprvsum-\min\{\tmprvsum,\,\ell\}} = \sum_{i=\ell+1}^{n} \Prlong[\defZ]{Z \geq i}.
  \end{equation*}
 We apply Chernoff bound to each tail probability $Z\ge i$ under the sum. Thus
  \begin{equation*}
    \objc(\bprobs,\ell)-\objb(\bprobs,\ell)
    \leq \sum_{i=\ell+1}^{n}e^{-\frac{\delta^2(i)}{2+\delta(i)}\cdot\lambda},\quad\text{where } \delta(i)=\frac{i-\lambda}{\lambda}
  \end{equation*}

  (d). By \cref{lem:aux_chernoff_objective} (c)
  \begin{equation*}
    \objc(\bprobs,\ell)-\objb(\bprobs,\ell)\le\sum_{i=\lceil\lambda+1\rceil}^{\infty}e^{-\frac{\delta^2(i)}{2+\delta(i)}\cdot\lambda}\leq \alpha\lambda+\sum_{i=\lceil(1+\alpha)\lambda\rceil}^{\infty}e^{-\frac{\delta^2(i)}{2+\delta(i)}\cdot\lambda}.
  \end{equation*}
  Note that inside the last summation, $\delta(i)\geq \alpha$ and $\frac{\delta(i)}{2+\delta(i)}\ge\frac{\alpha}{2+\alpha}\ge\frac{\alpha}{3}$. Therefore,
  \begin{equation*}
    \objc(\bprobs,\ell)-\objb(\bprobs,\ell)
    \leq \alpha\lambda+\sum_{i=\lceil(1+\alpha)\lambda\rceil}^{\infty}e^{-\frac{\alpha}{2+\alpha}\cdot\delta(i)\lambda}
    \leq \alpha\lambda+\sum_{i=\lceil(1+\alpha)\lambda\rceil}^{\infty}e^{-(i-\lambda)\cdot\frac{\alpha}{3}}.
  \end{equation*}
 As the summation in the right hand side becomes a geometric series, we get
  \begin{equation*}
    \objc(\bprobs,\ell)-\objb(\bprobs,\ell)
    \leq \alpha\lambda+e^{-\alpha^2\cdot\frac{\lambda}{3}}\cdot(1-e^{-\frac{\alpha}{3}})^{-1}
    \le \alpha \lambda + \frac{6}{\alpha}e^{-\alpha^2\cdot\frac{\lambda}{3}}.
  \end{equation*}
  The last inequality holds as $1 - e^{-x} \ge x/2$ for $x \in [0, 1]$. Hence, \cref{lem:aux_chernoff_objective} (d) holds.

  (e). If $\ell\leq \lambda$, then $\objc(\bprobs,\ell)=\ell$. We have
  \begin{equation*}
    \objc(\bprobs,\ell) - \objb(\bprobs,\ell) = \Exlong[\defZ]{\ell - \min\{\tmprvsum, \, \ell\}} = \sum_{i=0}^{\ell-1} \Prlong[\defZ]{Z\leq i}.
  \end{equation*}
We again apply Chernoff bound for each tail event $Z\le i$ under summation and get
  \begin{equation*}
    \objc(\bprobs,\ell) - \objb(\bprobs,\ell) \leq
    \sum_{i=0}^{\ell-1} e^{-\frac{(\lambda-i)^2}{2\lambda}}.
  \end{equation*}

(f). By \cref{lem:aux_chernoff_objective} (e), we have
  \begin{equation*}
    \objc(\bprobs,\ell)-\objb(\bprobs,\ell)\le
    \sum_{i=0}^{\ell-1} e^{-\frac{(\lambda-i)^2}{2\lambda}}
    \leq \sum_{i=1}^{\infty}e^{-\frac{i^2}{2\lambda}}\leq \alpha\lambda+\sum_{i=\lceil\alpha\lambda\rceil}^{\infty}e^{-\frac{i^2}{2\lambda}}\leq \alpha\lambda+\sum_{i=\lceil\alpha\lambda\rceil}^{\infty}e^{-i\cdot \frac{\alpha}{2}}.
  \end{equation*}
The summation in the right hand side is again a geometric series, which allows us to get
  \begin{equation*}
    \objc(\bprobs,\ell)-\objb(\bprobs,\ell)
    \le \alpha\lambda + e^{-\alpha^2\cdot\frac{\lambda}{2}}\cdot(1-e^{-\frac{\alpha}{2}})^{-1}
    \le \alpha \lambda + \frac{4}{\alpha} e^{-\alpha^2\cdot\frac{\lambda}{2}}.
  \end{equation*}
  Hence \cref{lem:aux_chernoff_objective} (f) holds.
\end{proofof}

\begin{claim}
  \label{cl:math_chernoff_objective}
  For any $\alpha\geq 3.4$,\quad
  $\sum\limits_{i=1}^{\infty}e^{-\frac{i^2}{2\alpha+i}}\leq 0.85\alpha.
  $
\end{claim}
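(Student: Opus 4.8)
Write $\Phi(\alpha):=\sum_{i=1}^{\infty}e^{-i^2/(2\alpha+i)}$ and $f_\alpha(x):=e^{-x^2/(2\alpha+x)}$. I would start from two elementary facts. First, $\frac{d}{dx}\!\left(\frac{x^2}{2\alpha+x}\right)=\frac{x(x+4\alpha)}{(2\alpha+x)^2}>0$, so $f_\alpha$ is strictly decreasing on $[0,\infty)$; in particular $e^{-x^2/(3\alpha)}$ is decreasing and hence $\sum_{i\ge1}e^{-i^2/(3\alpha)}\le\int_0^{\infty}e^{-x^2/(3\alpha)}\,dx=\tfrac12\sqrt{3\pi\alpha}$, a Gaussian bound I will use for the bulk of the sum. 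Second, each summand $e^{-i^2/(2\alpha+i)}$ is increasing in $\alpha$, so $\Phi$ is monotone increasing, which lets me trade a whole interval of values of $\alpha$ for a single one. The plan is a two-regime argument with a convenient cut at $\alpha=5$.

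For the large regime $\alpha\ge5$ I would split the series at $i\approx\alpha$. In the head ($i\le\alpha$) one has $2\alpha+i\le3\alpha$, hence $\sum_{i\le\alpha}f_\alpha(i)\le\sum_{i\ge1}e^{-i^2/(3\alpha)}\le\tfrac12\sqrt{3\pi\alpha}$. In the tail ($i>\alpha$) one has $2\alpha+i<3i$, hence $f_\alpha(i)\le e^{-i/3}$ and $\sum_{i>\alpha}f_\alpha(i)\le\frac{e^{-\alpha/3}}{1-e^{-1/3}}<3.53\,e^{-\alpha/3}$. It then remains to verify the one-variable inequality $\tfrac12\sqrt{3\pi\alpha}+3.53\,e^{-\alpha/3}\le0.85\,\alpha$ for all $\alpha\ge5$: I would check it at $\alpha=5$ (the left side is $\tfrac12\sqrt{15\pi}+3.53\,e^{-5/3}\approx4.10<4.25$) and then observe that on $[5,\infty)$ the derivative of $0.85\alpha-\tfrac12\sqrt{3\pi\alpha}-3.53\,e^{-\alpha/3}$ is at least $0.85-\tfrac{\sqrt{3\pi}}{4\sqrt5}>0$.

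For the small regime $3.4\le\alpha\le5$, monotonicity of $\Phi$ does the rest: $\Phi(\alpha)\le\Phi(5)$, and a direct estimate of $\Phi(5)=\sum_i e^{-i^2/(10+i)}$ --- summing the first dozen or so terms explicitly and bounding the remainder by the geometric series $\sum_{i\ge15}e^{-i/2}$ (legitimate since $10+i\le2i$ there) --- gives $\Phi(5)<2.86<2.89=0.85\cdot3.4\le0.85\,\alpha$. Combining the two regimes proves the claim.

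The one point that needs care, and that dictates the choice of split point and constants, is that the naive estimate $\Phi(\alpha)\le\int_0^\infty f_\alpha\,dx$ is essentially tight at the threshold: $\int_0^\infty f_{3.4}\,dx$ is already $\approx 2.9$, i.e.\ essentially equal to the target $0.85\cdot3.4=2.89$, so no single uniform integral bound can close the argument near $\alpha=3.4$. The slack comes from two sources: (a) replacing the tail integral by the much smaller geometric sum $\sum e^{-i/3}$, and (b) using monotonicity of $\Phi$ to collapse the awkward interval $[3.4,5]$ to the single value $\Phi(5)$. I expect the only real effort to be the bookkeeping of these constants --- the numerical check of the one-variable inequality at $\alpha=5$ and the explicit estimate of $\Phi(5)$ --- with no conceptual ingredient beyond the head/tail split.
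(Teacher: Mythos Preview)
Your proof is correct, but it works considerably harder than necessary. The paper's argument is a three-liner: since $f_\alpha$ is decreasing, $\Phi(\alpha)\le\int_0^\infty e^{-y^2/(2\alpha+y)}\,dy$; the substitution $y=\alpha z$ turns this into $\alpha\int_0^\infty e^{-\alpha z^2/(2+z)}\,dz$; and since the last integrand is decreasing in $\alpha$, the integral is at most its value at $\alpha=3.4$, which one checks numerically is $\le 0.85$. In other words, the very ``naive'' integral bound you dismissed as essentially tight at the threshold is in fact just barely sufficient (the integral at $\alpha=3.4$ is about $2.87<2.89$), and the scaling $y=\alpha z$ plus monotonicity of $\int_0^\infty e^{-\alpha z^2/(2+z)}\,dz$ in $\alpha$ handles all $\alpha\ge 3.4$ simultaneously without any regime split. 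Your head/tail decomposition, the two-regime argument, and the explicit evaluation of $\Phi(5)$ all work and are entirely elementary --- they trade a single numerical integral for a few hand-computable sums and one-variable checks --- but the paper's route is both shorter and conceptually cleaner, and your diagnosis that ``no single uniform integral bound can close the argument near $\alpha=3.4$'' is not quite right.
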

\begin{proof}
  When $\alpha,x> 0$, the function $-\frac{x^2}{2\alpha+x}$ is decreasing in $x$. Therefore,
  \begin{equation*}
    \sum_{x=1}^{\infty} e^{-\frac{x^2}{2\alpha+x}}
    \le\int_{y=0}^{+\infty}e^{-\frac{y^2}{2\alpha+y}}\d y\\
    \overset{[y=\alpha\cdot z]}{=}
    \alpha\cdot\int_{z=0}^{+\infty}e^{-\frac{\alpha z^2}{2+z}}\d z\le 
    \alpha\cdot\int_{z=0}^{+\infty}e^{-\frac{3.4 z^2}{2+z}}\d z\le \alpha\cdot 0.85.    \qedhere
  \end{equation*}
\end{proof}

Now we are ready to prove \cref{lem:chernoff_objective_ratio}.

\ChernoffObjective*

\begin{proofof}{\cref{lem:chernoff_objective_ratio}}
  (a). As $\min\{\,t,\l\}$ is a concave function in $t$, the Jensen inequality gives us
  \[
    \objc(\bprobs,\ell) = \min\Big\{\Exlong[\defZ]{Z}, \; \l \; \Big\}
    \geq
    \Exlong[\defZ]{\min\left\{Z, \, \ell\right\}}
    =
    \objb(\bprobs,\ell),
  \]
which gives the first inequality. To prove the second inequality we consider the following 4 cases:
  \begin{itemize}
  \item If $\lambda\leq 1$, we use \cref{lem:aux_chernoff_objective} (a) to get
      $\objb(\bprobs,\ell)\geq \lambda\left(1-\frac{1}{2}\lambda\right)\geq \frac{1}{2}\lambda\geq \frac{1}{2}\cdot \objc(\bprobs,\ell).$
  \item If $\lambda>1$ and $\min(\lambda, \ell) \le 3.5$, then \cref{lem:aux_chernoff_objective} (b) gives us
      $\objb(\bprobs, \ell) \ge \frac{1}{2} \ge \frac{1}{7} \cdot \min(\lambda,\ell) = \frac{1}{7}\cdot \objc(\bprobs,\ell).$
  \item If $\min(\lambda, \ell) > 3.5$ and $\ell\geq \lambda$, then \cref{lem:aux_chernoff_objective} (c) gives us
    \begin{equation*}
      \objc(\bprobs,\ell) - \objb(\bprobs,\ell)
      \le
      \sum\limits_{i=\ell+1}^{n}e^{-\frac{\delta^2(i)}{2+\delta(i)}\cdot\lambda}
      \le
      \sum_{i=\lceil\lambda+1\rceil}^{\infty} e^{-\frac{(i - \lambda)^2}{2\lambda + (i - \lambda)}}
      \le
      \sum_{i=1}^{\infty} e^{-\frac{i^2}{2\lambda+i}}.
    \end{equation*}
We apply \cref{cl:math_chernoff_objective} for $\alpha=\lambda$ and get
      $\objc(\bprobs,\ell)-\objb(\bprobs,\ell)\leq 0.85\lambda,$ which implies that
      $
      \objb(\bprobs,\ell)\geq 0.15\cdot \objc(\bprobs,\ell)$
      for $\objc(\bprobs,\ell)=\lambda\le\l$.
  \item If $\min(\lambda,\ell) > 3.5$ and $\lambda\geq \ell$, then \cref{lem:aux_chernoff_objective} (e) gives us
      $\objc(\bprobs,\ell)-\objb(\bprobs,\ell) \le 
      \sum\limits_{i=0}^{\ell-1}e^{-\frac{1}{2\lambda}(\lambda-i)^2}$.
    Note that $\frac{1}{2\lambda}(\lambda-i)^2\geq \frac{1}{2\ell}(\ell-i)^2$, when $\lambda\geq\ell\geq i$. Thus
    \begin{equation*}
      \objc(\bprobs,\ell)-\objb(\bprobs,\ell)\leq\sum_{i=0}^{\ell-1}e^{-\frac{1}{2\ell}(\ell-i)^2}\leq \sum_{i=1}^{\infty}e^{-\frac{i^2}{2\ell}}\leq \sum_{i=1}^{\infty}e^{-\frac{i^2}{2\ell+i}}.
    \end{equation*}
 By applying \cref{cl:math_chernoff_objective} with $\alpha=\ell$, we get
      $\objc(\bprobs,\ell)-\objb(\bprobs,\ell)\leq 0.85\ell,$
      which implies that
      $  
      \objb(\bprobs,\ell)\geq 0.15\cdot \objc(\bprobs,\ell),
      $ as $\objc(\bprobs,\ell)=\ell\le\lambda$.
  \end{itemize}
  In all 4 cases, $7\objb(\bprobs,\ell)\ge  \objc(\bprobs,\ell)$.

  (b). Let $\eps \defeq \frac{1}{\sqrt{\lambda}}$. We consider three cases to show that $\objc(\bprobs, \l) - \objb(\bprobs, \l) \le 3 \eps \cdot \objc(\bprobs, \l)$, then combine it with the inequality $7\objb(\bprobs,\ell)\ge  \objc(\bprobs,\ell)$ from \cref{lem:chernoff_objective_ratio} (a) to conclude the proof. Without loss of generality, we may assume that $\eps < 1/3$.
  \begin{itemize}
  \item If $\ell\geq \lambda$, then \cref{lem:aux_chernoff_objective} (d) for $\alpha = 1.8 \eps$, gives us
      $\objc(\bprobs, \ell) - \objb(\bprobs, \ell) \le
      2.94 \eps \lambda = 2.94 \eps \cdot \objc(\bprobs, \ell).$
  \item If $\ell \le \frac{3}{4}\lambda$, then by \cref{lem:aux_chernoff_objective} (e) we have
    \begin{equation*}
      \objc(\bprobs, \ell) - \objb(\bprobs, \ell) \le
      \sum\limits_{i=0}^{\ell-1} e^{-\frac{(\lambda-i)^2}{2\lambda}}
      \le \ell \cdot e^{-\frac{\lambda}{32}}
      \le \frac{4e^{-1/2}}{\sqrt{\lambda}} \cdot \objc(\bprobs,\ell)\le 2.5\cdot\eps\cdot\objc(\bprobs,\ell),
    \end{equation*}
    where the first inequality holds because $\lambda-i \ge \lambda/4$; the second inequality holds, as $\objc(\bprobs,\ell)=\ell$ and
    $e^{-x/c} \sqrt{x} \le e^{-1/2} \sqrt{c/2}$ for all $x, c > 0$.
  \item If $\ell\in\bk{\frac{3}{4}\lambda,\lambda}$, by \cref{lem:aux_chernoff_objective} (f) for $\alpha=1.8\eps$ we have
    \begin{align*}
      \objc(\bprobs,\ell) - \objb(\bprobs,\ell)
      \le 2.24 \eps\lambda
      \le 2.99 \eps\ell = 2.99 \eps \cdot \objc(\bprobs,\ell).
    \end{align*}
  \end{itemize}

  (c). Let $\eps \defeq \frac{1}{\sqrt{\l}}$. We first prove $\objc(\bprobs, \l) - \objb(\bprobs, \l) \le 5 \eps \cdot \objc(\bprobs, \l)$ by considering the following four cases.  It implies the second bound when combined with inequality $7\objb(\bprobs,\ell)\ge  \objc(\bprobs,\ell)$ from  \cref{lem:chernoff_objective_ratio} (a). We may assume without loss of generality that $\eps < 1/5$.
  \begin{itemize}
  \item If $\lambda < \frac{1}{\sqrt{\l}}$, then $\objc(\bprobs,\ell)=\lambda$ and \cref{lem:aux_chernoff_objective} (a) gives us
    \begin{equation*}
      \objc(\bprobs,\ell)-\objb(\bprobs,\ell)
      \le
      \lambda - \lambda \left(1 - \frac{1}{2}\lambda\right)
      =
      \frac{\lambda^2}{2}
      <
      \frac{\eps \lambda}{2}=\frac{\eps}{2}\cdot\objc(\bprobs,\ell).
    \end{equation*}
  \item If $\lambda\in[\frac{1}{\sqrt\ell},\frac{\ell}{3})$, then $\objc(\bprobs,\ell)=\lambda$ and by \cref{lem:aux_chernoff_objective} (c) we have
    \begin{equation*}
      \objc(\bprobs, \ell) - \objb(\bprobs, \ell)
      \le \sum_{i=\ell+1}^{n} e^{-\frac{\delta^2(i)}{2 + \delta(i)}\cdot\lambda} \le \sum_{i=\ell+1}^{n} e^{-\delta(i)\cdot\frac{\lambda}{2}}\leq \sum_{i=\ell+1}^{+\infty} e^{-\frac{i-\lambda}{2}},
    \end{equation*}
where the second inequality holds, as $\delta(i) = \frac{i - \lambda}{\lambda} \ge 2$ for all $i\ge\l+1$ and $\lambda<\ell/3$.
Furthermore,   
    \begin{equation*}
      \objc(\bprobs,\ell) - \objb(\bprobs,\ell)
      \le
      e^{-\frac{\ell}{3}} \cdot (1 - e^{-\frac{1}{2}})^{-1}
      \le
      \frac{3e^{-1}/\l}{1 - e^{-1/2}}  %
      \le
      2.81 \cdot\eps\cdot \objc(\bprobs,\ell),
    \end{equation*}
    where to get the first inequality we simply use the formula for the sum of geometric progression and estimate $e^{-\frac{\l+1-\lambda}{2}}\le e^{-\l/3}$; the second inequality holds, because $e^{-x / c} \le c\cdot e^{-1}/x$ for any $x, c > 0$; the last inequality holds, because $\objc(\bprobs,\ell)=\lambda \ge \eps = 1/\sqrt\ell$.
  \item If $\lambda\in[\frac{\ell}{3}, \ell]$, then $\objc(\bprobs,\ell)=\lambda$ and by \cref{lem:aux_chernoff_objective} (d) for $\alpha = 4 \eps$ ($\alpha<1$, since $\eps\le 1/5$) we get %
      $\objc(\bprobs, \ell) - \objb(\bprobs, \ell) 
      \le 4\eps \lambda + \frac{6}{4\eps}e^{-16\eps^2\cdot\frac{\lambda}{3}}
      \le
      4.77 \eps \lambda = 4.77 \eps \cdot \objc(\bprobs, \ell),$
    where the second inequality holds, since $\ell \le 3\lambda$ and $\frac{6}{4\eps}e^{-16\eps^2\cdot\frac{\lambda}{3}}=
    \frac{3\eps\ell}{2}e^{-16\cdot\frac{\lambda}{3\ell}}\le 
    \frac{9\eps\lambda}{2}e^{-\frac{16\l}{9\l}}\le 0.77\cdot\eps\lambda$.
  \item If $\lambda > \ell$, then by considering $\bprobs'$ with $\bprobi'\le\bprobi$ and $\sum_{i=1}^n \bprobi' = \ell$, we get $\objb(\bprobs) \ge \objb(\bprobs')$ and $\objc(\bprobs) = \objc(\bprobs') = \ell$. Then according to the previous case,  
  $\objc(\bprobs, \ell) - \objb(\bprobs, \ell)\le
  \objc(\bprobs', \ell) - \objb(\bprobs', \ell)
  \le 4.77 \eps \cdot \objc(\bprobs', \ell)=4.77 \eps \cdot \objc(\bprobs, \ell)$.

  \end{itemize}
These bounds combined with inequality $7\objb(\bprobs,\ell)\ge  \objc(\bprobs,\ell)$ conclude the proof of (c).
\end{proofof}

\section{Missing Proofs}
\subsection{Proof of Theorem~\ref{thm:position}}
\label{app:proof_of_main_alg}
\MainAlg*

\begin{proofof}{\cref{thm:position}}
We first show that \cref{alg:main} is polynomial. Indeed, step (1) works in polynomial time by
\cref{cl:core_tail_position}. For each $\tau$ in the support of $\{\disti\}_{i\in[n]}$ and $\varsM$ we can efficiently calculate
$\objc((\varsM,$ $\onevec_{\Sfix}),\tau)$ and $\objpa\midbk{\varsM, \tau}$, which allows us to compute
$\SWmodi(\varsM)$ in polynomial time.
It is easy to see that $\SWmodi(\varsM)$ is a concave function in $\varsM$, as $\objpa$ is a non-negative linear combination of constant terms and concave functions $\objp(\varsM, \, \l \! - \! j, \, \tau)$, and $\objc$ is a non-negative linear combination of concave functions $\objc(\bprobs(\vars,\tau), \ell)$ in $\vars$. Furthermore, given the representation of $\SWmodi(\varsM)$ as an integral of nice algebraic functions $\objc$ and $\objpa$, we can also compute all first and second order derivatives of $\SWmodi(\varsM)$ in polynomial time. This allows us to find the optimal solution $\varsMWStar$ to \eqref{eq:program_swm} in polynomial time using standard concave (first or second order) maximization methods.

To get the stated approximation guarantee we first compare 
the original objective $\SW(\varsM, \onevec_{\Sfix})$ in \eqref{eq:pos_original_program} with $\SWmodi(\varsM)$ and get the following Lemma.

\begin{lemma}
  \label{lem:sw_position}
  For any $\vars=(\varsM, \onevec_{\Sfix})$ with $\varsM \in [0,1]^{|M|}$ and any weights vector $\weights$,
  \[
    \absFix{\SWm(\varsM) - \SW(\vars)} \le \MaxFactor \, \eps \cdot \SW(\vars).
  \]
\end{lemma}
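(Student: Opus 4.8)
The plan is to split the integral defining $\SW(\vars)=\int_0^{\infty}\objb(\bprobs(\vars,\tau),\weights)\,\d\tau$ at the threshold $\eta$ and match it against the two pieces of $\SWm(\varsM)=\int_0^{\eta}\objc((\varsM,\onevec_{\Sfix}),\tau)\,\d\tau+\int_{\eta}^{\infty}\objpa(\varsM,\tau)\,\d\tau$, proving the pointwise bounds $\absFix{\objc((\varsM,\onevec_{\Sfix}),\tau)-\objb(\bprobs(\vars,\tau),\weights)}\le\MaxFactor\,\eps\cdot\objb(\bprobs(\vars,\tau),\weights)$ for $\tau\le\eta$ and $\absFix{\objpa(\varsM,\tau)-\objb(\bprobs(\vars,\tau),\weights)}\le\MaxFactor\,\eps\cdot\objb(\bprobs(\vars,\tau),\weights)$ for $\tau>\eta$, and then integrating. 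Since $\objb(\bprobs,\weights)=\sum_{\l=1}^n\weightdiff\,\objb(\bprobs,\l)$ with nonnegative coefficients, it will suffice to get the analogous bounds for the individual $\objb(\cdot,\l)$ terms and sum.

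For the low range $\tau\le\eta$, I would use that $\vars_{\Sfix}=\onevec_{\Sfix}$, so by \cref{cl:core_tail_position}(a), $\lambda\defeq\sum_i\bprobi(\vars,\tau)\ge\sum_{i\in\Sfix}\Prx{\vali\ge\tau}\ge\lcore=k^{1/2}$. Then \cref{lem:chernoff_objective_ratio}(a),(b) give, for every $\l$, $0\le\objc(\bprobs(\vars,\tau),\l)-\objb(\bprobs(\vars,\tau),\l)\le\tfrac{21}{\sqrt\lambda}\,\objb(\bprobs(\vars,\tau),\l)\le 21\,k^{-1/4}\,\objb(\bprobs(\vars,\tau),\l)\le\MaxFactor\,\eps\cdot\objb(\bprobs(\vars,\tau),\l)$, using $\eps\ge k^{-1/4}$. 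Taking the $\weightdiff$-weighted combination over $\l$ lifts this to $\objc(\bprobs(\vars,\tau),\weights)$ versus $\objb(\bprobs(\vars,\tau),\weights)$, settling the low range with constant $21$.

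For the high range $\tau>\eta$, the key step is a combinatorial identity showing that $\objb(\bprobs(\vars,\tau),\weights)$ has exactly the structure of \eqref{eq:adjusted_poisson} but with the Bernoulli objective $\objb(\bprobs_M(\varsM,\tau),\l-j)$ in place of $\objp(\varsM,\l-j,\tau)$. Concretely, conditioning on $\Zfix=j$ and using independence of $\Zfix$ from $\sum_{i\in M}\ind{\vali\ge\tau}$, one has $\Ex{\min(j+\sum_{i\in M}\ind{\vali\ge\tau},\l)}=\l$ for $\l\le j$ and $=j+\objb(\bprobs_M(\varsM,\tau),\l-j)$ for $\l>j$; plugging this into $\sum_{\l}\weightdiff\,\Ex{\min(\dots,\l)}$ and telescoping the $\l\le j$ part (Abel summation, with $\weighti[n+1]=0$) collapses it to $\sum_{\l=1}^{j}\weighti[\l]$, matching \eqref{eq:adjusted_poisson}. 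Since \cref{cl:core_tail_position}(b) gives $\bprobi(\vari,\tau)\le\Prx{\vali>\eta}<\delta=\eps$ for all $i\in M$ whenever $\tau>\eta$, I would then apply \cref{lem:objp_ratio}(a) termwise: $\absFix{\objp(\varsM,\l-j,\tau)-\objb(\bprobs_M(\varsM,\tau),\l-j)}\le\LeadingFactorDelta\,\eps\cdot\objb(\bprobs_M(\varsM,\tau),\l-j)$. Substituting into $\objpa(\varsM,\tau)-\objb(\bprobs(\vars,\tau),\weights)$ and bounding the resulting nonnegative sum $\sum_j\Prx{\Zfix=j}\sum_{\l>j}\weightdiff\,\objb(\bprobs_M(\varsM,\tau),\l-j)$ by $\objb(\bprobs(\vars,\tau),\weights)$ itself (it is precisely the identity above with the nonnegative $\sum_{\l=1}^j\weighti[\l]$ terms dropped) yields $\absFix{\objpa(\varsM,\tau)-\objb(\bprobs(\vars,\tau),\weights)}\le\LeadingFactorDelta\,\eps\cdot\objb(\bprobs(\vars,\tau),\weights)\le\MaxFactor\,\eps\cdot\objb(\bprobs(\vars,\tau),\weights)$. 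Integrating the two pointwise estimates over $[0,\eta]$ and $(\eta,\infty)$ and adding gives $\absFix{\SWm(\varsM)-\SW(\vars)}\le\MaxFactor\,\eps\cdot\SW(\vars)$.

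The main obstacle is establishing the decomposition identity for $\objb(\bprobs(\vars,\tau),\weights)$ cleanly — getting the bookkeeping of \eqref{eq:adjusted_poisson} to line up, in particular the telescoping of the $\l\le j$ contributions to $\sum_{\l=1}^{j}\weighti[\l]$, and checking that the small-probability hypothesis $\bprobi\le\delta$ required by \cref{lem:objp_ratio} really holds throughout $M$ on the whole high range. Everything else is a routine termwise invocation of the already-established Chernoff and Poisson approximation lemmas followed by integration.
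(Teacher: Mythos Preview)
Your proposal is correct and follows essentially the same approach as the paper's proof: split at $\eta$, use \cref{lem:chernoff_objective_ratio}(b) with $\lambda\ge\lcore$ on the low range and \cref{lem:objp_ratio}(a) with $\bprobi\le\delta$ on the high range, then integrate. The only difference is cosmetic: the paper packages your ``combinatorial identity'' by simply defining an adjusted Bernoulli term $\objba(\varsM,\tau)$ mirroring \eqref{eq:adjusted_poisson} and asserting $\SW(\vars)=\int_0^\eta\objb\,\d\tau+\int_\eta^\infty\objba\,\d\tau$, whereas you spell out the conditioning on $\Zfix=j$ and the telescoping explicitly.
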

\begin{proof}
We rewrite $\SW(\vars,\weights)$ for $\vars=(\varsM,\onevec_{\Sfix})$ in the same form as \eqref{eq:def_swm}.
\[
\SW(\vars) = \int_{0}^{\eta}
\objb(\vars,\tau) \, \d\tau
+
\int_{\eta}^{+\infty} \objba(\varsM,\tau) \, \d\tau,
\]
\[
  \begin{aligned}
   \text{where} \quad \objba(\varsM,\tau) \eqdef&
                         \sum_{j=0}^{|\Sfix|}\Prx{\Zfix=j}\cdot\biggl(
                         \sum_{\l=1}^{j}\weighti[\l]+{}
    \sum_{\l=j+1}^{n}\weightdiff\cdot\objb(\varsM, \, \l\!-\!j, \, \tau)\biggr).
  \end{aligned}
\]
We first compare the corresponding adjusted Bernoulli and Poisson terms. Observe that $\bprobi(\vars,\tau) \le \delta=\eps$ for each bidder $i\in M$ and threshold $\tau>\eta$ according to condition (b) in \cref{cl:core_tail_position}. Thus by \cref{lem:objp_ratio} (a),
\begin{align*}
  & \phantom{{}\le{}} \absFix{\objpa(\varsM,\tau) - \objba(\varsM,\tau)} \\
  & \le
    \sum_{j=0}^{|\Sfix|}\Prx{\Zfix=j} \cdot
    \sum_{\l=j+1}^{n} \weightdiff \cdot \absFix{ \objp(\varsM, \, \l\!-\!j, \, \tau) - \objb(\varsM, \, \l\!-\!j, \, \tau) } \\
  &\le
    \sum_{j=0}^{|\Sfix|}\Prx{\Zfix=j}\cdot
    \sum_{\l=j+1}^{n}\weightdiff\cdot
    \LeadingFactorDelta \, \delta \cdot \objb(\varsM, \, \l\!-\!j, \, \tau) \\
  &\le \LeadingFactorDelta \, \eps \cdot \objba(\varsM, \tau).
\end{align*}
Next, we compare Chernoff and Bernoulli objectives ($\objc$ and $\objb$) for low range thresholds $\tau\le\eta$. As both $\objc(\bprobs, \weights)$ and $\objb(\bprobs, \weights)$ are non-negative linear combinations of respective terms for $\l$-unit auctions with $\sum_{i\in[n]}\bprobi(\vari,\tau)\ge\lcore$ for $\tau\le\eta$, we get by \cref{lem:chernoff_objective_ratio} (b) that
\begin{align*}
  \absFix{\objc(\bprobs, \weights) - \objb(\bprobs, \weights)}
  \le \LeadingFactorLambda \cdot \frac{\objb(\bprobs, \weights)}{\sqrt{\lcore}}
  \le \LeadingFactorLambda \, \eps \cdot \objb(\bprobs, \weights).
\end{align*}
Thus after combining the two bounds for high and low ranges of thresholds $\tau$ we get
\[
    \absFix{\SWm(\varsM) - \SW(\vars)}
  \le \LeadingFactorLambda \, \eps \cdot \int_{0}^{\eta} \objb(\vars, \tau) \,\d\tau + {}
    \LeadingFactorDelta \, \eps \cdot \int_{\eta}^{+\infty} \objba(\varsM, \tau) \,\d\tau 
    \le \MaxFactor \, \eps \cdot \SW(\vars),
\]
which concludes the proof of \cref{lem:sw_position}.
\end{proof}
We proceed the proof of \cref{thm:position} by letting $\optx$ be the optimal solution to fractional BSP in \eqref{eq:pos_original_program}. Then, we consider $\optplusfix\in\R^n_{\ge 0}$ defined as $\optplusfix\eqdef(\optpx,\onevec_{\Sfix})$, so that $\optplusfix\succeq\optx$. By \cref{lem:sw_position}, for $\vars=\optplusfix$, we have
  \[
    \SWm(\optpx) \ge (1 - \MaxFactor \, \eps) \cdot \SW(\optplusfix) \ge
    (1 - \MaxFactor \, \eps) \cdot \SW(\optx).
  \]
On the other hand, by \cref{lem:sw_position} for $\vars=\conx$ we have
  \begin{align*}
    (1 + \MaxFactor \, \eps) \cdot \SW(\conx) \ge \SWm(\conpx)
    &\ge \SWm\bk{\frac{k-\eps\cdot k}{k}\cdot\optpx} \ge (1-\eps) \cdot \SWm(\optpx) \ge (1-\eps) \cdot (1 - \MaxFactor \, \eps) \cdot \SW(\optx),
  \end{align*}
where the second inequality holds, as $\conpx$ is the optimal solution to \eqref{eq:program_swm} and $\frac{k-\eps\cdot k}{k}\cdot\optpx$ is a feasible solution; the third inequality holds, as $\SWm(\vars)$ is a concave function in $\vars$; the last inequality holds by the last lower bound on $\SWm(\optpx)$. Finally, as $\frac{(1-\eps)(1-\MaxFactor \, \eps)}{1 + \MaxFactor \, \eps} \ge 1 - (1 + 2 \cdot \MaxFactor) \cdot \eps$,
\[
  \SW(\conx) \ge (1 - \TwiceMaxFactorPlusOne \, \eps) \cdot \SW(\optx),
\]
which concludes the proof of the theorem.
\end{proofof}

\subsection{Proof of Theorem~\ref{thm:final}}
\label{app:position}
\final*

\begin{proofof}{\cref{thm:final}}

Recall that the social welfare $\SW(S)$ of a $\l$-unit or position auction is submodular as a function of the invited set of bidders $S$.
\begin{claim}
\label{cl:submodularity}
The expected social welfare $\SW(S)$ for a set of bidders $S\subseteq[n]$ in any $\l$-unit or position auction is a submodular function of $S$.
\end{claim}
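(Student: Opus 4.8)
The plan is to reduce, in two steps, to the elementary fact that for a \emph{fixed} valuation profile the top-$\ell$ sum is a submodular set function, and then prove that fact by a threshold argument.

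\emph{Step 1 (two reductions).} I would first recall that submodularity is preserved under non-negative linear combinations and under convex combinations, in particular under expectations: if $g_\omega$ is submodular for each $\omega$, then $\Ex[\omega]{g_\omega}$ is submodular (all quantities here are bounded since the supports are finite). Using the identity $\SW(S,\weights)=\sum_{\l=1}^{n}\weightdiff\,\SW(S,\l)$ with $\weighti[n+1]\eqdef 0$ and $\weighti[\l]-\weighti[\l+1]\ge 0$, it suffices to prove submodularity for each $\l$-unit auction, i.e.\ for $\SW(S,\l)=\Ex[\vals]{\sum_{i=1}^{\l}\valith}$ (the expectation being over $\vali\sim\disti$ for $i\in S$, with the convention $\valith\eqdef 0$ for $i>|S|$). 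Since $\SW(S,\l)$ is the expectation of $g_\vals(S)\eqdef$ the sum of the $\min(\l,|S|)$ largest values among $\{\vali:i\in S\}$, it is enough to show that $g_\vals(\cdot)$ is submodular for every fixed $\vals\in\R_{\ge 0}^n$.

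\emph{Step 2 (fixed profile).} Write $g\eqdef g_\vals$. I would establish the diminishing-returns inequality $g(S\cup\{j\})-g(S)\ge g(T\cup\{j\})-g(T)$ for all $S\subseteq T\subseteq[n]$ and $j\notin T$, which yields $g(S)+g(T)\ge g(S\cap T)+g(S\cup T)$ by the standard telescoping argument. The two ingredients are: (i) an explicit marginal-gain formula $g(S\cup\{j\})-g(S)=\max\bk{0,\, \vali[j]-\theta_\l(S)}$, where $\theta_\l(S)$ is the $\l$-th largest element of $\{\vali:i\in S\}$ when $|S|\ge\l$ and $\theta_\l(S)\eqdef 0$ otherwise --- adding $j$ either places it inside the new top-$\l$ (displacing the previous $\l$-th element, or merely filling an empty slot when $|S|<\l$) or leaves the top-$\l$ unchanged; and (ii) monotonicity of the threshold, $S\subseteq T\Rightarrow\theta_\l(S)\le\theta_\l(T)$, since the $\l$-th largest value can only grow when more elements are added. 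Combining (i) and (ii) gives $\max(0,\vali[j]-\theta_\l(S))\ge\max(0,\vali[j]-\theta_\l(T))$, which is exactly the required inequality.

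\emph{Main obstacle.} There is no real obstacle beyond bookkeeping: the only point needing care is the case analysis behind the marginal-gain formula when $|S|$ or $|T|$ is near $\l$ --- e.g.\ $|S|<\l\le|T|$, where $\theta_\l(S)=0$ and one must note $\vali[j]-\theta_\l(T)\le\vali[j]=\vali[j]-\theta_\l(S)$ because $\theta_\l(T)\ge 0$. No probabilistic estimates and none of the Poisson or Chernoff machinery are involved in this claim.
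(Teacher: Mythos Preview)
Your proof is correct and complete. The paper does not actually supply a proof of this claim: it states the claim and, earlier in the preliminaries, refers to it as a known fact with a citation to \cite{chen2016combinatorial} (``The same property holds for the $\l$-unit and position auctions with an almost identical proof''). So there is no paper proof to compare against beyond that pointer.

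Your two reductions (position $\to$ $\l$-unit via the non-negative weights $\weighti[\l]-\weighti[\l+1]$, then expectation $\to$ pointwise) are exactly the right decomposition, and the diminishing-returns argument via the marginal-gain formula $g(S\cup\{j\})-g(S)=\max\{0,\vali[j]-\theta_\l(S)\}$ together with the monotonicity of $\theta_\l$ is the standard clean way to finish. The boundary case you flag ($|S|<\l\le|T|$) is handled correctly by your convention $\theta_\l(S)=0$. Nothing is missing.
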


\paragraph{Analysis of Algorithm \ref{alg:int}.} Clearly, $\varsi$ is a feasible solution to the integral BSP. We will prove below that \cref{alg:int} has almost the same approximation guarantee as \cref{alg:main}. Let us denote the optimal social welfare for integral BSP as $\opt$. Then the best solution $\optx$ to the fractional BSP~ \eqref{eq:pos_original_program} may have only higher welfare $\SW(\optx,\weights)\ge\opt$. Furthermore, since \eqref{eq:pos_original_program} is a multi-linear extension of the integral BSP, we have 
\[
  \Exlong[\varss\sim\Ber(\varsf)]{\SW(\varss,\weights)}=\SW(\varsf,\weights)\ge\bk{1 - \TwiceMaxFactorPlusOne \, k^{-1/4}} \SW(\optx,\weights)
  \ge\bk{1 - \TwiceMaxFactorPlusOne \, k^{-1/4}} \opt.
\]
Our solution $\varsi$ suffers an additional loss when the sample vector $\varss$ has more than $k$ elements $|\varss|_1>k$. On the other hand, for each given $\varss$ with $|\varss|_1>k$ we have $\Ex[\varsi\sim\binom{\varss}{k}]{\SW(\varsi,\weights)}\ge\frac{k}{|\varss|_1}\SW(\varss,\weights)$, due to submodularity of $\SW(\varss,\weights)$ (here we use a standard fact about monotone non-negative submodular function $f$: a uniformly sampled subset $T\subset S$ of given size $|T|=k$ has $\Ex[T]{f(T)}\ge\frac{k}{|S|}f(S)$). Thus 
\begin{align*}
  \Exlong[\varss\sim\Ber(\varsf)]{\SW(\varss,\weights)-\SW(\varsi,\weights)} \le \Exlong[\varss\sim\Ber(\varsf)]{\ind{|\varss|_1>k}\cdot \frac{|\varss|_1-k}{|\varss|_1} \cdot \SW(\varss,\weights)}.
\end{align*}
Furthermore, as $\opt=\max_{S\subseteq[n]:|S|=k}\SW(S,\weights)$ we have $\opt \ge \Ex[\varsi\sim\binom{\varss}{k}]{\SW(\varsi,\weights)} \ge \frac{k}{|\varss|_1}\SW(\varss,\weights)$ for each $\varss$ with $|\varss|_1>k$.
Therefore,
\begin{align*}
  \Exlong[\substack{\varss\sim\Ber(\varsf) \\ \varsi \sim \binom{\varss}{k}}]{\SW(\varss,\weights)-\SW(\varsi,\weights)}
  & \le \Exlong[\varss\sim\Ber(\varsf)]{\ind{|\varss|_1>k}\cdot \frac{|\varss|_1-k}{k} \cdot \opt}\\
  &=\frac{\opt}{k}\cdot\sum_{i=1}^{n-k} i\cdot\Prlong[\varss\sim\Ber(\varsf)]{|\varss|_1=k+i}
  =\frac{\opt}{k}\cdot\sum_{i=1}^{n-k}\Prlong[\varss\sim\Ber(\varsf)]{|\varss|_1\ge k+i}.
\end{align*}

\begin{restatable}{claim}{simplechernoff}
  \label{cl:simple_chernoff}
  Let $\varsf\in[0,1]^n$ with $|\varsf|_1=k$. Then
$\sum_{i \ge 1}\Prx[\varss\sim\Ber(\varsf)]{|\varss|\geq k+i}=O \bigbk{\sqrt{k}}.$
\end{restatable}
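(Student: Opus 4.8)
The plan is to recognize the quantity $\sum_{i\ge 1}\Prx[\varss\sim\Ber(\varsf)]{|\varss|\geq k+i}$ as the expected positive deviation of a Poisson--binomial variable from its mean, and then to control that deviation by the standard deviation. Write $W\eqdef|\varss|_1=\sum_{i=1}^{n}\varss_i$, where the $\varss_i\sim\Ber(\vari)$ are independent. Then $\Ex{W}=|\varsf|_1=k$ and $\Var{W}=\sum_{i=1}^{n}\vari(1-\vari)\le\sum_{i=1}^{n}\vari=k$. Since $W$ is integer-valued and $k\in\N$, the shifted variable $W-k$ is integer-valued, so the telescoping identity $\sum_{i\ge 1}\Prx{W-k\ge i}=\Ex{(W-k)_+}$ holds (with $(\cdot)_+\eqdef\max(\cdot,0)$; note the sum is finite as $W\le n$). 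Hence $\sum_{i\ge 1}\Prx{|\varss|\ge k+i}=\Ex{(W-k)_+}$.

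Next I would exploit that $W$ is centered at exactly $k$. Since $\Ex{W-k}=0$, we have $\Ex{(W-k)_+}=\Ex{(k-W)_+}$, and because $(W-k)_+ + (k-W)_+ = |W-k|$, adding these gives $2\,\Ex{(W-k)_+}=\Ex{|W-k|}$. Jensen's inequality (equivalently Cauchy--Schwarz) then yields $\Ex{|W-k|}\le\sqrt{\Ex{(W-k)^2}}=\sqrt{\Var{W}}\le\sqrt{k}$. Combining the two displays, $\sum_{i\ge 1}\Prx[\varss\sim\Ber(\varsf)]{|\varss|\ge k+i}=\Ex{(W-k)_+}\le\tfrac12\sqrt{k}=O\bigbk{\sqrt{k}}$, which is the claim.

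There is essentially no real obstacle here; the only points needing a moment's care are the exact telescoping identity, which relies on $W-k$ being integer-valued, and the symmetrization step, which relies on $\Ex{W}=k$ exactly --- both guaranteed by the hypotheses $k\in\N$ and $|\varsf|_1=k$. If instead one wanted a self-contained Chernoff-style argument matching the claim's name, one could bound each tail by the multiplicative Chernoff inequality, $\Prx{W\ge k+i}\le\exp\bigbk{-\tfrac{i^2}{2k+i}}$, and sum, splitting the range into $i\lesssim\sqrt{k}$ (each term $\Theta(1)$, total $O(\sqrt k)$) and $i\gtrsim\sqrt{k}$ (geometric-type tail, total $O(\sqrt k)$). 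That route is messier, however, and invoking \cref{cl:math_chernoff_objective} as a black box only delivers the weaker $O(k)$ bound, so the second-moment argument above is the one I would use.
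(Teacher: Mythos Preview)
Your proof is correct and takes a genuinely different, cleaner route than the paper's. The paper proves the claim exactly via the Chernoff route you sketch at the end: it applies the multiplicative Chernoff bound $\Prx{|\varss|_1\ge k+i}\le e^{-i^2/(i+2k)}$, majorizes the sum by the integral $\int_0^\infty e^{-x^2/(x+2k)}\,\d x$, and then splits this integral at $x=\sqrt{k}$ to get $O(\sqrt{k})$ from each piece. Your second-moment argument avoids all of that: recognizing the sum as $\Ex{(W-k)_+}=\tfrac12\Ex{|W-k|}\le\tfrac12\sqrt{\Var{W}}\le\tfrac12\sqrt{k}$ is both shorter and yields the explicit constant $1/2$. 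The only small remark is that the telescoping \emph{equality} $\sum_{i\ge 1}\Prx{(W-k)_+\ge i}=\Ex{(W-k)_+}$ uses $k\in\N$, which is implicit in the paper's setting but not stated in the claim; since the \emph{inequality} $\sum_{i\ge 1}\Prx{X\ge i}\le\Ex{X}$ holds for any nonnegative $X$, your bound goes through regardless.
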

\begin{proofof}{\cref{cl:simple_chernoff}}
  By Chernoff bound, we have
  \begin{equation*}
    \sum_{i=1}^{n-k}\Prlong[\varss\sim\Ber(\varsf)]{|\varss|\geq k+i}\leq \sum_{i=1}^{n-k}e^{-\frac{i^2}{i+2k}}\leq \sum_{i=1}^{\infty}e^{-\frac{i^2}{i+2k}}\le 
    \int_{0}^{\infty}e^{-\frac{x^2}{x+2k}}\,\d x.
  \end{equation*}

We further estimate this integral as 
\begin{equation*}
\int_{0}^{\infty}e^{-\frac{x^2}{x+2k}}\,\d x\le
\int_{0}^{\sqrt{k}}e^{-\frac{x^2}{x+2k}}\,\d x+
\int_{\sqrt{k}}^{\infty}e^{-\frac{x}{1+2\sqrt{k}}}\,\d x \le 1\cdot \sqrt{k} -(1+2\sqrt{k})\left .e^{-\frac{x}{1+2\sqrt{k}}} \right |_{\sqrt{k}}^{+\infty}=O(\sqrt{k}).
\end{equation*}
This concludes the proof of \cref{cl:simple_chernoff}.
\end{proofof}
\cref{cl:simple_chernoff} allows us to conclude that
$
  \Ex[\varss\sim\Ber(\varsf)]{\SW(\varss,\weights)-\SW(\varsi,\weights)} = O\bigbk{\frac{1}{\sqrt{k}}}\cdot \opt
$, i.e.,
\begin{equation*}
  \Exlong[\varss\sim\Ber(\varsf)]{\SW(\varsi,\weights)}=\SW(\varsf,\weights) - O\left(\frac{1}{\sqrt{k}}\right)\cdot \opt = \bk{1 - \TwiceMaxFactorPlusOne \, k^{-1/4} - O(k^{-1/2})} \opt. \qedhere
\end{equation*}
\end{proofof}

\section{Better Algorithm for Single-Item Auction}
\label{subsec:single_item}
In \cref{sec:position_auction}, we studied the Bidder Selection Problem (BSP) for position auctions and obtained a $\bigbk{1 - O(k^{-1/4})}$-approximate algorithm. In this section, we study the special case of single-item auction (i.e., $\l$-unit auction with $\l = 1$) as it was extensively studied in previous work, and give a better approximation ratio $\bigbk{1 - O(\sqrt{\ln k / k})}$.

Similar to \cref{subsec:position_algo}, we fix a small set $\Sfix$ of $\eps \cdot k$ bidders, which affects the final approximation by at most $(1-\eps)$ factor due to the submodularity of BSP as a set function of the invited bidders. Formally, recall that for the single-item auction
\[
\SW(\vars) = \int_{0}^{+\infty} \objb(\bprobs(\vars, \tau), 1) \,\d\tau = \int_{0}^{+\infty} \Prlong[\vals \sim \vars \cdot \dists]{\exists \vali \ge \tau} \,\d\tau.
\]
We find the set of $\eps \cdot k$ bidders $\Sfix$ such that $\Prx[\vals \sim \dists]{\exists i\in\Sfix : \vali\ge\eta} \ge 1-\eps$ for the largest possible threshold $\eta$. This allows us to take care of thresholds $\tau$ in a low range $\tau\in[0,\eta]$ by including $\Sfix$ in the solution (i.e., make $\vari=1$ for all $i\in\Sfix$). Naturally, we want to pick bidders with higher probabilities $\Prx{\vali>\eta}$ into $\Sfix$, which means that for the high range thresholds $\tau>\eta$ we get the small probability property for each bidder $i\notin \Sfix$. This allows us to reduce BSP to the case of small probabilities tail events (\cref{sec:appendix_poisson}) for $\tau>\eta$ and bidders $i\in[n]\setminus\Sfix$, which can be effectively solved by the Poisson approximation. Formally, we can get the following guarantees for $\Sfix$.

\begin{claim}[Small Bidder Set]
  \label{cl:core_tail_single_item}
  Let $\eps \ge \sqrt{\frac{\ln k}{k}}$ be a multiple of $1/k$. We can find in polynomial time a threshold $\eta \ge 0$ and a set $\Sfix \subset [n]$ of size $|\Sfix|=\eps\cdot k$, such that
  \[
    \textup{(a)} \; \Prlong[\vals \sim \dists]{\exists i \in \Sfix : \vali \ge \eta} \ge 1 - \frac{1}{k};
    \qquad \textup{(b)} \;
    \forall i \notin \Sfix, \; \Prlong[\vali \sim \disti]{\vali > \eta} < \eps.
  \]
\end{claim}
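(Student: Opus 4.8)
The plan is to follow the proof of \cref{cl:core_tail_position} almost verbatim, changing only the final quantitative estimate so as to exploit the stronger hypothesis $\eps \ge \sqrt{\ln k / k}$, which is calibrated precisely so that the failure probability in part (a) is forced below $1/k$.

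First I would enumerate all candidate thresholds $\tau \in \{0\} \cup \bigcup_{i \in [n]} \supp(\disti)$ and, for each, form the heavy set $H(\tau) \eqdef \{\, i : \Prx[\vali \sim \disti]{\vali \ge \tau} \ge \eps \,\}$. Since $\Prx{\vali \ge 0} = 1 \ge \eps$, we have $|H(0)| = n \ge k \ge \eps k$, while $|H(\tau)| = 0$ once $\tau$ exceeds every support point; as $|H(\cdot)|$ is non-increasing and piecewise constant, there are two consecutive candidate values $\eta < \eta_{+}$ with $|H(\eta)| \ge \eps k$ and $|H(\eta_{+})| < \eps k$, and both are found in polynomial time. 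Because no support point lies strictly between $\eta$ and $\eta_{+}$, we have $\Prx{\vali > \eta} = \Prx{\vali \ge \eta_{+}}$ for every bidder, and hence $|\{\, i : \Prx{\vali > \eta} \ge \eps \,\}| = |H(\eta_{+})| < \eps k$. I would then build $\Sfix$ by first inserting every bidder with $\Prx{\vali > \eta} \ge \eps$ (fewer than $\eps k$ of them, so they all fit) and topping $\Sfix$ up to exactly $\eps k$ elements using members of $H(\eta)$, which is possible since $|H(\eta)| \ge \eps k$. By construction every $i \notin \Sfix$ has $\Prx{\vali > \eta} < \eps$, which is part (b), and every $i \in \Sfix$ lies in $H(\eta)$, i.e.\ $\Prx{\vali \ge \eta} \ge \eps$.

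Finally, for part (a) I would use independence of the values across bidders:
\[
  \Prx[\vals \sim \dists]{\forall i \in \Sfix:\, \vali < \eta}
  = \prod_{i \in \Sfix} \Prx{\vali < \eta}
  \le (1-\eps)^{|\Sfix|} = (1-\eps)^{\eps k}
  \le e^{-\eps^2 k} \le e^{-\ln k} = \frac{1}{k},
\]
where the last two steps use $1-\eps \le e^{-\eps}$ and the hypothesis $\eps^2 k \ge \ln k$; this is exactly part (a). Finiteness of the supports makes the whole procedure run in polynomial time, and $|\Sfix| = \eps k \le k < n$. I do not anticipate a genuine obstacle: the argument is structurally identical to \cref{cl:core_tail_position}, the only substantive change being that condition (a) of \cref{cl:core_tail_position} --- a lower bound on the expected number of selected bidders whose value exceeds $\eta$ --- is strengthened here to the product bound above, which directly caps the probability that \emph{none} of them exceeds $\eta$. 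The one point that needs care is the bookkeeping between $>$ and $\ge$ at the boundary threshold, handled --- as in \cref{cl:core_tail_position} --- by taking $\eta$ and $\eta_{+}$ to be consecutive support values, so that $\Prx{\vali > \eta}$ and $\Prx{\vali \ge \eta_{+}}$ coincide.
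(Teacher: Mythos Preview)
Your proposal is correct and follows essentially the same approach as the paper's proof: find consecutive support thresholds $\eta<\eta_+$ where the count of ``heavy'' bidders crosses $\eps k$, build $\Sfix$ from the heavy bidders, verify (b) by construction, and derive (a) via the product bound $(1-\eps)^{\eps k}\le e^{-\eps^2 k}\le 1/k$. The only differences are cosmetic---you explicitly include $0$ among the candidate thresholds and spell out why $|H(0)|\ge\eps k$---and your bookkeeping on $>$ versus $\ge$ matches the paper's exactly.
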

\begin{proof}
  Recall that all distributions $\{\disti\}_{i\in[n]}$ have finite support. Thus we can search through all thresholds $\tau$ in polynomial time. There must be two consecutive threshold values $\eta$ and $\eta_{+}\!>\eta$ such that the number of bidders $|\{i: \Prx{\vali\ge\eta}\ge\eps\}|\ge \eps\cdot k$ with large tail probabilities $\Prx{\vali\ge\eta}\ge \eps$ is at least $\eps\cdot k$, but a similar number of bidders $|\{i: \Prx{\vali>\eta}=\Prx{\vali\ge\eta_{+}}\ge\eps\}|< \eps\cdot k$
  for the next threshold value $\eta_{+}$ is strictly less than $\eps\cdot k$. 

  Let us place each bidder $i$ with $\Prx{\vali>\eta}\ge\eps$ into $\Sfix$ and fill the remaining positions in $\Sfix$ up to size $\eps\cdot k$ (so that $|\Sfix|=\eps\cdot k$) with bidders from $\{i: \Prx{\vali\ge\eta}\ge\eps > \Prx{\vali\ge\eta_{+}}\}$. Then, every bidder $i\notin\Sfix$ has $\Prx{\vali>\eta}=\Prx{\vali\ge\eta_{+}}<\eps$ as required by condition (b). On the other hand, $|\Sfix|=\eps\cdot k$ and $\Prx{\vali\ge\eta}\ge\eps$ for every $i\in\Sfix$, i.e., condition (a) is satisfied since 
  \[
    \Prlong[\vals\sim\dists]{\exists i\in\Sfix: \vali\ge\eta} \ge 1-(1-\eps)^{\eps\cdot k}\ge 
    1-e^{-\eps^2\cdot k}\ge 1-\frac{1}{k},
  \]
  where to get the second inequality, we used the fact that $(1-\frac{1}{x})^{x}<e^{-1}$ for any $x\ge 1$.
\end{proof}
After selecting such set $\Sfix$ and threshold $\eta$, we are ready to give the complete description of \cref{alg:single_item}.

\begin{figure}[ht]
  \centering
  \begin{tcolorbox}[left=2mm, right=2mm, top=-2mm, bottom=2mm]
    \captionof{algocf}{Fractional BSP for Single-Item Auction.}{\label{alg:single_item}}
    Fix $\eps = \sqrt{\frac{\ln k}{k}}$ rounded up to a multiple of $1/k$, then do the following steps:
    \begin{enumerate}
    \item Find $(\eta, \Sfix)$ as in  \cref{cl:core_tail_single_item}. Set $\vari=1$ for $\forall i\in\Sfix$.
    \item For the remaining bidders $M\eqdef[n]\setminus \Sfix$ let the Poisson approximation $\SWm(\varsM)$ be
      \[
        \qquad \SWm(\varsM) \defeq \bk{1-\frac{1}{k}}\eta + \int_{\eta}^{+\infty} \objpm\!\bk{\bprobs(\varsM, \tau)} \,\d\tau, \qquad \text{where}
        \numberthis \label{eq:def_swm_single}
      \]
      \[
        \objpm\!\bk{\bprobsM} \defeq \pbase + \midbk{1 - \pbase} \cdot \objp\!\bk{\bprobsM, \, \ell=1}, \quad\text{and}\quad
        \pbase \defeq \Prlong[\vals \sim \dists]{\exists i \in \Sfix : \vali \ge \tau}.
        \numberthis \label{eq:def_objpm_single}
      \]
    \item Return $\varsWStar = (\vec{1}_{\Sfix}, \, \varsMWStar)$, where $\varsMWStar$ is the solution to the concave program in $\varsM$:
      \[
        \begin{array}{ll}
          \text{Maximize} & \SWm(\varsM) \\
          \text{Subject To} & \sum_{i \in M} \vari \le k-\eps\cdot k, \quad \vari \in [0,1] \quad \forall i \in M.
        \end{array}
        \numberthis \label{eq:singleitem_program}
      \]
    \end{enumerate}
  \end{tcolorbox}
\end{figure}
In the algorithm, we ignore thresholds $\tau\in[0,\eta]$, as by taking $\Sfix$ we have already achieved high success probability of at least $1-\frac{1}{k}$ by \cref{cl:core_tail_single_item}. For the high range thresholds $\tau>\eta$, we first observe that as the result of fixing set $\Sfix$, the probability that there is a bidder with value greater than the threshold $\tau$ becomes
\[
  \!\!
  \objb(\bprobs,1)=
  \Prx{\exists i\in[n]:\vali\ge\tau}=
  \pbase+\bk{1 - \pbase} \cdot \Prx{\exists i\in M:\vali\ge\tau},
\]
where $\pbase$ is a constant that we can easily compute.
Hence, we respectively adjust the Poisson approximation term $\objpm\midbk{\bprobsM}$ in the algorithm according to \eqref{eq:def_objpm_single} (we slightly abuse notations by writing $\objp\midbk{\bprobsM}$ instead of $\objp\midbk{\bprobs}$: for the coordinates $i\notin M$ we let $\bprobi=0$).

\begin{restatable}{theorem}{thmSingleItem}
  \label{thm:single_item}
  \cref{alg:single_item} for single-item auction works in polynomial time and is a $\midbk{1 - 2\eps}$-approximation, i.e., a $\bigbk{1 - O\big(\sqrt{\ln k / k}\big)}$-approximation to the fractional BSP.
\end{restatable}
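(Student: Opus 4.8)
The plan is to mirror the proof of \cref{thm:position}, with the single global Poisson substitution replacing the Chernoff/adjusted-Poisson two-regime split. First I would check that \cref{alg:single_item} runs in polynomial time. Step~(1) is polynomial by \cref{cl:core_tail_single_item}. For each of the finitely many thresholds $\tau$ in the common support, the constant $\pbase=\Prx[\vals\sim\dists]{\exists i\in\Sfix:\vali\ge\tau}=1-\prod_{i\in\Sfix}\disti(\tau)$ is computable in polynomial time and is independent of $\varsM$, so $\objpm(\bprobs(\varsM,\tau))=\pbase+\bk{1-\pbase}\,\objp(\bprobs(\varsM,\tau),1)$ is a non-negative affine function of the concave map $\varsM\mapsto\objp(\bprobs(\varsM,\tau),1)$ --- concavity of $\objp$ in its argument is \cref{cl:objp_concave}, precomposed with the linear map $\varsM\mapsto\bprobs(\varsM,\tau)$ --- hence concave in $\varsM$. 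Since $\objpm$ is piecewise constant in $\tau$, the objective $\SWm(\varsM)$ of \eqref{eq:singleitem_program} is a constant plus a finite non-negative combination of concave functions, so it is concave with first and second derivatives computable in polynomial time, and \eqref{eq:singleitem_program} is solved in polynomial time by standard convex optimization.

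The heart of the analysis is a two-sided comparison: for every $\varsM\in[0,1]^{M}$, writing $\vars\defeq(\onevec_{\Sfix},\varsM)$,
\[
  \bk{1-\eps}\,\SW(\vars)\;\le\;\SWm(\varsM)\;\le\;\SW(\vars).
\]
To prove it I would split $\SW(\vars)=\int_{0}^{+\infty}\objb(\bprobs(\vars,\tau),1)\,\d\tau$ at $\tau=\eta$. On the low range $\tau\le\eta$, monotonicity of $\Prx{\exists i:\vali\ge\tau}$ in $\tau$ and \cref{cl:core_tail_single_item} (a) give $\objb(\bprobs(\vars,\tau),1)\ge\Prx[\vals\sim\dists]{\exists i\in\Sfix:\vali\ge\eta}\ge 1-\tfrac1k$, so $\int_{0}^{\eta}\objb(\bprobs(\vars,\tau),1)\,\d\tau\in\bigl[(1-\tfrac1k)\eta,\;\eta\bigr]$, while $\SWm$ contributes exactly $(1-\tfrac1k)\eta$ there. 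On the high range $\tau>\eta$, \cref{cl:core_tail_single_item} (b) gives $\bprobi(\vari,\tau)\le\Prx{\vali>\eta}<\eps$ for every $i\in M$, so \cref{lem:objp_ratio} (b) with $\delta=\eps$ gives $\bk{1-\eps}\,\objb(\bprobs(\varsM,\tau),1)\le\objp(\bprobs(\varsM,\tau),1)\le\objb(\bprobs(\varsM,\tau),1)$; since the true integrand there equals $\objba(\varsM,\tau)\defeq\pbase+(1-\pbase)\,\objb(\bprobs(\varsM,\tau),1)$ and $\objpm$ is the identical affine shift with $\objp$ in place of $\objb$, using $\pbase\ge 0$ transfers this to $\bk{1-\eps}\,\objba(\varsM,\tau)\le\objpm(\bprobs(\varsM,\tau))\le\objba(\varsM,\tau)$. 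Combining both ranges and using $\tfrac1k\le\eps$ (valid since $\eps\ge\sqrt{\ln k/k}$) to absorb the truncation --- explicitly $(1-\tfrac1k)\eta\ge(1-\eps)\eta\ge(1-\eps)\int_{0}^{\eta}\objb(\bprobs(\vars,\tau),1)\,\d\tau$ --- yields both inequalities.

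The rest is the capacity-scaling argument of \cref{thm:position}. Let $\optx$ be an optimal solution of the fractional single-item BSP (program~\eqref{eq:lunit_program} with $\l=1$), let $\optpx$ be its restriction to the coordinates in $M$, and set $\optplusfix\defeq(\onevec_{\Sfix},\optpx)\succeq\optx$. Monotonicity of $\SW$ (the multilinear extension of a monotone submodular set function, as recalled for position auctions) and the comparison lemma give $\SWm(\optpx)\ge\bk{1-\eps}\SW(\optplusfix)\ge\bk{1-\eps}\SW(\optx)$. The scaled vector $(1-\eps)\optpx$ has coordinates summing to at most $(1-\eps)k=k-\eps k$, hence is feasible for \eqref{eq:singleitem_program}, so by concavity of $\SWm$ and $\SWm(\mathbf{0})\ge 0$,
\[
  \SWm(\varsMWStar)\;\ge\;\SWm\bk{(1-\eps)\optpx}\;\ge\;\bk{1-\eps}\SWm(\optpx)\;\ge\;\bk{1-\eps}^2\SW(\optx).
\]
One more application of the comparison lemma in the direction $\SWm\le\SW$, with $\vars=\varsWStar=(\onevec_{\Sfix},\varsMWStar)$, then gives $\SW(\varsWStar)\ge\SWm(\varsMWStar)\ge\bk{1-\eps}^2\SW(\optx)\ge(1-2\eps)\SW(\optx)$, the claimed $(1-2\eps)=\bigbk{1-O\big(\sqrt{\ln k/k}\big)}$-approximation.

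The one genuinely delicate step is the comparison lemma: three heterogeneous sources of loss --- discarding thresholds $\tau\le\eta$ entirely, the $\tfrac1k$-truncation on $[0,\eta]$, and the Poisson error on $(\eta,+\infty)$ --- must be folded into a single clean $(1-\eps)$ factor, and this is precisely where the calibration $\eps\ge\sqrt{\ln k/k}$ and the $\ge 1-\tfrac1k$ coverage guarantee of \cref{cl:core_tail_single_item} are used. Everything else is routine bookkeeping that runs parallel to the position-auction case.
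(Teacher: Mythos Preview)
Your proof is correct and follows essentially the same approach as the paper's: the two-sided comparison $(1-\eps)\SW(\vars)\le\SWm(\varsM)\le\SW(\vars)$ is exactly the paper's Lemma~D.2 restated, and the capacity-scaling argument with $\optplusfix$ and the rescaled feasible point $(1-\eps)\optpx$ is identical. The only cosmetic difference is that you use $1/k\le\eps$ to absorb the low-range truncation directly, whereas the paper bounds the difference $\SW-\SWm$ by $\max\bigl(\tfrac{1}{k-1},\eps\bigr)\cdot\SW$; both are equivalent.
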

\begin{proof} We first verify that \cref{alg:single_item} is polynomial. Note that the step (1) works in polynomial time by 
  \cref{cl:core_tail_single_item}. For each $\tau$ in the support of $\disti$ we calculate in polynomial time the constants $\pbase\in[0,1]$. Both $\objpm\midbk{\bprobsM}$ for each $\varsM$ and $\tau$ in the support and $\SWm(\varsM)$ for each $\varsM$ can be computed in polynomial time. 
  Moreover, all first and second order partial derivatives of $\SWm(\varsM)$ can be computed in the same way as the integral of respective derivatives of $\objpm\midbk{\varsM}$. Furthermore, it is easy to see that $\SWm(\varsM)$ is a concave function in $\varsM$, since it is a positive linear combination of constant terms (such as $(1-1/k)\eta$ and $\pbase$) and concave functions $\objp\midbk{\varsM}$ according to \cref{cl:objp_concave}. Hence, we can find the optimal solution $\conpx$ in polynomial time using standard concave (first or second order) maximization methods.

  To prove an approximation guarantee of $1-2\eps$ for \cref{alg:single_item}, we first derive the following approximations of $\SW(\varsM,\onevec_{\Sfix})$ by $\SWm(\varsM)$ similar to \eqref{eq:swmodi_p_ratio} (but in a special case of $\l=1$).

  \begin{lemma}
    \label{lem:sw_poisson_single_item}
    $
    \forall\varsM\in[0,1]^{|M|}$, $
    0\le \SW(\varsM,\onevec_{\Sfix}) - \SWm(\varsM) 
    \le \eps \cdot \SW(\varsM,\onevec_{\Sfix}).
    $
  \end{lemma}
  \begin{proof}
    Recall that by \eqref{eq:sw_to_objb} the Social Welfare $\SW(\vars)$ for $\vars=(\varsM,\onevec_{\Sfix})$ and $\ell=1$ is
    \[
      \SW(\vars)=\int_{0}^{\eta}
      \objb(\bprobs(\vars,\tau)) \, \d\tau
      +
      \int_{\eta}^{+\infty} \objb(\bprobs(\vars,\tau)) \, \d\tau.
    \]
    For the low range $\tau\in[0,\eta]$ we have $\objb(\bprobs(\vars,\tau)) \in [1 - \frac{1}{k},\, 1]$ due to the choice of $\Sfix$ in \cref{cl:core_tail_single_item}. It is well approximated by the respective term $\bk{1-\frac{1}{k}}\eta$ in \eqref{eq:def_swm_single}.

    For the high range $\tau>\eta$, we apply \cref{lem:objp_ratio} (b) with $\delta = \eps$ and get the following bound:
    \begin{align*}
      \objb(\bprobs) - \objpm(\bprobs)
      ={}& \midbk{1 - \pbase} \bk{\objb\midbk{\bprobsM} - \objp\midbk{\bprobsM}} \le \midbk{1 - \pbase}  \eps \cdot \objb\midbk{\bprobsM}
      \le \eps \cdot \objb(\bprobs).
    \end{align*}
    On the other hand, $\objb(\bprobs) - \objpm(\bprobs)\ge 0$, as $\objb(\bprobsM) \ge \objp(\bprobsM)$ by \cref{lem:objp_ratio} (b). Thus
    \begin{align*}
      0\le \SW(\vars) - \SWm(\varsM) &\le 
      \frac{\eta}{k}+ \eps \cdot \int_{\eta}^{+\infty} \objb(\bprobs(\vars, \tau)) \,\d\tau
      \le \max\bk{\frac{1}{k-1},\eps} \cdot \SW(\vars)= \eps\cdot\SW(\vars),
    \end{align*}
    where the last equality holds since $\eps\ge\sqrt{\frac{\ln k}{k}}$.
  \end{proof}
  Now we are ready to complete the proof of \cref{thm:single_item}. Let $\optx$ be the optimal solution to fractional BSP. We consider $\optplusfix\in\R^n_{\ge 0}$ defined as $\optplusfix\eqdef(\optpx, \onevec_{\Sfix})$, so that $\optplusfix \succeq \optx$. Then, by \cref{lem:sw_poisson_single_item} for $\vars=\optplusfix$ we have 
  \[
    \SWm(\optpx)\ge(1-\eps)\cdot\SW(\optplusfix)\ge
    (1-\eps)\cdot\SW(\optx).
  \]
  On the other hand, by \cref{lem:sw_poisson_single_item} for $\vars = \conx$ we have
  \begin{align*}
    \SW(\conx)\ge\SWm(\conpx)&\ge\SWm\bk{\frac{k-\eps\cdot k}{k}\cdot\optpx}
    \ge (1-\eps)\cdot\SWm(\optpx)\ge(1-2\eps)\cdot\SW(\optx),
  \end{align*}
  where the second inequality holds, as $\conpx$ is the optimal solution to \eqref{eq:singleitem_program} and $\frac{k-\eps\cdot k}{k}\cdot\varsMStar$ is a feasible solution; the third inequality holds, as $\SWm(\vars)$ is a concave function in $\vars$ by \cref{cl:objp_concave}; the last inequality holds, as 
  $\SWm(\varsMStar) \ge (1-\eps)\cdot\SW(\optx)$ and $(1-\eps)^2\ge 1-2\eps$. This concludes the proof.
\end{proof}

\section{Additional Aspects of the Numerical Experiments}
\label{sec:additional_experiments}

\subsection{Comparison between Our Theoretical and Modified Algorithms}
\label{subsec:theoretical_vs_modified}
We compare the performance of the theoretical version and the modified version of our algorithm. Due to certain limitations on the convex objectives in Gurobi, we implemented the theoretical version of our algorithm in MATLAB with the help of another convex optimization library, Mosek \cite{mosek}, and ran it on the same set of test inputs as in \cref{sec:experiments}. As the implementations are in different programming languages, we do not compare their running time and only compare their approximation efficiency.

\begin{table*}[b]
	\caption{Experimental results of Local Search, Greedy, and both the modified version and the theoretical version of our algorithm. For each algorithm, we show the average relative quality of the produced solution to that of the best-performing algorithm that terminated in $1$ week. Error bars denote the standard deviation.}
	\label{table:theoretical_vs_modified}
	\centering
	\begin{tabular}{|c|c|c|c|c|c|}
		\hline
		\multicolumn{2}{|c||}{Setting} & \multicolumn{2}{c|}{Benchmarks} & \multicolumn{2}{c|}{Our Algorithm}  \\
		\hline
		$n$ & $k$ & Local Search & Greedy & Modified & Theoretical \\
		\hhline{|======|}
		\multirow{3}{*}{50} & 5 & 100.00\% ± 0.00\% & 98.93\% ± 0.50\% & 99.99\% ± 0.03\% & 95.28\% ± 1.10\% \\
		\cline{2-6}
		& 10 & 100.00\% ± 0.00\% & 98.71\% ± 0.38\% & 99.99\% ± 0.04\% & 97.43\% ± 0.84\% \\
		\cline{2-6}
		& 20 & 100.00\% ± 0.00\% & 99.17\% ± 0.28\% & 99.99\% ± 0.01\% & 99.53\% ± 0.23\% \\
		\hhline{|======|}
		\multirow{3}{*}{200} & 10 & 100.00\% ± 0.00\% & 98.06\% ± 0.35\% & 99.99\% ± 0.01\% & 95.36\% ± 0.79\% \\
		\cline{2-6}
		& 20 & 100.00\% ± 0.00\% & 97.92\% ± 0.24\% & 100.00\% ± 0.00\% & 98.50\% ± 0.45\% \\
		\cline{2-6}
		& 40 & 100.00\% ± 0.00\% & 97.97\% ± 0.26\% & 99.99\% ± 0.00\% & 99.85\% ± 0.10\% \\
		\hhline{|======|}
		\multirow{3}{*}{1000} & 50 & 100.00\% ± 0.00\% & 97.18\% ± 0.14\% & 99.99\% ± 0.00\% & 99.80\% ± 0.09\% \\
		\cline{2-6}
		& 100 & N/A & 97.15\% ± 0.15\% & 100.00\% ± 0.00\% & 99.96\% ± 0.02\% \\
		\cline{2-6}
		& 200 & N/A & 97.38\% ± 0.11\% & 100.00\% ± 0.00\% & 99.96\% ± 0.02\% \\
		\hline
	\end{tabular}
\end{table*}

As shown in \cref{table:theoretical_vs_modified}, we can see that the theoretical version of our algorithm performs slightly worse than the modified version. This is due to the potentially suboptimal decision of fixing a small bidder set $\Sfix$. This step is helpful when we analyze our algorithm theoretically, but it may not be optimal in practice. Therefore, we choose to use the modified version of our algorithm in \cref{sec:experiments}.

\subsection{Notes on the Discretization}
\label{subsec:discretization}

Recall that in \cref{sec:experiments}, we discretized each generated distribution to a common, finite support $\{0\}\cup\{1 + \frac{i}{50}\mid i=0,1,\dots,49\}$ by moving probability mass on each discretized interval inside $[0, 2]$ to its left point and by redistributing the mass on $(2, +\infty)$ to the discrete points, proportional to their respective probabilities.

We picked this unusual discretization to make instances more challenging for Greedy, as without it Greedy and other heuristics like our algorithm produce solutions with nearly optimal approximation efficiency (over $99\%$ for both algorithms). Indeed, two distributions in a well-structured family of log-normal distributions are likely to have strong dominance relation: a distribution $\disti[1]=\mathrm{Lognormal}(\mu_1,\sigma_1^2)$ is always preferable to $\disti[2]=\mathrm{Lognormal}(\mu_2,\sigma_2^2)$, whenever $\mu_1\ge\mu_2$ and $\sigma_1\ge\sigma_2$. Our choice of discretization was solely based on the comparison between Greedy and Local Search, i.e., after a few empirical trials of different discretizations we simply stopped once the average efficiency of Greedy was less than $99\%$ that of the Local Search.

\end{document}